\documentclass[10pt]{article}


\usepackage{booktabs}
\usepackage{bm}
\usepackage{fullpage}
\usepackage{lipsum}
\usepackage{amsfonts, mathrsfs}
\usepackage{graphicx}
\usepackage{epstopdf}
\usepackage{multirow}
\usepackage{algorithm}
\usepackage{algpseudocode}
\usepackage{color}
\usepackage{amsthm}
\usepackage{amsmath}

\usepackage{amsopn}

\AtBeginDocument{%
  \providecommand\BibTeX{{%
    \normalfont B\kern-0.5em{\scshape i\kern-0.25em b}\kern-0.8em\TeX}}}

\newcommand{\NSU}{{\tt NonAdaptiveWithK($k,c$)}}
\newcommand{\NAdaptiveU}{{\tt SublinearDecrease}$(b)$}

\newcommand{\tbc}[1]{{\color{black}{#1}}}
\newcommand{\gdm}[1]{{\color{black}{#1}}}
\newcommand{\gia}[1]{{\color{black}{#1}}}

\newcommand{\darek}[1]{{\color{black}{#1}}}
\newcommand{\dk}[1]{{\color{black}{#1}}}
\newcommand{\gs}[1]{{\color{black}{#1}}}

\newtheorem{fact}{Fact}[section]
\newcommand{\cE}{{\mathcal E}}
\newcommand{\E}{\mathbf{E}}
\newcommand{\cA}{{\mathcal A}}
\newcommand\given[1][]{\:#1\vert\:}
\newcommand{\hsigma}{{\hat{\sigma}}}
\newcommand{\hA}{{\hat{A}}}
\newcommand{\tc}{{\tt time\_counter}} 

\newcommand{\MLineComment}[1]
{
	/*
	\begin{minipage}[t]{.60\textwidth}
		{\it {#1}} */
	\end{minipage}
	
}

\begin{document}

\title{Time and Energy Efficient Contention Resolution in Asynchronous Shared Channels
}


\author{%
	Gianluca De Marco\footnotemark[4]
	\and
	Dariusz R.~Kowalski\footnotemark[3]
	\and
	Grzegorz Stachowiak\footnotemark[2]
}

\date{}

\maketitle

\footnotetext[4]{
Dipartimento di Informatica,
University of Salerno,
Italy.
Email: \texttt{gidemarco@unisa.it}.
}

\footnotetext[2]{
Institute of Computer Science, University of Wroc{\l}aw, Poland.
Email: \texttt{gst@cs.uni.wroc.pl}.
}

\footnotetext[3]{School of Computer and Cyber Sciences, Augusta University, GA, USA.
Email: \texttt{dkowalski@augusta.edu}}

\maketitle

\newtheorem{theorem}{Theorem}[section]
\newtheorem{lemma}{Lemma}[section]
\newtheorem{corollary}{Corollary}[section]
\newtheorem{claim}{Claim}[section]
\newtheorem{proposition}{Proposition}[section]
\newtheorem{definition}{Definition}[section]
\newtheorem{example}{Example}[section]

\begin{abstract}
A shared channel (also called a multiple access channel), introduced nearly 50 years ago, is among the most popular and widely studied models of communication and distributed computing.
In a nutshell, a number of stations, \gia{independently activated over time}, is able to communicate by transmitting and listening to a shared channel \gia{in discrete time slots},
and a message is successfully delivered to all stations if and only if its source station is the only transmitter at a time. 
Despite a vast amount of work in the last decades, 
many fundamental questions remain open 
\gia{in the realistic situation
where stations do not start synchronously but are 
awaken in arbitrary times (called dynamic or \textit{asynchronous} scenario).}
What is the impact of an \gia{asynchronous start} on channel utilization?
How important is the knowledge/estimate of the number of contenders?
Could non-adaptive protocols be asymptotically as efficient as adaptive ones? 
In this work we present a broad picture of 
results answering the abovementioned questions
for the fundamental problem of Contention resolution, in which each of the contending stations needs to broadcast successfully its message.

We show that adaptive algorithms or algorithms with the knowledge of the
contention size $k$ achieve a linear $O(k)$ message latency 
even if the channel feedback is restricted to simple acknowledgements in case of successful transmissions 
and in the absence of synchronization.
This asymptotically optimal performance cannot be extended to other settings:
we prove that there is no non-adaptive algorithm without the knowledge of contention size $k$ admitting latency $o(k\log k/(\log\log k)^2)$.
This means, in particular, that coding (even random) with acknowledgements 
is not very efficient
on a shared channel without synchronization or an estimate of the contention size.
\gia{We also present a non-adaptive algorithm with no knowledge of contention size that almost matches the lower bound on latency.}

Finally, despite the absence of a collision detection mechanism, we show that our algorithms
are also efficient in terms of energy, understood as the total number of transmissions performed 
by the stations during the execution.
\end{abstract}

\providecommand{\keywords}[1]{\textbf{\textit{Key words---}} #1}
\keywords{shared channel, multiple-access channel, contention resolution,
	distributed algorithms, randomized algorithms,
	lower bound,
	dynamic communication, adaptive and oblivious adversaries}

\section{Introduction}

A shared channel, also called a multiple access channel, is one of the fundamental 
communication models considered in the literature. 
It allows many autonomous computing entities to communicate over a shared medium,
and the main challenge is how to efficiently resolve collisions occurring 
when more than one entity attempts to access the channel at the same time.
Despite a vast amount of work, some fundamental questions 
about channel utilization have remained open for decades. 
What is the impact of \gia{asynchrony among the starting times
of the stations}?
How important is the knowledge/estimate of the number of contenders?
Could non-adaptive protocols or random codes be asymptotically as
efficient as adaptive protocols? This paper attempts to answer these questions
by separating the impact of the abovementioned characteristics into two classes:
those allowing asymptotically optimal channel utilization and those
which incur a substantial overhead.

\medskip

The formal model considered in this paper is the one commonly taken
as the basis for theoretical studies on multiple access channels
(\textit{cf.} the surveys by Gallager~\cite{Gal} and Chlebus~\cite{Chl}).
In what follows we overview it, paying special attention to the
particular settings of this paper.

\paragraph{Stations.} 
A set of $k$ stations are connected to the same multi-point transmission medium. The stations are anonymous, that is, they have no identification label (ID) to uniquely distinguish them.
There is no central control: every station acts autonomously by means of a distributed algorithm.
\gia{At the beginning, all the stations are sleeping. 
\dk{A sleeping station does not send nor receive any message on the channel.}
Each station can be activated \dk{(and thus become active)} at any time with the task of sending a data packet.
The activation times do not depend on algorithm's execution \gs{and} are
totally determined by an adversary that will be defined next.}
\gia{A station is active from the activation time until its termination,
which means going back into a permanent sleeping mode.}
\dk{Unlike the activation time, termination is decided by the algorithm.}

\paragraph{Communication.}
\gia{Time is \gs{divided into} discrete synchronous {\em rounds} 
(also called {\em time steps} or {\em time slots}).}
\gia{In each round, an active station can either send a message
or listen to the channel. 
A message is just the packet itself, although in our adaptive algorithm 
we allow also some stations to send  a one-bit message \gs{as coordinating 
information} \dk{(also called {\em control message} or {\em control bit}, both in theoretical models and technological applications).} 

An active station that is not transmitting in a round,
is implicitly assumed to be listening in that round.
A station transmits successfully its message at a given round if and only
if it is the only transmitter \gs{in} that round: all the other 
active stations are listening (and therefore receive the message). Namely,}
if $m\leq k$ stations transmit at the same round, then the result of the transmission \gia{in this round} depends on the parameter $m$ as follows: 
\begin{itemize}
  \item If $m = 0$, the channel is {\em silent} and 
  no packet is successfully transmitted;
  \item If $m = 1$, the packet owned by the singly transmitting station is {\em successfully transmitted} on the channel and
  therefore \gia{received} by all the other \textit{active} stations;
  \item If $m > 1$, simultaneous transmissions interfere with one another (we say that a \textit{collision} occurs) and
  as a result no \gia{message} is received by the other stations.
\end{itemize}

\paragraph{Feedback.}
In the setting \textit{without collision detection} adopted in this paper, no special signal is perceived in the case of collision, making therefore impossible, \gia{for a station listening to the channel}, 
to distinguish between an occurred 
collision and the case where no station transmits.
(By contrast, in a \textit{collision detection} setting, 
not considered in this paper, the channel elicits a \textit{feedback} 
in case of collision, 
allowing to deduce that two or more stations tried to transmit at the same time.)
The only feedback a station can sense is when it actually transmits successfully, in which case it gets an \textit{acknowledgement}.

\paragraph{Contention resolution problem.}
Each \gia{active} station possesses a packet, which can be transmitted in a single time slot. \gia{Packets are not \dk{assumed to be} distinguishable, so they cannot be used to identify the stations.}
The aim is to let each of the $k$ stations to transmit successfully its packet. 
\gia{The task is considered to be accomplished
\gs{when all packets are successfully transmitted and all stations are switched off with all their functions permanently disabled.}
}

\paragraph{Algorithms.}
We seek \textit{randomized} algorithms that allow every station to transmit successfully its packet,
regardless of the activation times. Note that the choice of randomized solutions is forced 
since in the absence of unique ID's there is no deterministic way of breaking the symmetry among the \textit{identical} stations on the shared channel.

\gia{
For each station, a randomized algorithm specifies two things:
($a$) the probability of transmission for each round of its local clock 
and ($b$) the message to be transmitted \gs{in} that round. 
In \textit{non-adaptive} algorithms the probability of transmission
depends only on the local round number \dk{(we do not assume independence of these probabilities across rounds),} and 
the message is simply the data \dk{packet} 
initially assigned to the station. 
In \textit{adaptive} algorithms, both the probability of transmission and
the message to be sent \gs{in} round $i$ depend on the history of successful transmissions received until round $i$ (messages collected so far as well
as the rounds at which they were received). 

\gia{In the non-adaptive setting, a} station 
\gia{automatically} leaves the system (switches off) once it gets
the acknowledgement that its packet has been successfully transmitted. 
Notice that this is a straight consequence of the definition above. 
Indeed, in a non-adaptive algorithm there is no point in keeping alive
a station after its successful transmission, as it cannot influence 
in any way the behaviour of the other stations.
\gia{On the contrary, in} our adaptive algorithm some stations, in order to coordinate the protocol, reserve the right to
remain active \dk{some time} after \dk{their} successful transmission.

In particular, the adaptive algorithm considered in this paper 
allows stations to send either the packet itself, 
or (alternatively) a one-bit 
message as a \dk{coordination/control} information. 
Similar adaptive settings allow to append a constant number of bits to 
each packet (\textit{e.g.} \cite{ICPADS20}) or to send any packet-sized
message, besides the original packet itself 
(\textit{e.g.} \cite{AMM13,Bend-20}).
}

A \textit{contention resolution} 
algorithm is a distributed algorithm that schedules the transmissions 
\gs{in station's local time for}
each of the $k$ participating stations
guaranteeing that every station eventually transmits 
\dk{successfully}
(\textit{i.e.}, without interfering with other stations) on the channel, \gia{and switches off}.

\paragraph{Static vs dynamic scenarios.}
\gia{Most of} the literature on the contention resolution problem
produced so far either assumed the (simplified) \textit{static} situation in which the $k$ stations are all activated at the very beginning 
(and therefore start simultaneously their transmitting schedules)~\cite{AMM13,Cap,CGR,GW,GFL,GL,KG}
or that the activation times are restricted to statistical 
(\textit{e.g.}, when packet arrivals are determined by a Poisson distribution) 
or adversarial-queueing models 
\cite{Bend-05,CKR-TALG-12,Goldberg, Kumar, MB,RagUp}. 

Inspired by the inherently decentralized nature of the multiple access model, in this paper we focus on \gia{the}
more general and realistic \textit{ dynamic} scenario, in 
\gia{which stations awaken} (\textit{i.e.}, start their local executions
of a distributed algorithm) in \textit{arbitrary times}, \textit{i.e.}, 
the sequence of activation times, 
also called a \textit{wake-up schedule}, is totally determined by an adversary \darek{(see the next paragraph for a formal definition)}.
 This
scenario, sometimes also called asynchronous, reflects the more realistic situation in which 
the stations are geographically far apart or totally 
independent, and consequently each activation time is locally determined and cannot be known 
or even approximately predicted by other stations. Throughout the paper, ``switched on'', ``activated'',
``woken up'' and their derived terms are used interchangeably 
to mean the action, controlled by an adversary, by which a 
station wakes up and starts executing the algorithm.

\paragraph{Adversaries.}

\darek{A dynamic scenario could be caused by an adaptive or an oblivious adversary. The former can decide, online during the execution, what station to wake-up and when, knowing the algorithm code and the computation history but not the future randomness. The latter knows only the algorithm's code and has to fix its decision on what station to wake up and when before the execution starts (without knowing the random choices made by the stations). Clearly, the adaptive adversary is stronger than the oblivious one, in the sense that it could mimic any strategy of the oblivious adversary and additionally use online strategies,
\gdm{based on the knowledge of the computation history}, 
that may deteriorate the algorithm's performance even further.}

\paragraph{Timing.}
Although the communication is in synchronous rounds (\textit{i.e.}, the clocks of all the stations tick at the same rate) 
there is \textit{no global clock} and \textit{no system-based synchronization}: each station starts its local clock in the round in which it 
wakes up, without knowing anything about the round numbers ticked by the other clocks.
	We conventionally assume that a station is activated \gs{in} round 0 of its local clock and
	can start transmitting since round 1.

It is interesting to note that in the static model there is no distinction between the model with a global clock and that without it. Indeed, one can assume that a global clock is always available in that model: 
all the stations are activated simultaneously and therefore their clocks,
starting at the same time, will always tick the same round numbers. 
In this sense, the dynamic model considered in this work is more general and challenging
than the static one.

\paragraph{Metrics.}

In this paper we measure the efficiency of the algorithms both in terms
of \textit{time complexity} and \textit{energy consumption}. 
In a dynamic scenario when each station can be woken up at any time,
the activation times can be arbitrarily distant from each other,
therefore there is no straight way to correctly judge the time efficiency of an algorithm 
such as simply counting rounds from start to end (as it would be in the static model, where all stations are activated at once).
A natural criterion is to consider the \textit{latency} of the algorithm.
First we define the latency of a station as the
number of rounds necessary for the station to transmit successfully, 
measured since its activation time. Then, the maximum of these values, 
calculated among all $k$ stations, provides the latency of the algorithm.
\gdm{
Concerning the energy consumption, the algorithm's efficiency 
will be evaluated in terms of total number of transmissions (broadcast attempts)
performed by all stations executing the protocol. 
Both latency and energy cost will be analyzed 
against a worst-case adaptive adversary: the upper bounds will hold against {\em any} strategy of the (online) adaptive adversary. Our lower bound will hold even for a weaker oblivious adversary: the formula holds even if the adversary fixes its worst-case wake-up pattern {\em prior} to the execution.}

All our asymptotic \gia{upper} bounds are to be understood as high probability bounds, that is, 
they hold \textit{with high probability} (in short: \textit{whp}).
We say that an event for an algorithm holds whp,
when for a predefined parameter $\eta > 0$,
the parameters of the algorithm can be chosen,
so that for any contention size $k$ the event holds with probability at least
$1 - 1/k^{\eta}$.
In the intermediate steps of analysis, 
we will sometimes need to use the notion of whp 
not only with respect to the pre-assumed parameter $\eta$;
in such a case we will say more specifically 
that an event occurs ``whp $1-1/k^{\lambda}$'', 
for some $\lambda>0$.
Parameter $\lambda$ will typically be slightly higher than $\eta$, so
that at the end we could get the final result with the sought probability 
at least $1 - 1/k^{\eta}$.

\subsection{Previous work and our contribution}

Contention resolution on a shared channel
has a very long and rich history, including 
communication tasks, scheduling, fault-tolerance, security, energy, game-theoretical and many other aspects, 

The first theoretical papers on channel contention resolution
date back to the 70's and considered mainly solutions
\gia{either for the static scenario or the dynamic scenario restricted
to when the activation times follow some known 
probability distribution}. 
\gia{The common assumption is that the number of stations 
connected to the shared medium is very large with respect to the
actual number of stations that can be involved in the
contention.
In this case the simple time-division multiple access (TDMA), 
which assigns a different round to each of the potential transmitters, 
would become very inefficient.}

\gia{Abramson \cite{Abr70} introduced the first random-access 
technique, called the Aloha system, that, contrary to TDMA, instead of avoiding collisions, allows retransmission of the data packets when collisions occur.
Soon after, Roberts \cite{Rob72} designed a method to divide the continuous time into discrete time slots by allowing the stations to agree on slot boundaries (slotted Aloha system). These earliest works focused on
queueing models in which each station maintains a queue of data packets to
be sent that arrive according to independent Poisson processes.
The basic idea was ingeniously simple:
when a new packet arrives at a station, it is immediately transmitted and,
if a collision is detected, it is retransmitted at a randomly 
selected future time. The main issue with any Aloha-type approach
was the instability: eventually the system reaches 
a situation where the number of stations involved in retransmissions tends to infinity, while the throughput tends to zero \cite{BertGal}.

A novel category of protocol schemes, called splitting algorithms, 
introduced in the late 70's independently by Capetanakis \cite{Cap}, Hayes \cite{Hay}, and Tsybakov and Mikhailov \cite{TM},
allowed to control the random retransmissions in such a way 
to avoid many chaotic situations encountered in the previous solutions.
These algorithms were the first to be based on a 
\textit{collision-resolution} approach
aiming at recursively ``resolving'' collisions whenever they occur. 
The idea was to split the set 
of colliding stations into subsets, only one of which transmits 
in the subsequent time slot, while the other stations defer.
If another collision occurs, a further splitting into subsets is performed.
If in a time slot no station transmits or if exactly one station transmits in
it, the splitting stops. 

Alongside the queueing models discussed above, 
a common setting, which is also the one adopted in the present paper,
assumed that each new packet
arrives at a new station, rather than at a backlogged one.
In this context, the problem is to let each station to transmit a single message rather than a queue of messages. This is very realistic whenever
the number of stations is much larger than the arrival rate
(which is the scenario under which TDMA is very inefficient),
so that new arrivals at backlogged stations are negligible.
Moreover, from a theoretical point of view, this assumption captures the real
essence of multiaccess communication in that it assures a good 
approximation of a large number of systems with arbitrary assumptions 
about buffering of newly arriving packets 
(see \textit{e.g.} Section 4.2 in \cite{BertGal}.)
All of the following results assume, as the present paper, 
this setting with a single packet per station.

Under this setting, analysis and refinements of the splitting algorithms 
produced efficient solutions for the static scenario, \textit{i.e.}
when $k$ stations are activated simultaneously. The first 
rigorous analysis was made by Massey \cite{Massey81} who showed that the original splitting algorithm can solve a contention among $k$ stations
in $2.8867 k$ time slots in expectation, provided that $k$ is known.
Greenberg, Flajolet and Ladner \cite{GFL} and Greenberg and Ladner \cite{GL} 
presented an algorithm working 
in $2.134 k + O(\log k)$ expected rounds without any 
\textit{a priori} knowledge of the number $n$ of contenders. 
This was called a \textit{hybrid} algorithm in that it first produces 
an estimate of $k$ (which gives rise to the $O(\log k)$ additive term) 
and then uses a refinement of the original splitting algorithm to finally
resolve the contention. All of the above solutions, being based on
splitting algorithms, are adaptive and require a collision detection 
mechanism.
The same asymptotic (optimal) bound for the static scenario
has been obtained even 
without collision detection and with a non-adaptive algorithm that also ignores any \gia{initial} knowledge about the contention size $k$
\cite{sawtooth1, sawtooth2,AMM13}.
}
\textit{
This shows that in the static model, 
\textit{i.e., }
when all the packets arrive at the same time, there is
no asymptotic difference in the time complexity between adaptiveness and non-adaptiveness, even in the absence of any \gia{\textit{a priori}} knowledge about channel contention.}
In the dynamic scenario, considered in this paper, Bender et al.~\cite{Bend-16}
  designed an adaptive algorithm \textit{with collision detection}
that, without any given bound on parameter $k$, exhibits constant throughput, linear latency and $O(\log \log^{*} k)$ expected
transmissions per station.
More recently Bender et al.~\cite{Bend-20}
proved that constant throughput and polylogarithmic transmissions 
can be achieved,
with high probability, by adaptive algorithms even without collision detection.

\begin{center}
\begin{table*}[t]
\footnotesize 

\begin{tabular}{| c | c | c | c | c | c | c |}
    \hline
   \multirow{2}{*}{\textbf{synchrony}}       &  
   \multirow{2}{*}{\textbf{C.D.}}  &  
   \multirow{2}{*}{\textbf{setting}}         & 
   \multicolumn{2}{c|}{\textbf{latency}}     &
   \multicolumn{2}{c|}{\textbf{energy}}    \\ 
    \cline{4-7} 
   & & & \textbf{u. b.}     & \textbf{l. b.}       &
         \textbf{u. b.}     & \textbf{l. b.}      \\   
    \hline
    static    & no & non-adaptive and $k$ unknown  & $O(k)$ \cite{sawtooth1,sawtooth2,AMM13}  & $\Omega(k)$ & $\Omega(k\log^2 k)$ \cite{sawtooth1,sawtooth2,AMM13} & $\Omega(k)$\\
    \hline
    \multirow{4}{*}{dynamic}
                                  & yes & adaptive and $k$ unknown & $O(k)$ \cite{Bend-16} &  $\Omega (k)$
                                   & $O(\log(\log^* k))$ \cite{Bend-16} & $\Omega(k)$\\
                                  & no  & non-adaptive and $k$ known   & $\boldsymbol{O(k)}$ & $\Omega (k)$
                                   & $\boldsymbol{O(k\log k)}$ & $\Omega(k)$\\                                  
                                  & no & non-adaptive and $k$ unknown &  
                                  $\boldsymbol{O\left(k\frac{\ln^2 k}{\ln\ln k}\right)}$  &                        
                                  $\boldsymbol{\Omega\left(k\frac{\log k}{(\log\log k)^2}\right)}$
                                  & $\boldsymbol{O(k\log^2 k)}$ & $\Omega(k)$\\
                                  & no  & adaptive and $k$ unknown   & $\boldsymbol{O(k)}$ & $\Omega (k)$
                                   & $\boldsymbol{O(k\log^2 k)}$ & $\Omega(k)$\\
    \hline
\end{tabular}
\caption{Currently best results on latency \gia{and energy} in randomized contention resolution. \textit{Collision detection} is abbreviated ``C.D.'', while 
``u. b.'' and ``l. b'' respectively stand for \textit{upper bound} and 
\textit{lower bound}. Results from this paper are shown in bold. \dk{A lower bound $\Omega(k)$ is inevitable,
\gia{both for latency and energy}, to have $k$ successful transmissions.}}
\label{table:results} 
\end{table*}
\end{center}

\paragraph{Our contribution.}  
We pursue the study on contention resolution further by considering 
both adaptive and non-adaptive protocols in the general 
setting when \gia{stations awaken} at arbitrary times and in the 
severe model without collision detection.  
The wake-up times are determined by a worst-case adversary.
Upon waking up, each station starts its protocol from the beginning and a global clock is not available
(an upper bound for the case with global clock has been given  
for the deterministic setting~\cite{DK}).
\gia{With respect to the preliminary version \cite{DeMarcoS17}, the present work provides the following significant additions. The analysis of the algorithms have been improved to work against a stronger adaptive adversary, while the lower bound has been strengthen to deal even 
if the activation times of the stations are scheduled by a much 
weaker oblivious adversary. 
We consider also transmission energy efficiency, in particular improving 
the expected total number of transmissions of our adaptive algorithm 
from $\Omega (k^2)$ to $O(k \log^2 k)$.
Finally, we formulate several challenging open directions in the field of
shared channel communication.}

Our results can be summarized as follows (see Table~\ref{table:results} for a comparison with the static model in the most severe settings, \textit{i.e.} non-adaptivity without collision 
detection and with $k$ unknown).

\begin{itemize}
\item
\dk{If a linear upper bound on $k$ is given \textit{a priori} to the stations or the protocol is adaptive,} 
then the contention 
resolution can be done optimally (in asymptotic sense); 
we provide two corresponding algorithms working whp with latency $O(k)$, \dk{c.f., Theorem~\ref{th:main} in Section~\ref{NA-k-known} and Theorem~\ref{t:withLeader} in Section~\ref{s:no-GC-no-k}, respectively.}
 
\item
If the protocol is non-adaptive and no linear upper bound is known on parameter $k$, then 
we show a time complexity separation with the previous cases by proving that 
there is 
\textit{no} non-adaptive randomized algorithm, 
without the knowledge of any linear upper bound on $k$, achieving
$o\left(k\frac{\log k}{(\log\log k)^2}\right)$ latency whp, \dk{c.f., Theorem~\ref{t:lower-gen} in Section~\ref{sec:lowerbound}.}
 
\item
We also give a non-adaptive algorithm, ignoring any 
linear upper bound on $k$, 
\textit{i.e.}, an universal random code, with 
latency 
$O(k\log^2 k)$ whp,
which almost match
our lower bounds for the same setting \dk{(see Theorem~\ref{t:full-1} in Section~\ref{s:no-GC-no-k-no-a-algorithm}).}
Additionally we show that if we allow the stations to switch-off upon an acknowledgment then
this algorithm/code guarantees even a better latency: 
$O\left(k\frac{\log^2 k}{\log\log k}\right)$ whp,
\dk{c.f., Theorem~\ref{t:full-2} in Section~\ref{s:no-GC-no-k-no-a-algorithm}.}
\end{itemize}
\darek{All our upper bounds (achieved by algorithms)  hold for an adaptive adversary, while the lower bound holds even for a weaker oblivious adversary. In view of our (almost) tight formulas, it implies that the power of the adversary does not (substantially) influence the complexity of the contention resolution problem.}

\textit{
Our contribution implies that, in contrast with what happens in the static model, in the dynamic counterpart
there is a separation, 
in terms of time complexity (latency), between non-adaptive algorithms ignoring $k$ and algorithms that 
either are adaptive or know parameter $k$.  
It also implies a separation between the static and dynamic models, in case of non-adaptive algorithms
without a good estimate of the number $k$ of contenders.
}
It is interesting to note that, compared with the Bender et al.~\cite{Bend-16} protocol, our adaptive algorithm exhibits 
the same optimal performance on latency  
even in the more severe setting without collision detection.

\gdm{
Finally, all of our algorithms, despite the 
absence of a collision detection mechanism, 
are also efficient in terms of the total number of transmissions performed by the stations during the execution.
Namely, the adaptive algorithm spends  $O(k \log^2 k)$ 
broadcast attempts, \dk{c.f., Theorem~\ref{thm:energy-adaptive},} while our non-adaptive solutions, 
with and without knowledge of $k$, have an energy cost respectively 
$O(k \log k)$ and $O(k \log^2 k)$ whp, 
\dk{see Theorems~\ref{energy1} and~\ref{thm:energy-non-adaptive-unknown}.}
}

Given the generality of the 
shared channel
as a \textit{symmetry breaking} model, we 
believe that our contribution and newly developed techniques could shed light on the 
complexity of other problems in distributed computing.

\paragraph{Our approach \dk{and technical contribution}.}
We build our approach on the following four findings.

The first and key finding is
that in order to utilize the channel
efficiently against some non-synchronized sequence of activation times, 
any ``universally efficient'' (\textit{i.e.}, efficient for any contention size) 
\textit{non-adaptive} contention resolution algorithm 
has to schedule probabilities of transmissions in the first $\Theta(k\log k/(\log\log k)^2)$ rounds
in such a way that they sum up to $\Omega(\log k/\log\log k)$ 
(see Lemma~\ref{l:lower-gen-1} in Section~\ref{sec:lowerbound}).
(Recall that we do not require from the algorithms that these probabilities 
are independent over the rounds, which make our lower bound on latency more general.)
This was not the case for previous analysis of simplified static scenarios 
(\textit{i.e.}, when stations start the protocol at the same time)
or simpler problems such as wake-up (\textit{i.e.}, waiting for the first
successful transmission).
We later show that some slightly more subtle selection of activation times
pumps-up the sum requirement to $\Omega(\log^2 k/(\log\log k)^2)$ over the first 
$\Theta(k\log k/(\log\log k)^2)$ rounds (see Lemma~\ref{l:lower-gen-5}).
\gia{This is sufficient
to prove the corresponding superlinear} lower bound on latency (see Theorem~\ref{t:lower-gen}).

Second, we found out that such an effect does not hold if we know some linear
upper bound on the number of contenders, since slowly increasing probabilities,
starting from the level of $1/k$ and ending at $(\log k)/k$,
guarantees the existence of many rounds at which the sum of 
probabilities 
of the alive stations is smaller than $1$, regardless of how the
activation times are located (cf., Lemmas~\ref{inLemma3} to~\ref{Lemma4}
in Section~\ref{NA-k-known}). 
This, together with the
property that the transmission probability of a single station is at least
$(\log k)/k$, implies that each station succeeds in linear time
with high probability (cf., Lemma~\ref{Fact2} and the final proof of 
Theorem~\ref{th:main}).
Note that in this case the sum of transmission probabilities of a station
during the first $\Theta(k)$ rounds is also $\Omega(\log k)$,
however the knowledge of $k$ (or its linear upper bound) allows the algorithm
to schedule them in such a way that the pumping-up technique
from the lower bound mentioned above (for non-adaptive solutions
\textit{without} knowledge of $k$) does not work.

Third, in Section~\ref{s:no-GC-no-k} we will show how
the adaptiveness can also moderate the effect of the pumped-up sums, 
as it allows to elect a leader quickly using known
solutions to the wake-up problem. The leader could then
be used to coordinate the transmissions of the synchronized stations. 
The main challenging part is to manage
such a coordination among different 
algorithmic components
(wake-up and uniform selection of a subset of already synchronized stations) on an asynchronous channel without accessing to a global clock and with no \textit{a priori} information about the number of participating stations. 
We overcome these obstacles by using a few types of modes and a small number of transmissions.

Finally, in the most severe setting of non-adaptive algorithms with
no a priori knowledge, sub-linearly decreasing probabilities 
allow to find a linear fraction of rounds with $O(1)$ sum of transmission
probabilities, cf., Lemmas~\ref{f:full-7} to~\ref{f:full-10} in 
Section~\ref{s:no-GC-no-k-no-a-algorithm}. 
This, together with the fact that a station itself transmits
with sub-linearly decreasing probability $(\log j)/j$ in round $j$,
guarantees successful transmissions with high probability in a slightly
overlinear time period of $O(k\log^2 k/(\log\log k))$ rounds.

\gia{
\subsection{Related works}
An interesting related line of research studies the contention resolution problem in presence of adversarial jamming. 
Awerbuch et al. \cite{Aw} as well as Richa et al. \cite{Richa1, Richa2, Richa3}
studied jamming in multiple access channels in an adversarial setting 
when jamming is bounded within any sufficiently large fraction of the time.
Anantharamu et al. \cite{Anantha} considered the setting with a 
dynamic arrivals of packets subject to particular injection rates and 
jamming rates. 
For an account of the literature on adversarial models the interested reader 
can consult Richa et al.~\cite{Richa}.
For arbitrary jamming models we refer the reader also to the works 
by Alistarh et al.~\cite{Alistarh} and Gilbert 
et al.~\cite{GilbGuerNew}. Energy efficiency approaches can be 
found in \cite{Gilbert}.
More recently, the setting where $d$ slots can be jammed by an adaptive adversary has received attention. In such a situation, when collision
detection is available,
very efficient algorithms, both in terms of throughput and energy, have
been presented in \cite{BenderJamm, BendGilbJamm}.
Interestingly, Bender et al. \cite{Bend-20} showed a separation
between the models with and without collision detection, proving
that no constant throughput algorithm can be achieved in presence of jamming
when collision detection is not available.

As for the energy consumption, some papers also measured the efficiency of contention resolution analyzing the number of channel accesses, including 
not only the  broadcast attempts, but also the time spent by the stations listening to the channel. In this context, for the model with collision detection, Bender et al. \cite{Bend-16} presented a randomized algorithm 
with expected constant throughput and only $O(\log(\log^* N))$ 
channel accesses in expectation.
}

\subsection{Structure of the paper.}
\gdm{
We start with Section ~\ref{sec:notation} presenting some conventions and notations that will be used throughout the paper. 
The next three sections will be devoted to non-adaptive algorithms.
Namely,
in Section~\ref{NA-k-known} we give our optimal non-adaptive solution 
with known contention size;
in Section~\ref{sec:lowerbound} we show our lower
bound for non-adaptive algorithms unaware of the contention size; \dk{and}
in Section~\ref{s:no-GC-no-k-no-a-algorithm}
we give an almost optimal non-adaptive algorithm 
for unknown contention size. 
Finally, in Section~\ref{s:no-GC-no-k} we present our optimal adaptive solution
for unknown contention size.
}

\gia{
\section{Technical preliminaries}\label{sec:notation}

\subsection{Conventions and notation}
}
The activity of every station is articulated into synchronous rounds counted by a local clock. 
Each station can measure the time only on the base of this local numbering, without having any information on the clocks of other stations. 
By convention, we assume that a station is activated at round 0 of its local clock and can start its transmitting
schedule from the next round: \textit{i.e.}, at each round $1,2,3,\ldots$ a station can decide 
the probability of transmission.\footnote{%
The probabilities of transmission are not necessarily independent --- we follow
the general definition of randomized algorithms being a random distribution
over deterministic algorithms. Therefore, in case of adaptive algorithms, the probabilities
may depend also on the history of the channel feedback.
} 

The example given in Figure~\ref{tab:clocks} shows the lack of synchrony among the clocks of stations with different wake-up times: the first round of $u_4$ corresponds to the third round of $u_2$ and $u_3$ (that were activated simultaneously) and to the seventh round of $u_1$. 
	
Although there is no global time accessible to the stations, in the analysis we will need a \textit{reference clock} (not visible to the stations) that allows us to consider the behaviour of all the stations involved in the computation at a given moment. 
For example, in Figure~\ref{tab:clocks} we could start a reference
clock synchronously with $u_1$'s clock and say that 
at time $5$ (of the reference clock) there are three active stations: $u_1$, $u_2$ and $u_3$.

\begin{figure}
\centering
\begin{tabular}{rcccccccccc}
\toprule
  $u_1$'s local rounds: & 0 & 1 & 2 & 3 & 4 & 5 & 6 & 7 & 8 & 9 ...\\
  $u_2$'s local rounds: &   &   &   &   & 0 & 1 & 2 & 3 & 4 & 5 ...\\
  $u_3$'s local rounds: &   &   &   &   & 0 & 1 & 2 & 3 & 4 & 5 ...\\
  $u_4$'s local rounds: &   &   &   &   &   &   & 0 & 1 & 2 & 3 ...\\
              &   &   &   $\vdots$ &   & & &  &   &   &  \\   
   \hline
\end{tabular}
\caption{Lack of synchrony among the clocks.}\label{tab:clocks}
\end{figure}

For any time $t$ of a given reference clock, we denote by $\hat{A}[t]$ the set of stations
activated until time $t$.
The transmission probability assigned by the protocol to a station $v\in \hat{A}[t]$ at time $t$ 
will be denoted by $q_v[t]$.
Some stations may however cease to be active during the protocol, 
therefore we use $A[t] \subseteq \hat{A}[t]$ to designate the set of stations that are \textit{still} active at time $t$.
  
Moreover, we define the 
\textit{sum of transmission probabilities} at time $t$ as follows:
\[
                 \sigma[t] = \sum_{u\in A[t]} q_u[t] \ .
\]
The notation introduced so far will be used both for adaptive and non-adaptive
algorithms. 

Let us now introduce some definitions that 
will be specific for \textit{non-adaptive} algorithms.
When dealing with non-adaptive algorithms, we will also need to express the 
transmission probability of a station at any round of its \textit{local clock}: 
$p(i)$ denotes the probability that an arbitrary station transmits at the $i$th round of its local clock.
More precisely, without loss of generality, we can assume that each station $v$ chooses randomly, in advance,
the sequence of rounds in which it schedules the transmissions.
Recall that the assignment of transmissions to rounds does not need to be independent over the rounds.

Notice that in $p(i)$ we do not need to specify the station which this probability refers to. 
Indeed, for non-adaptive algorithms, the probability that a station transmits at the $i$th round of its local clock
may depend only on the local round index $i$ (recall that the stations are anonymous).
We also define the
\textit{sum of transmission probabilities of an arbitrary station} up to local time~$i$:
\[
	s(i) = \sum_{j=1}^{i} p(j) \ .
\]

Let $v$ be an arbitrary station and $t_v$ be the round, with respect to a given reference clock, corresponding to the wake-up time of $v$. The $t$th round of the reference clock corresponds to the $(t-t_v)$th round of $v$'s local clock. Therefore, if $v$ is active in round $t$, its transmission probability in this round is $q_v[t]=p(t-t_v)$.

We define
\[
\hat{\sigma}[t]=\sum_{v\in \hat{A}[t]} {p(t-t_v)} \ .
\]
It is easy to see that $\hat{\sigma}[t]\geq \sigma[t]$, due to 
the ranges of sums defining these two values.

As might have been noticed, when the time appears as argument of a function 
(like $p(i)$, $s(i)$, $A[t]$, $q_u[t]$), to avoid confusion,
we have used parenthesis for local rounds and square brackets for rounds of a given reference clock.
We will continue using such a convention throughout the paper.

\gia{
\subsection{Chernoff bounds.}
Throughout the paper, we will use the following form of the
Chernoff bound  (see Eq. (4.2) and (4.5) in 
\cite{MitUpf}).
Let $X_1 ,\ldots, X_n$ be independent Poisson trials such that 
$\Pr(X_i = 1) = p_i$ .
Let $X = \sum_{i=1}^{n} X_i$ and $\mu = \E[X]$. Then, for $0 < \delta < 1$,
the following bounds hold:
\begin{eqnarray*}
 \Pr (X \ge (1+\delta)\mu ) &\le& e^{-\frac{\delta^2\mu}{3}}\label{chernoff1}; \\
 \Pr (X \le (1-\delta)\mu ) &\le& e^{-\frac{\delta^2\mu}{2}}\label{chernoff2}.
\end{eqnarray*}
}
\section{A non-adaptive algorithm for known contention}\label{NA-k-known}

In this section we give a non-adaptive algorithm achieving latency $O(k)$
in the case when the number of contenders $k$, or a linear upper bound on $k$,
is given to the stations as a part of the input.
Recall that the stations do not have any ID and are allowed to wake up arbitrarily. 
Apart from being non-adaptive, an additional advantage of our algorithm is that it is uniform,
\textit{i.e.}, the transmissions are independent over rounds.
\gdm{Moreover, it can deal successfully against an adaptive adversary.
}


The algorithm is formally described as follows. Any station, starting from the
time at which it is activated, executes the following protocol \NSU\
(see Algorithm \ref{alg:suniform}).
It takes two parameters in input: the number $k$ of stations and a 
\gia{constant $c$. Such a constant determines} the probability of success of the algorithm: 
for any fixed parameter $\eta > 0$, we can choose an input value 
$c$ such that the algorithm succeeds with probability at least $1 - 1/k^{\eta}$
for any contention size $k$ (recall the definition of high probability given in the Introduction).
For every integer $0 \le l \le \log\log k$, let 
\[
\varphi (l) = \left\{
\begin{tabular}{cc}
  $\frac{k}{2^l}$, & if $l < \log\log k$;\\
  $k$, & if $l = \log\log k$.\\
\end{tabular}\right.
\]
\begin{algorithm}[ht]
	\caption{\NSU}
	\label{alg:suniform}
	\begin{algorithmic}[1]
    \For{$l=0,1,2,\ldots,\log\log k$}
        \For{$c\varphi(l)$ rounds} transmit with probability $\frac{2^l}{2k}$
        \EndFor
    \EndFor
    \end{algorithmic}
\end{algorithm}

Figure~\ref{tab:protocol1} shows the sequence of transmission probabilities for two stations, activated in different rounds, as it results from the execution of the first three 
iterations of the inner for-loop. Notice how the lack of synchronization between the two stations causes that in many rounds they transmit with different probabilities.

\begin{figure}
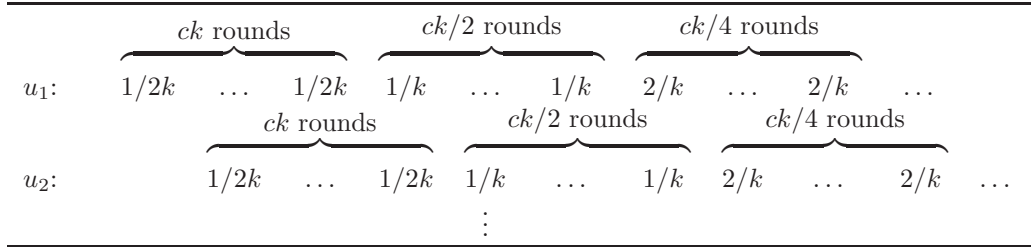

\centering
\begin{tabular}{rcccccccccccc}
\toprule
  & & \multicolumn{3}{c}{$\overbrace{\hspace{3cm}}^{\mbox{$ck$ rounds}}$} &  
  \multicolumn{3}{c}{$\overbrace{\hspace{3cm}}^{\mbox{$ck/2$ rounds}}$} &
  \multicolumn{3}{c}{$\overbrace{\hspace{3cm}}^{\mbox{$ck/4$ rounds}}$} & \\
 $u_1$: &  & $1/2k$ & $\ldots$ & $1/2k$ & $1/k$ & $\ldots$ & $1/k$ & $2/k$ & $\ldots$ & $2/k$ & \ldots\\
  & & & \multicolumn{3}{c}{$\overbrace{\hspace{3cm}}^{\mbox{$ck$ rounds}}$} &  
  \multicolumn{3}{c}{$\overbrace{\hspace{3cm}}^{\mbox{$ck/2$ rounds}}$} &
  \multicolumn{3}{c}{$\overbrace{\hspace{3cm}}^{\mbox{$ck/4$ rounds}}$} \\
 $u_2$: & & & $1/2k$ & $\ldots$ & $1/2k$ & $1/k$ & $\ldots$ & $1/k$ & $2/k$ & $\ldots$ & $2/k$ & \ldots\\ 
 
          &   &   &   &   &   & \vdots  &   &   &   &   &  \\  
          \hline 
\end{tabular}
\caption{Two stations executing the first 3 iterations of the inner for-loop of Protocol \NSU .}\label{tab:protocol1}
\end{figure}

In this section we want to prove the following theorem.

\begin{theorem}\label{th:main}
All stations executing protocol \NSU, for a sufficiently large constant $c$, will transmit successfully within $O(k)$ time rounds whp.
\gdm{The result holds even against an adaptive adversary.}
\end{theorem}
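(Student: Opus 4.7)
The plan is to fix an arbitrary station $v$ with wake-up time $t_v$, to show that $v$ transmits successfully within $O(k)$ reference rounds of its activation with failure probability at most $1/k^{\eta+1}$, and then to union-bound over all $k$ stations. I would concentrate on $v$'s final phase (iteration $l=\log\log k$ of the outer loop), which spans a reference-time window $W_v$ of length $ck$ during which $v$ attempts to transmit independently each round with probability $p^{\star} = (\log k)/(2k)$. In any round $t \in W_v$, the conditional probability that $v$ transmits alone is at least $p^{\star} \prod_{u \in A[t]\setminus\{v\}}(1-q_u[t]) \geq p^{\star} \cdot e^{-2\sigma[t]}$ (since every $q_u[t] \leq 1/2$), so it suffices to show that $\sigma[t] = O(1)$ holds in a linear fraction of the rounds of $W_v$.

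The core step is the \emph{structural claim}, which is precisely what Lemmas~\ref{inLemma3}--\ref{Lemma4} are expected to deliver: regardless of the adversary's wake-up schedule, a constant fraction of rounds in $W_v$ are \emph{good}, in the sense that $\sigma[t] \leq C$ for some absolute constant $C$. The intuition is that \NSU\ uses only slowly-growing per-round probabilities $2^l/(2k)$ with geometrically shrinking phase widths $ck/2^l$, so each station's cumulative contribution $s(\cdot)$ to $\hsigma$ over any reference-time window of length $O(k)$ is bounded, which keeps $\hsigma$ controlled on average. Whenever $\sigma[t]$ is nonetheless large, the expected number of successful transmissions in that round is $\Omega(1)$, and each success removes a station from $A[t]$; the resulting feedback loop forces $\sigma$ below a constant on most rounds.

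Granting the structural claim, Lemma~\ref{Fact2} delivers the per-round success probability of $v$ in a good round as at least $p^{\star} \cdot e^{-2C} = \Omega((\log k)/k)$. Since $v$'s coin flips across rounds are independent of the adversary's (non-anticipative) wake-up decisions, the probability that $v$ fails in all $\Omega(k)$ good rounds of $W_v$ is at most $(1 - \Omega((\log k)/k))^{\Omega(k)} = \exp(-\Omega(\log k)) \leq 1/k^{\eta+1}$ for $c$ chosen sufficiently large. A union bound over the $k$ stations finishes the proof. Because the structural claim is a pathwise statement holding uniformly over all wake-up schedules, the analysis carries over verbatim to an adaptive adversary.

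I expect the structural claim to be the main obstacle. A naive averaging of $\hsigma$ only yields average $\Theta(\log k)$ per round across $W_v$, which is too weak; a rigorous argument must combine (i) a phase-by-phase bookkeeping of the stations contributing to $\sigma[t]$, using the width $ck/2^l$ of phase $l$ to cap how many stations simultaneously occupy that phase during $W_v$, with (ii) the feedback observation that high-$\sigma$ rounds cause rapid depletion of $|A[t]|$, combined via a tail / pigeonhole estimate to isolate the linear quota of good rounds.
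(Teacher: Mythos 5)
Your high-level architecture (isolate the final phase with per-round probability $(\log k)/(2k)$, show $\sigma[t]$ is small on enough rounds, apply the per-round success bound, union-bound over stations) matches the paper's. But the proposal has a genuine gap at exactly the point you flag as the main obstacle, and the mechanism you sketch for closing it would not work. You propose to establish the structural claim via a ``feedback loop'': whenever $\sigma[t]$ is large, the round produces $\Omega(1)$ successes, which depletes $A[t]$ and pushes $\sigma$ back down. On a channel without collision detection this is backwards: the per-round success probability is at most roughly $\sigma[t]\,e^{-\sigma[t]+1}$, which \emph{decreases} once $\sigma[t]$ exceeds $1$ and is polynomially small once $\sigma[t]=\Omega(\log k)$ --- this is precisely the suppression phenomenon the paper exploits in its lower bound (Lemma~\ref{l:lower-gen-2}). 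So high contention does not self-correct; it stalls. At most one station can succeed per round in any case, so even in the benign regime $\sigma[t]=\Theta(1)$ the depletion is one station per round, and there is no a priori reason the stations removed are the ones responsible for the large $\sigma$.

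The paper's actual argument avoids this by never letting $\sigma[t]$ reach $1$ in the first place: it proves the invariant $\sigma[t]<1$ for \emph{all} rounds of the time frame simultaneously, whp (Lemma~\ref{Lemma4}), by induction on $t$. Conditioned on $\cE[1],\dots,\cE[t-1]$, every station currently in probability class $2^l/(2k)$ has lived through $l$ earlier phases during each of which Lemma~\ref{Fact2} applies, so each class $T^l[t]$ has been halved $l$ times down to $\max\{|T^l[t]|/2^l,\sqrt{k}\}$ survivors (Lemma~\ref{inLemma3}, with a coupling to an i.i.d.\ $0$--$1$ sequence to neutralize the adaptive adversary); summing $|A[t]\cap T^l[t]|\cdot 2^l/(2k)$ over $l$ then gives $\sigma[t]<1$ (Lemma~\ref{miss}). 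This ``all rounds are good'' conclusion also sidesteps a secondary issue in your plan: if only a random constant fraction of rounds were good, the set of good rounds could be correlated with $v$'s own transmission pattern, and the product bound $(1-\Omega(\log k/k))^{\Omega(k)}$ would need the more delicate coupling argument the paper deploys elsewhere (Lemma~\ref{f:full-10}). As written, your proposal defers the one step that carries all the difficulty and proposes an incorrect route to it.
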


The linear upper bound on the latency of the protocol follows from 
an easy inspection of the pseudo-code:
the total number of rounds is 
	less than $c(k + k/2 + k/4 +\cdots + k/{\log k} + k) < 3c\cdot k$.
Therefore, we can state the following fact.

\begin{fact}\label{Fact1}
	For any given integer constant $c>0$, the number of rounds needed for any station to execute protocol \NSU\ is less than $3c\cdot k = O(k)$.
\end{fact}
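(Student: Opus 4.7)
The plan is a direct summation of the inner-loop lengths across all outer-loop iterations, using the piecewise definition of $\varphi(l)$ given just before the pseudocode. First I would observe that the outer for-loop iterates over $l = 0, 1, \dots, \log\log k$, and for each $l$ the inner loop executes exactly $c\varphi(l)$ rounds, contributing no other overhead. Hence the total number of rounds needed by a single station to complete \NSU\ is exactly
\[
T(k) \;=\; \sum_{l=0}^{\log\log k} c\,\varphi(l).
\]

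Next I would split off the terminal index $l = \log\log k$, because $\varphi$ is defined differently there. By the second case of the definition, that term contributes $c \cdot k$. For the remaining indices $l = 0, 1, \dots, \log\log k - 1$, the first case gives $\varphi(l) = k/2^l$, so the residual sum equals $ck\sum_{l=0}^{\log\log k - 1} 2^{-l}$, which I would bound by the full geometric series $ck\sum_{l \geq 0} 2^{-l} = 2ck$. Adding the two contributions yields $T(k) < 2ck + ck = 3ck$, and the asymptotic bound $T(k) = O(k)$ follows for any fixed constant $c$.

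There is no real obstacle here: the only delicate point is handling the boundary index $l = \log\log k$ correctly, since the piecewise definition prevents the naive geometric bound from absorbing that last term. Keeping this term separate (and noting that it is responsible for the constant $3$ in place of $2$) is what makes the inequality strict. I would therefore present the argument as a two-line computation immediately after recalling the pseudocode and the definition of $\varphi$.
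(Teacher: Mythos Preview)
Your proposal is correct and matches the paper's own argument essentially verbatim: the paper writes the total as $c(k + k/2 + k/4 + \cdots + k/\log k + k) < 3ck$, which is precisely your split into the geometric part (bounded by $2ck$) plus the final $l=\log\log k$ term (contributing $ck$).
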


Hence, from now on we may focus only on proving the correctness of the protocol.
As stated by Fact~\ref{Fact1},
each station is able to transmit only for at most (actually less than) $3c\cdot k$ rounds following its activation.
Therefore, there are altogether at most $3c\cdot k^2$ rounds in which the transmissions from the $k$ stations can occur. 
Of course, these rounds are not necessarily consecutive, as they
depend on the wake-up times of the stations which can be arbitrarily distant in time. 
This set of rounds can be partitioned into disjoint intervals
consisting of subsequent rounds at which there are some active stations that are executing the protocol and, therefore, can choose to transmit.
These intervals are called \textit{time frames} (there can be many of them).

Let's start with the following observation which is a consequence of the fact that there are altogether at most $3c\cdot k^2$ rounds in which a transmission can occur.

\begin{fact}\label{Fact1a}
For any choice of the activation times, any time frame lasts at most 
$3c\cdot k^2$ rounds. 
\end{fact}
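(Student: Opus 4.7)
The plan is to establish Fact~\ref{Fact1a} by a short double-counting argument that combines the per-station running-time bound from Fact~\ref{Fact1} with the definition of a time frame.

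First, I would unpack the definition: a time frame is a maximal interval of consecutive reference-clock rounds during which at least one station is still executing protocol \NSU. In particular, for every round $r$ inside a time frame $F$, at least one station is active at round $r$.

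Next, I would apply a double-counting argument. Fix a time frame $F$ of length $L$. Let $a(r)$ denote the number of stations active at round $r$ of the reference clock; by the observation above, $a(r) \geq 1$ for every $r \in F$, so
\[
L \;=\; \sum_{r \in F} 1 \;\leq\; \sum_{r \in F} a(r).
\]
The sum on the right counts pairs $(v, r)$ such that station $v$ is active in round $r \in F$. Now I swap the order of summation: for each station $v$, the number of rounds in $F$ during which $v$ is active is at most the total number of rounds $v$ executes the protocol at all. By Fact~\ref{Fact1}, that quantity is strictly less than $3ck$ for every station $v$, and this bound is uniform in $v$ because every station runs the same protocol \NSU\ starting from its own round~$0$. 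Since there are $k$ stations in total, summing over $v$ yields
\[
\sum_{r \in F} a(r) \;=\; \sum_{v} \bigl|\{r \in F : v \text{ is active at } r\}\bigr| \;<\; k \cdot 3ck \;=\; 3c k^2,
\]
and therefore $L < 3ck^2$, which is exactly the asserted bound.

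There is no real obstacle in this proof; the statement is essentially a bookkeeping consequence of Fact~\ref{Fact1}. The only point worth being explicit about is that the bound from Fact~\ref{Fact1} transfers verbatim to the reference clock (rather than the local clock of each station), which it does because each station's local round count is in one-to-one correspondence with the reference-clock rounds in which it is active. No property of the transmission probabilities themselves, nor of the adversary's wake-up schedule, enters the argument.
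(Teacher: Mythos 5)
Your proof is correct and follows essentially the same route as the paper, which also derives the bound from the observation that each of the $k$ stations is active for fewer than $3ck$ rounds (Fact~\ref{Fact1}), so there are at most $3ck^2$ rounds containing any active station, and every round of a time frame must contain one. Your version merely makes the double-counting step explicit, which is a fine way to write it down.
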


In the following analysis we will concentrate on an arbitrary time frame $F$ and use a reference clock starting with it: the $t$th round of the reference clock is the $t$th round of $F$.
\gia{In other words, the reference clock is an imaginary global clock 
(unknown to the stations) starting with the activation time of the first
station(s).}
We will show that any station $v$ executing the protocol during such a time frame, will transmit successfully before
the end of the execution.  

\gia{The crux of the analysis will be to show that the rounds of a time frame
can be partitioned into $\log\log k$ disjoint intervals 
such that, as a result of successful transmissions, 
the number of active stations is halved in each of these intervals, independently of how their wake up times are chosen by the adversary 
(Lemma \ref{inLemma3}). 
Such a progressive reduction of the active stations, besides being beneficial 
in itself, in that it brings us closer to the final goal, it also remarkably contributes to lighten the contention among the remaining active stations,
so that it is possible to show that in each round $t$ of a time frame,
the sum of transmission probabilities $\sigma[t]$ is less than 1 whp (Lemma \ref{Lemma4}).
This is a very favorable situation 
that will be finally exploited by any station
that hasn't been able to transmit successfully until the last iteration,
\textit{i.e.}, when it transmits for $ck$ rounds with probability 
$\log k/(2k)$.
Namely, in the final proof of Theorem~\ref{th:main} 
on page \pageref{proofTh} we will show that having a small sum of 
transmission probabilities in every round of the time frame,
implies that any station -- if still alive -- transmits 
successfully with probability $\Omega(\log k/k)$ in each of the rounds
of the last iteration. This probability combined with the size $ck$ of 
the iteration assures that the station transmits successfully, whp, in one of these final rounds, provided it didn't do so previously.
}

We start with the next lemma stating that in each round $t\in F$ such that $\sigma[t] < 1$, we have a favorable 
probability that a given station $v$ transmits successfully.
For any round $t\in F$, let us define $\cE[t]$ as the event that $\sigma[t] < 1$. 

\begin{lemma}\label{Fact2}
If the event $\cE[t]$ holds, then the probability that a given station $v$ transmits successfully in round $t$ 
is larger than $q_{v}[t]/4$.
\end{lemma}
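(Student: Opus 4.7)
The plan is to compute the success probability of $v$ at round $t$ directly and bound the product of ``no-other-transmitter'' probabilities by a convenient exponential expression.

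First, I would observe that the event ``$v$ transmits successfully in round $t$'' is the intersection of two independent events: $v$ transmits (probability $q_v[t]$), and every other station $u\in A[t]$ with $u\neq v$ stays silent (probability $1-q_u[t]$, independent across $u$ since the algorithm \NSU{} is uniform and transmissions across different stations are drawn independently). Hence the probability we wish to bound from below is
\[
q_v[t] \cdot \prod_{u\in A[t],\, u\neq v}(1-q_u[t]) \;\geq\; q_v[t] \cdot \prod_{u\in A[t]}(1-q_u[t]),
\]
where the inequality just drops the factor $1/(1-q_v[t])\geq 1$.

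Next, I would exploit that in \NSU{} the transmission probability of any station in any round is at most the value used in the last iteration, namely $\frac{2^{\log\log k}}{2k}=\frac{\log k}{2k}$, which is well below $1/2$ for sufficiently large $k$. On the interval $x\in[0,1/2]$ the elementary inequality $1-x\geq 4^{-x}$ holds (both sides agree at $x=0$ and at $x=1/2$, and $\ln(1-x)$ is concave while $-x\ln 4$ is linear). Applying this to every factor yields
\[
\prod_{u\in A[t]}(1-q_u[t]) \;\geq\; 4^{-\sum_{u\in A[t]} q_u[t]} \;=\; 4^{-\sigma[t]}.
\]

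Finally, I would invoke the hypothesis $\cE[t]$, which gives $\sigma[t]<1$, so that $4^{-\sigma[t]}>4^{-1}=1/4$. Combining with the first display, the success probability of $v$ at round $t$ is strictly larger than $q_v[t]/4$, as required. The only mildly delicate point is the concavity-based inequality $1-x\geq 4^{-x}$ on $[0,1/2]$, which I would state once and use directly; everything else is a one-line computation, so I do not expect any real obstacle here.
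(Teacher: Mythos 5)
Your proof is correct and follows essentially the same route as the paper's: both write the success probability as $q_v[t]\prod_{u\neq v}(1-q_u[t])$, bound each factor via the inequality $1-x\geq 4^{-x}$ for $x\in[0,1/2]$ (the paper states it in the equivalent form $(1-q)^{1/q}\geq 1/4$), and conclude from $\sigma[t]<1$ that the product exceeds $1/4$. No gap.
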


\begin{proof}
For every round $t$ of our protocol, $q_{v}[t]\le 1/2$,
\gia{from which we get $(1-q_{w}[t])^{\frac{1}{q_w[t]}} \ge 1/4$.}
Hence, the probability stated in the lemma becomes 
\begin{eqnarray*}
    q_{v}[t] \cdot \prod_{w\neq v} (1-q_{w}[t]) &=&    q_{v}[t] \cdot \prod_{w\neq v} (1-q_{w}[t])^{\frac{1}{q_w[t]}q_{w}[t]}\\ 
                                                &\geq& q_{v}[t] \cdot \prod_{w\neq v} (1/4)^{q_{w}[t]} \\
                                                &>&    q_{v}[t]\cdot (1/4)^{\sigma[t]} \\
                                                &>& q_{v}[t]/4 \ ,
\end{eqnarray*}
where the last inequality follows from the hypothesis that $\cE[t]$ holds.
\end{proof}

Our next goal will be to show that this favorable situation 
recurs whp for the whole execution 
of the algorithm.
Indeed, in Lemma~\ref{Lemma4} we will show that the events $\cE[t]$ simultaneously occur
whp for all rounds $t$ of a given time frame. Before being able to prove such a claim, 
we need three preparatory lemmas.

In the first of these lemmas we consider a scenario in which the sum of transmission probabilities 
$\sigma[\iota]$ is less than $1$ for all rounds $\iota< t$, 
\textit{i.e.},\ $\cE[\iota]$ occurs in all rounds $\iota$ preceding 
round $t$.

We show that for any $j\in(0, \log\log k]$,
during the $c\varphi(j)$ rounds preceding round $t$,
at least half of the stations that at time $t$ are assigned to some iteration $\lambda \ge j$ of the outer for-loop
of the protocol,
will transmit successfully whp.
Therefore they switch-off before round $t$.

Let $U$, $|U| \le k$, be the set of all the stations executing the algorithm in a given frame.
For any integer $l\in [0,\log\log k]$ and time round $t\ge 0$, we define $T^l[t]\subseteq U$ 
to be the subset of stations $v$ (not necessarily still active) 
that at round $t$ have assigned a 
transmission probability $q_v[t] = 2^l/(2k)$. In other words,
it includes all the stations that have been activated in some round 
before $t$
and would transmit with probability $2^l/(2k)$ if still active
(\textit{i.e.}, if not switched-off due to a successful transmission) at round $t$.

For any predefined parameter $\eta > 0$, let $c$ be chosen 
sufficiently large so that $\eta \le (c-8)^2/(32c) + 4$.

\begin{lemma}\label{inLemma3}
Let $0 < j \le \log\log k$ be an integer. 
Let $F$ be any time frame and $\tau, \tau'\in F$ be two time rounds such that $\tau' = \tau - c\varphi(j)$. Assume that
the events $\cE[1],\cE[2],\ldots,\cE[\tau-1]$ hold. Then,
if $T'\subseteq\bigcup_{\lambda\geq j}T^{\lambda}[\tau]$ and $|A[\tau'] \cap T' |\leq x$ for some $x>0$, 
then $|A[\tau] \cap T'|\leq \max\{x/2,\sqrt{k}\}$ 
with probability at least $1-k^{-\eta-4}$. 
\end{lemma}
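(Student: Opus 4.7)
\medskip
\noindent\textbf{Plan for Lemma~\ref{inLemma3}.}
Let $y := |T'\cap A[\tau']|$ and let $N$ denote the number of stations in $T'\cap A[\tau']$ that do \emph{not} transmit successfully during the window $[\tau',\tau)$. Since $T'\cap A[\tau]\subseteq T'\cap A[\tau']$, the conclusion is immediate whenever $y\leq\max\{x/2,\sqrt{k}\}$, so I will assume $y>\max\{x/2,\sqrt{k}\}\geq\sqrt{k}$. It then suffices to prove $\Pr[N\geq y/2\mid\mathcal{E}]\leq k^{-\eta-4}$, where $\mathcal{E}$ is the assumed intersection $\mathcal{E}[\tau']\cap\cdots\cap\mathcal{E}[\tau-1]$.

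Next, I will lower-bound the total transmission-probability mass that each $v\in T'\cap A[\tau']$ accumulates during the window. Since $v\in T^{\lambda}[\tau]$ with $\lambda\geq j$, the activation time $t_v$ satisfies $\sum_{l=0}^{\lambda-1}c\varphi(l)<\tau-t_v\leq\sum_{l=0}^{\lambda}c\varphi(l)$, so $v$'s iteration index at any time in $[\tau',\tau)$ is some $\mu\geq j-1$ — any smaller index would require the two preceding iterations $j-2$ and $j-1$ to fit entirely inside the window of length $c\varphi(j)$, which fails because $c\varphi(j-2)+c\varphi(j-1)=6c\varphi(j)>c\varphi(j)$. Using that each fully-contained iteration $\mu$ contributes $c\varphi(\mu)\cdot\frac{2^{\mu}}{2k}=\frac{c}{2}$ to the sum, and that the residual iteration at the boundary still contributes at least $c/4$ in the worst case $\lambda=j$ with only one round of iteration $j$ reached at time $\tau$, I obtain
\[
\sum_{t=\tau'}^{\tau-1} q_v[t] \;\geq\; \frac{c}{4}\,.
\]

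Conditionally on $\mathcal{E}$, Lemma~\ref{Fact2} gives that the probability $v$ transmits successfully at round $t$ is at least $q_v[t]/4$, uniformly over the history up to $t$. Because \NSU\ makes independent Bernoulli transmission decisions in every round, the per-round estimates can be multiplied to yield
\[
\Pr[v\text{ fails in }[\tau',\tau)\mid\mathcal{E}]\;\leq\;\prod_{t=\tau'}^{\tau-1}\bigl(1-q_v[t]/4\bigr)\;\leq\;\exp\!\Bigl(-\tfrac{1}{4}\sum_{t} q_v[t]\Bigr)\;\leq\;e^{-c/16}.
\]
Summing over the $y$ stations, $\E[N\mid\mathcal{E}]\leq y\,e^{-c/16}$, which is made smaller than $y/8$ by taking $c$ large enough.

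The last and hardest step is concentration: I would invoke the Chernoff bound from Section~\ref{sec:notation} to conclude $\Pr[N\geq y/2\mid\mathcal{E}]\leq\exp\!\bigl(-(c-8)^{2}y/(32c)\bigr)$, which, since $y>\sqrt{k}$ and $\eta\leq(c-8)^{2}/(32c)+4$, is at most $k^{-\eta-4}$. The obstacle is that $N=\sum_{v}\mathbb{1}[v\text{ fails}]$ is not a sum of independent indicators — two stations colliding in a common round fail together, creating positive correlation — so the Chernoff bound cannot be quoted verbatim on $N$. My plan for handling this is to exploit the fact that the only randomness in \NSU\ is carried by the independent Bernoulli transmission variables $\xi_{v,t}$: after conditioning on $\xi_{w,t}$ for every $w\notin T'\cap A[\tau']$ (which merely fixes which rounds are ``clean'' for the $T'$-stations), the residual problem is governed by the independent bits $\{\xi_{v,t}\}_{v\in T'\cap A[\tau'],\,t\in[\tau',\tau)}$, and a coupling/domination argument allows us to bound $N$ above by a sum of \emph{independent} Bernoullis with success parameter $e^{-c/16}$. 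The Chernoff bound then applies to this dominating sum, giving the required probability estimate; verifying the domination is valid uniformly over the conditioning is the delicate part of the argument.
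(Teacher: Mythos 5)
Your overall strategy is genuinely different from the paper's: you count, for each station of $T'\cap A[\tau']$, the event that it \emph{fails} throughout the window and try to concentrate the number $N$ of failures, whereas the paper counts \emph{successful transmissions} out of $A[\iota]\cap T'$ during the window (each success removes exactly one station, so $x/2$ successes suffice) and converts that count into a sum of exactly independent Bernoulli bits by an explicit normalization: at each round with a success it flips an auxiliary coin with probability $(1/p_\iota)\cdot x2^j/(16k)$, and at rounds where the set has already shrunk below $x/2$ it flips a plain coin with probability $x2^j/(16k)$, so the resulting $0$--$1$ sequence is i.i.d.\ by construction and the ordinary Chernoff bound applies with no dependence issues (and, as the paper notes, this normalization also neutralizes the adaptive adversary, since the bit probability does not depend on $p_\iota$). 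Your first two steps are fine: the reduction to $\Pr[N\geq y/2]$ is correct, and the bound $\sum_{t\in[\tau',\tau)}q_v[t]\geq c/4$ (window contained at worst in iteration $j-1$, which is twice as long as $c\varphi(j)$) together with the pointwise use of Lemma~\ref{Fact2} under the assumed events does give $\Pr[v\text{ fails}]\leq e^{-c/16}$ per station.

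The genuine gap is the concentration step, which you correctly identify as the crux but do not actually carry out. The claim that $N$ is stochastically dominated by $\mathrm{Bin}(y,e^{-c/16})$ after conditioning on the bits of stations outside $T'\cap A[\tau']$ is not established and is not obviously true: even with the outside world frozen, the failure indicators of two stations $v,w\in T'$ interact through the channel itself (if both transmit in round $t$ they block each other, if neither transmits each loses a chance), so the joint law of $\{\mathbb{1}[v\text{ fails}]\}_v$ is neither independent nor negatively associated in any evident way, and a blanket ``coupling/domination argument'' is precisely the thing that needs proving. There is also a second, model-level obstruction to your conditioning: against the adaptive adversary (which Theorem~\ref{th:main} claims to handle), the set of stations outside $T'\cap A[\tau']$ and their activation times are themselves functions of the unfolding randomness, so ``condition on $\xi_{w,t}$ for all $w\notin T'\cap A[\tau']$'' is not a well-defined operation fixed in advance. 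A minor additional point: with $\mu=\E[N]\leq ye^{-c/16}$ your deviation $N\geq y/2$ corresponds to $\delta=e^{c/16}/2-1\gg 1$, outside the range of the Chernoff form quoted in Section~\ref{sec:notation}, so you would need the large-deviation version. To close the argument you would either have to prove the domination rigorously (e.g.\ by a martingale/round-by-round exposure argument) or switch to the paper's success-counting device, which avoids the dependence problem entirely.
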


\begin{proof}
Let us consider rounds $\iota\in[\tau',\tau-1]$. Notice that since $j>0$, all the stations in $T'$
must have joined the protocol before round $\tau'$. 
Thus during the interval $[\tau',\tau-1]$ the number of stations in $A[\iota] \cap T'$
cannot increase (it can only decrease as the stations switch-off after 
successful transmissions).

If $x\leq\sqrt{k}$, then
$|A[\tau] \cap T'| \leq |A[\tau'] \cap T'|\leq x \leq \sqrt{k}$ and there is nothing
to prove. So hereafter we assume that $x>\sqrt{k}$ and our aim will be to prove that 
$|A[\tau] \cap T'| \leq x/2$ with probability at least $1-k^{-\eta-4}$.

\gdm{
For any round $\iota$, $\tau' \le \iota \le \tau-1$, 
let $p_\iota$ be the probability of having a successful transmission at round $\iota$.
To any execution of the algorithm, we can attribute the following 0-1 sequence $\rho$, 
of length $c\varphi(j)$, where the $(\iota-\tau'+1)$th bit of $\rho$ corresponds
to round $\iota$, for $\iota = \tau',\tau'+1,\ldots,\tau-1$.
The $(\iota-\tau'+1)$th bit of $\rho$ is defined as follows.
}

\begin{itemize}
\item
If $|A[\iota] \cap T'|> x/2$ and there is no successful transmission from 
$A[\iota] \cap T'$ at time $\iota$, 
then we have a 0 on the position corresponding to $\iota$.
\item
\gdm{If $|A[\iota] \cap T'|> x/2$ and there is a successful transmission from 
$A[\iota]  \cap T'$ at time~$\iota$, then we assign 1 at the corresponding 
position with probability $(1/p_\iota) x2^j/16k$ and 0 otherwise.
Observe that this probability is well defined as $p_\iota > x2^j/16k$.
Indeed, there are more than
$x/2$ stations in $A[\iota]  \cap T'$ and each of them transmits successfully 
with probability larger than $q_v(\iota)/4 = 2^j/8k$ by Lemma~\ref{Fact2}.
}
\item
Finally, if $|A[\iota]  \cap T'|\le x/2$ then the position is set by tossing 
an asymmetric coin where 1 is output 
with probability $x2^j/16k$ and 0 otherwise.
\end{itemize}

\gdm{
We can now observe that for every $\tau' \le \iota \le \tau-1$, the 
probability of having a 1 in the corresponding $(\iota-\tau'+1)$th bit
of $\rho$ is ${x2^j}/{16k}$ independently of the values of the other bits,
even against an adaptive adversary. 
This is evident in case $|A[\iota]  \cap T'|\le x/2$.
In case $|A[\iota]  \cap T'| > x/2$ the probability of having a 1 
is $p_\iota (1/p_\iota){x2^j}/{16k} = {x2^j}/{16k}$, which turns out to be
independent of the probability $p_\iota$ of having a successful transmission 
at round $\iota$.
This in particular implies that although the probability of having a successful
transmission can be affected by the activation of new stations, the behaviour
of these newcomers cannot influence the probability of having a 1 or a 0 in $\rho$.
}

Let $X$ be the random variable defined as the number of ones in $\rho$.
We can now see that if $X\geq x/2$, then $|A[\tau]  \cap T'|\leq x/2$. 
	Assuming that $|A[\tau']  \cap T'| \leq x$, 
	the ones in $\rho$ produced when $|A[\iota] \cap T'|> x/2$ correspond to successful
	transmissions. Consequently, there are at most $x/2$ stations that 
	remain active in the interval $[\tau',\tau-1]$.
Recalling that by hypothesis 
$|A[\tau']  \cap T'| \leq x$, 
it follows that $|A[\tau]  \cap T'| \leq x-x/2 = x/2$.

Hence, the probability that $|A[t]  \cap T'|\leq x/2$ is at least the probability that $X\geq x/2$.
An estimate of the latter probability will complete the proof.
The random variable $X$ expresses the number of successes in 
$c\varphi(j) \ge ck/2^j$
mutually independent experiments,
in each of which the probability of success is $x2^j/16k$. Therefore $\E(X) \geq cx/16$.
\gia{Notice that for this inequality to hold, it is sufficient that $c\varphi(j) = ck/2^j$ for every $j=0,1, \ldots, \log\log k$, but recall that 
the algorithm uses $c\varphi(j) = ck$ rounds
in its last iteration, corresponding to $j = \log\log k$. 
}
Hence, for $c$ sufficiently large,
\begin{eqnarray*}
\Pr(X < x/2) &\le& \Pr(X \le \E(X)\cdot 8/c) \\
               &\le& \exp[-\E(X) (1-8/c)^2/2] \mbox{ (by the Chernoff bound)} \\
             &\le&   \exp[-\sqrt{k}(c-8)^2/(32c)] \mbox{ (recalling that $x > \sqrt{k}$)} \\
                 &<& k^{-(c-8)^2/(32c)} = k^{-\eta - 4} \ .
\end{eqnarray*}
\end{proof}

Our next 
milestone will be
to use Lemma~\ref{inLemma3} to show that when the events 
$\cE[1], \cE[2], \ldots, \cE[t-1]$ hold, then $\cE[t]$ also holds whp. 
This will be done in the two following lemmas. 
First, in Lemma~\ref{miss}, we define the more complex events $\cA^l[t]$ and prove
that their intersection implies event $\cE[t]$.
Second, in Lemma~\ref{conditional}, we apply this fact to upper bound the sought conditional probability.

Let $t\ge 0$ be a round index and $0 \le l \le \log\log k$.
Event $\cA^l[t]$ is the event that holds if and only if $|A[t] \cap T^l[t]| \leq \max\{|T^l[t]|/2^l,\sqrt{k}\}$.
Let $\cA[t]$ be the event that holds if and only if all events $\cA^l[t]$, 
for $l=0,1 \ldots, \log\log k$, hold.

\begin{lemma}\label{miss}
	Let $F$ be any time frame and $t \in F$.
If event $\cA[t]$ holds then $\cE[t]$ holds.
\end{lemma}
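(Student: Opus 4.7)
The plan is to express $\sigma[t]$ as a sum over the iteration classes $T^l[t]$, apply the bound supplied by each event $\cA^l[t]$, and verify that the resulting total stays strictly below $1$ for large $k$. Crucially, since at any fixed reference round $t$ each active station is executing exactly one iteration of the outer for-loop, the sets $T^l[t]$ (for $l = 0, 1, \ldots, \log\log k$) are pairwise disjoint, and every station in $A[t]$ lies in exactly one of them. Moreover, by construction, each $u \in A[t] \cap T^l[t]$ has $q_u[t] = 2^l/(2k)$.

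First I would write
\[
\sigma[t] \;=\; \sum_{l=0}^{\log\log k} |A[t] \cap T^l[t]| \cdot \frac{2^l}{2k}.
\]
Next, applying the hypothesis $\cA^l[t]$, I would bound $|A[t] \cap T^l[t]| \le |T^l[t]|/2^l + \sqrt{k}$ (using $\max\{a,b\} \le a+b$). Substituting gives
\[
\sigma[t] \;\le\; \sum_{l=0}^{\log\log k} \frac{|T^l[t]|}{2^l} \cdot \frac{2^l}{2k} \;+\; \sum_{l=0}^{\log\log k} \sqrt{k} \cdot \frac{2^l}{2k} \;=\; \frac{1}{2k}\sum_{l=0}^{\log\log k} |T^l[t]| \;+\; \frac{1}{2\sqrt{k}}\sum_{l=0}^{\log\log k} 2^l.
\]

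The first sum is at most $k$ because the sets $T^l[t]$ are disjoint and are all subsets of the population of at most $k$ stations, so the first term is at most $1/2$. The second sum is a geometric series bounded by $2 \cdot 2^{\log\log k} = 2\log k$, so the second term is at most $\log k/\sqrt{k}$, which is $o(1)$ and hence less than $1/2$ for all sufficiently large $k$. Combining, $\sigma[t] < 1$, which is exactly the event $\cE[t]$. There is no serious obstacle here; the only subtlety is recognizing the disjointness of the $T^l[t]$'s at a fixed $t$ (it is the local-round index at time $t$ that determines the class, and each station has a unique such index), which is what prevents the first sum from exceeding $k$.
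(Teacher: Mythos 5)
Your proof is correct and follows essentially the same route as the paper's: decompose $\sigma[t]$ over the classes $T^l[t]$, replace the $\max$ bound from $\cA^l[t]$ by the sum of its two arguments, and bound the two resulting sums by $1/2$ each (the paper writes the first as $|U|/2k \le 1/2$, using the same disjointness you point out). The only difference is that you spell out the geometric-series estimate $\sum_l 2^l \le 2\log k$ explicitly, which the paper leaves implicit.
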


\begin{proof}
Fix a round index $t$ and assume that for $l=0,1, \ldots, \log\log k$, events $\cA^l[t]$ hold.
We have to show that $\sigma[t] < 1$. Indeed,
\begin{eqnarray*}
    \sigma[t] &=& \sum_{l=0}^{\log\log k} |A[t] \cap T^l[t]|\cdot 2^l/2k  \\
              &\leq& \sum_{l=0}^{\log\log k} |T^l[t]|/2k + \sum_{l=0}^{\log\log k} \sqrt{k}\cdot 2^l/2k 
                                    \mbox{ (by the hypothesis that $\cA[t]$ holds)} \\
          &<& |U|/2k + 1/2 \le 1.
\end{eqnarray*}
\end{proof}

For any event $\cA$ we denote by $\overline{\cA}$ the negation of $\cA$.


\begin{lemma}\label{conditional}
	Let $F$ be any time frame and $t \in F$. We have
\[
\Pr\left(\overline{\cE[t]}\given[\Big] \cE[1] \wedge \cE[2]\wedge\cdots \cE[t-1]\right) < k^{-\eta-3}(\log\log k+1).
\]
\end{lemma}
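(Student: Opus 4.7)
The plan is to combine Lemma~\ref{miss} with an iterated application of Lemma~\ref{inLemma3}. Since Lemma~\ref{miss} says that $\cA[t]=\bigcap_{l=0}^{\log\log k}\cA^l[t]$ implies $\cE[t]$, the contrapositive gives $\overline{\cE[t]}\subseteq \bigcup_l\overline{\cA^l[t]}$, so by a union bound over the $\log\log k+1$ values of $l$ it is enough to show
$\Pr(\overline{\cA^l[t]}\mid \cE[1]\wedge\cdots\wedge\cE[t-1])\le \log\log k\cdot k^{-\eta-4}$
for every $l$. Summing over $l$ then yields $(\log\log k+1)(\log\log k)\cdot k^{-\eta-4}$, which is smaller than $(\log\log k+1)k^{-\eta-3}$ as soon as $\log\log k\le k$.

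Fix $l\ge 1$ (the case $l=0$ is trivial, as $|A[t]\cap T^0[t]|\le |T^0[t]|$ always). The heart of the argument is to construct a chain of reference times $\tau_0<\tau_1<\cdots<\tau_l=t$ with two properties: (i) each gap $\tau_h-\tau_{h-1}$ equals $c\varphi(h^*)$ for some $h^*\ge 1$ (so that Lemma~\ref{inLemma3} is applicable at that step with parameter $j=h^*$), and (ii) at $\tau_h$ every station belonging to $T^l[t]$ is, if still active, in iteration $\ge h^*$ of the outer \texttt{for}-loop, i.e.\ $T^l[t]\subseteq \bigcup_{\lambda\ge h^*}T^\lambda[\tau_h]$. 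For $l<\log\log k$ the natural choice $\tau_h := t-\sum_{j=h+1}^{l}c\varphi(j)$ paired with $h^*=h$ works: using $t-t_v\in(s_l,s_{l+1}]$ with $s_j=\sum_{l'=0}^{j-1}c\varphi(l')$, a direct computation shows $\tau_h-t_v\in (s_{h+1}-c\varphi(l),s_{h+1}]$, and since $\varphi(l)\le \varphi(h)$ the lower endpoint is at least $s_h$, placing $v$ in iteration exactly $h$. The boundary case $l=\log\log k$ is where the book-keeping is delicate, because the last iteration is anomalously long ($c\varphi(l)=ck$ rather than $ck/2^l$) and the naive formula fails at the last gap; one fix is to take $\tau_h:= t-ck/2^h$ for $h\le l-1$ and $\tau_l:=t$, so that the two largest gaps $\tau_l-\tau_{l-1}$ and $\tau_{l-1}-\tau_{l-2}$ are both equal to $c\varphi(l-1)$ and can be handled by Lemma~\ref{inLemma3} with $j=l-1$. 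The iteration condition at each $\tau_h$ then reduces to the inequality $2^{h+1}\le \log k$, i.e.\ $h\le l-1$, which is satisfied.

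With the checkpoints in place, the inductive step is immediate. The base case $|A[\tau_0]\cap T^l[t]|\le |T^l[t]|$ is trivial, since every station of $T^l[t]$ has been activated strictly before $\tau_0$. At each step $h=1,2,\ldots,l$, Lemma~\ref{inLemma3} (whose hypothesis $\cE[1]\wedge\cdots\wedge\cE[\tau_h-1]$ is supplied by the conditioning we carry throughout) conditionally halves the bound down to the $\sqrt{k}$ floor with failure probability at most $k^{-\eta-4}$. After the $l\le \log\log k$ applications we obtain $|A[t]\cap T^l[t]|\le \max\{|T^l[t]|/2^l,\sqrt{k}\}$, which is exactly $\cA^l[t]$; a union bound over the steps gives the required estimate on $\Pr(\overline{\cA^l[t]}\mid\cdots)$, and the outer union bound over $l$ completes the proof. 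The main obstacle is precisely the verification of condition (ii) at every checkpoint, in particular in the boundary case $l=\log\log k$ where $\varphi$ is not monotone; once this arithmetic is settled the probabilistic estimate is routine.
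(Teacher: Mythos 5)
Your proposal is correct and follows the same overall strategy as the paper's proof: reduce $\overline{\cE[t]}$ to $\bigcup_l\overline{\cA^l[t]}$ via Lemma~\ref{miss}, then bound each $\Pr(\overline{\cA^l[t]}\given\cdots)$ by chaining $l$ applications of Lemma~\ref{inLemma3} over a sequence of checkpoints whose gaps are $c\varphi(j)$, followed by union bounds. Where you genuinely add something is in the verification of the containment $T^l[t]\subseteq\bigcup_{\lambda\geq j}T^{\lambda}[\tau_j]$, which the paper simply asserts for its checkpoints $t_j$ with $t_j-t_{j-1}=c\varphi(j)$. Your arithmetic correctly shows that this assertion holds for $l<\log\log k$ precisely because $\varphi(l)\le\varphi(h)$ for $h\le l$, and that it \emph{fails} as literally stated in the boundary case $l=\log\log k$: there $c\varphi(l)=ck$ exceeds the earlier iteration lengths, so a station that has just entered its last iteration at time $t$ may, at the paper's checkpoint $t_{l-1}=t-ck$, still be sitting in iteration $0$. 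Your repaired schedule $\tau_h=t-ck/2^h$ (with the last two gaps both equal to $c\varphi(l-1)$, each handled by Lemma~\ref{inLemma3} with $j=l-1$) restores the containment, since the condition reduces to $2^{h+1}\le\log k$, and still yields $l$ halvings and the same $k^{-\eta-4}$ failure probability per step. The resulting bound matches the paper's. In short: same architecture, but your proof is more careful on a point the paper glosses over, and the extra care is actually needed for $l=\log\log k$ because $\varphi$ is not monotone at its last value.
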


\begin{proof}
Let us fix any round $t$.
By Lemma~\ref{miss}, we have:
\begin{eqnarray}
  \Pr\left(\overline{\cE[t]}\given[\Big] \cE[1] \wedge \cE[2]\wedge\cdots \cE[t-1]\right)
       \leq
  \sum_{l=0}^{\log\log k} \Pr\left(\overline{\cA^l[t]} \given[\Big] \cE[1] \wedge \cdots \wedge \cE[t-1]\right).
\end{eqnarray}

In order to complete the proof it will suffice to show that 
$\Pr\left(\overline{\cA^l[t]}\given[\Big]\cE[1] \wedge \cdots \wedge \cE[t-1]\right)$ is smaller than $k^{-\eta-3}$ for every $l= 0,1, \ldots, \log\log k$.

For $l=0$, $\cA^l[t]$ reduces to 
$|A[t] \cap T^l[t]|\le \max\{|T^l[t]|,\sqrt{k}\}$ which trivially holds. Therefore, the probability of 
$\left(\overline{\cA^{0}[t]} \given[\Big] \cE[1] \wedge \cdots \wedge \cE[t-1]\right)$ is zero.

Let us now fix $l\in[1,\log\log k]$. We define the sequence of rounds $t_0,t_1,\ldots,t_l$
so that $t_j - t_{j-1} = c\varphi(j)$ and $t_l=t$
for $j = 1,\ldots,l$.
Note that $T^l[t]\subseteq\bigcup_{\lambda\geq j}T^{\lambda}[t_j]$ for any $j\leq l$.
Therefore, repeatedly applying Lemma \ref{inLemma3} to intervals $[\tau',\tau]=[t_{j-1},t_j]$ for $j=l, l-1, \ldots, 1$, we can write:
\begin{eqnarray*}
 |A[t_l] \cap T^l[t]| &\le&  \max\{|A[t_{l-1}] \cap T^l[t]|/2, \sqrt{k}\} 
                                                  \mbox{ (with probability at least } 1-k^{-\eta-4})\\
                        &\le&  \max\{|A[t_{l-2}] \cap T^l[t]|/4, \sqrt{k}\} 
                                                  \mbox{ (with probability at least } 1-k^{-\eta-4})\\ 
                        &\vdots& \\
                        &\le& \max\{|A[t_0]\cap T^l[t]|/2^l,\sqrt{k}\}
                                   \mbox{ (with probability at least } 1-k^{-\eta-4}) \\
                        &\le& \max\{|T^l[t]|/2^l,\sqrt{k}\} \ .
\end{eqnarray*}
%
%
Thus, after applying a union bound to the above derivation, we get that
\[
      |A[t] \cap T^l[t]| \le \max\{|T^l(t)|/2^l,\sqrt{k}\}
\]
holds with probability at least $1 - k^{-\eta-4}\log\log k > 1 - k^{-\eta-3}$.
\end{proof}

\begin{lemma}\label{Lemma4}
Given any time frame $F$, 
all events $\cE[t]$, for every $t \in F$, simultaneously occur with probability larger than $1 - k^{-\eta}/2$.
\end{lemma}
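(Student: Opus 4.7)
The plan is to bound the failure probability by a straightforward union bound across the time frame, exploiting Lemma~\ref{conditional} and the upper bound on the length of $F$ given by Fact~\ref{Fact1a}.

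First, I would decompose the bad event $\bigcup_{t\in F}\overline{\cE[t]}$ into the disjoint events
\[
B_t \;=\; \overline{\cE[t]} \;\cap\; \cE[1]\wedge\cdots\wedge\cE[t-1],
\]
where $B_t$ captures the scenario in which round $t$ is the \emph{earliest} round of $F$ at which $\cE[\cdot]$ fails. The $B_t$'s are pairwise disjoint and their union is precisely $\bigcup_{t\in F}\overline{\cE[t]}$.

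Next, by the definition of conditional probability and Lemma~\ref{conditional},
\[
\Pr(B_t) \;\le\; \Pr\!\left(\overline{\cE[t]}\given[\Big]\cE[1]\wedge\cdots\wedge\cE[t-1]\right) \;<\; k^{-\eta-3}(\log\log k+1).
\]
(The corner case $t=1$ reduces to an unconditional bound, which is trivially covered.) Then, since by Fact~\ref{Fact1a} any time frame has at most $3ck^2$ rounds, a union bound over the disjoint $B_t$'s yields
\[
\Pr\!\left(\bigcup_{t\in F}\overline{\cE[t]}\right) \;\le\; \sum_{t\in F}\Pr(B_t) \;<\; 3ck^{2}\cdot k^{-\eta-3}(\log\log k+1) \;=\; O\!\left(\frac{\log\log k}{k^{\eta+1}}\right),
\]
which is smaller than $k^{-\eta}/2$ for all sufficiently large $k$. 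Taking complements gives $\Pr\!\left(\bigcap_{t\in F}\cE[t]\right) > 1 - k^{-\eta}/2$, as desired.

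There is no real obstacle here: the only subtlety is to ensure the decomposition into the $B_t$'s is handled correctly so that the conditional bound of Lemma~\ref{conditional} can be invoked. Since the conditioning is exactly on $\cE[1]\wedge\cdots\wedge\cE[t-1]$, and the time frame length is polynomial in $k$ while each conditional failure probability is inverse super-polynomial, the union bound comfortably absorbs the $O(k^2)$ factor with room to spare (one can if needed increase the constant $c$ of the protocol so that $\eta$ in the statement of Lemma~\ref{inLemma3} can be taken larger, which only strengthens the bound).
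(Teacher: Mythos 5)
Your proof is correct and follows essentially the same route as the paper: the decomposition into the disjoint events $B_t$ (first failure at round $t$) is exactly the telescoping identity the paper uses to write $1-\Pr(\cE[1]\wedge\cdots\wedge\cE[|F|])$ as $\sum_t \Pr(\overline{\cE[t]}\mid \cE[1]\wedge\cdots\wedge\cE[t-1])\Pr(\cE[1]\wedge\cdots\wedge\cE[t-1])$, after which both arguments drop the second factor, invoke Lemma~\ref{conditional} and Fact~\ref{Fact1a}, and conclude for sufficiently large $k$. No differences worth noting.
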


\begin{proof}
Let $F$ be any time frame.
We want to prove that
$1-\Pr(\cE[1] \wedge \cE[2] \wedge\cdots \wedge \cE[|F|])$ is less than $k^{-\eta}/2$.

Hence,
\begin{eqnarray*}
 \lefteqn{1-\Pr(\cE[1] \wedge \cE[2] \wedge\cdots \wedge \cE[|F|])}\\ &=&
  \sum_{t=1}^{|F|} \Pr\left(\overline{\cE[t]} \given[\Big] \cE[1] \wedge \cE[2]\wedge\cdots \cE[t-1]\right)\Pr(\cE[1]\wedge\cE[2]\wedge\cdots \cE[t-1])\\
    &\le& \sum_{t=1}^{|F|} \Pr\left(\overline{\cE[t]} \given[\Big] \cE[1] \wedge \cE[2]\wedge\cdots \cE[t-1]\right)\\
    &\le& 3c\cdot k^2 \cdot \Pr\left(\overline{\cE[t]} \given[\Big] \cE[1] \wedge \cE[2]\wedge\cdots \cE[t-1]\right) 
                    \mbox{ (by Fact~\ref{Fact1a})}\\
    &<& 3c\cdot k^2 \cdot k^{-\eta-3}(\log\log k+1) \mbox{ (by Lemma~\ref{conditional})}\\
    &<& k^{-\eta}/2 \ , 
\end{eqnarray*}
for sufficiently large $k$.
\end{proof}

The last lemma tells us that, whp, $\sigma[t]<1$ holds for all rounds $t$ within a time frame. 
Due to this inequality, we are now able to show that in each of the last $ck$ rounds, station $v$ --- if still awake --- transmits successfully with probability $\Omega(\log k/k)$.
This, in turn, assures that $v$ transmits successfully, whp, in one of these final rounds (provided it didn't do so previously).
What follows is the proof of Theorem~\ref{th:main}, which 
is the main result of this section.

\medskip

\noindent
{\bf Proof of Theorem~\ref{th:main}}\label{proofTh}

Let us focus on some station $v$ active in some time frame $F$.
\gia{We first prove that this station transmits whp within $O(k)$ rounds and 
then we take the union bound over all contending stations.}
We will prove that $v$ manages to transmit successfully, whp, by the end of the execution of its protocol.
Let $R \subseteq F$ be the set of the last $ck$ rounds 
executed by $v$ 
corresponding to the final iteration of the inner for-loop. 
While it is active,  station $v$ transmits in any round $t\in R$ with probability 
$q_v[t] = p(t-t_v) = \log k/(2k)$.
In order to prove the theorem, we need to show that $v$ transmits successfully in one of these rounds whp.

The probability that station $v$ does not transmit successfully during the rounds in $R$ 
is upper bounded by the probability that the station does not transmit successfully in a round $t\in R$ when all events $\cE[t]$ for $t\in R$ hold or that there exists a round $t\in R$ such that $\cE[t]$ does not hold. 

By Lemma~\ref{Fact2}, in any round $t\in R$ such that event $\cE[t]$ holds, the active station $v$ transmits successfully 
with probability larger than $q_v[t]/4=\log k/(8k)$.
So, assuming that for every round $t\in R$ event $\cE[t]$ holds, station $v$ 
does not manage to transmit successfully with probability less than
$(1-\log k/(8k))^{ck} < e^{-c\log k/8}$, which is smaller than $k^{-\eta}/2$, if we take the constant $c$ sufficiently large.

By Lemma~\ref{Lemma4} the probability that there exists a round $t\in R$ such that event $\cE[t]$
does not hold is also smaller than $k^{-\eta}/2$. 
Hence, the probability that station $v$ does not transmit successfully is less than 
$k^{-\eta}/2 + k^{-\eta}/2 = k^{-\eta}$.
\gia{Now, taking the union bound over all contending stations, we get that
the probability that one station fails to transmit successfully is at most
$k^{-\eta +1}$, for any prefixed parameter $\eta > 0$. 
This also means that for any fixed $\eta - 1 >0$,
all stations transmit successfully with probability at least $1 - k^{-(\eta-1)}$,
that is whp.}
This concludes the proof of Theorem~\ref{th:main}. \qed

\medskip

Now we can conclude this section by showing that our algorithm is also energy efficient.
\gdm{In the following theorem we consider the number of broadcast attempts performed by
all contending stations.}

\gdm{
\begin{theorem}\label{energy1}
The total number of broadcast attempts for an execution of Protocol \NSU\ is $O(k\log k)$ whp.
The result holds even against an adaptive adversary.
\end{theorem}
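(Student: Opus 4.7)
The plan is to bound the expected number of broadcast attempts and then use a Chernoff concentration inequality to upgrade this to a high-probability statement. I would first analyse a single station assuming it runs the entire protocol without ever switching off (this only increases the count, so it suffices for the upper bound). In iteration $l$ of the outer for-loop, such a station makes $c\varphi(l)$ independent Bernoulli trials each with success probability $2^l/(2k)$, contributing expected $c/2$ transmissions when $l<\log\log k$ and $c(\log k)/2$ when $l=\log\log k$. Summing across all $\log\log k+1$ iterations yields an expected total of $O(\log k)$ transmissions per station.

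Next, for each station $v$, let $Y_v$ denote this pessimistic count obtained by letting $v$ perform every scheduled Bernoulli trial over all $3ck$ rounds. Each $Y_v$ is a sum of independent Bernoulli indicators (the protocol is uniform across rounds, with independent coin tosses per round), and the $Y_v$'s across different stations are mutually independent because they depend only on each station's internal randomness. Let $Y=\sum_v Y_v$, so $\mu=\E[Y]=O(k\log k)$.

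The main step is then to apply the Chernoff bound stated in the preliminaries to $Y$ with, say, $\delta=1$, obtaining
\[
\Pr[Y\ge 2\mu]\le e^{-\mu/3}.
\]
For $\mu=\Theta(k\log k)$ this is at most $k^{-\eta}$ once the constant $c$ in the protocol is chosen sufficiently large (the hidden constant in $\mu$ scales linearly with $c$). Since the actual number of broadcast attempts never exceeds $Y$ --- stations that switch off early after acknowledgement can only transmit less --- this gives the desired $O(k\log k)$ bound whp.

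I do not foresee a serious technical obstacle; the only subtlety is the adaptive adversary. Here I would observe that the random variable $Y$ is a function solely of the stations' internal coin flips: the adversary controls only the activation times, and while these determine which local round of a station corresponds to a given global round, they cannot influence the Bernoulli outcomes that determine whether a station transmits in any of its local rounds. Consequently, the Chernoff bound on $Y$ holds uniformly over every adversarial wake-up strategy, yielding the claim against an adaptive adversary as stated.
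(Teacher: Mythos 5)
Your proposal is correct and follows essentially the same route as the paper: compute the per-iteration expected number of transmissions ($c/2$ per iteration, $(c/2)\log k$ for the last), conclude $O(\log k)$ expected transmissions per station, and use independence of the coin flips together with a Chernoff bound, noting that the adversary's wake-up schedule cannot affect the stations' internal randomness. The only cosmetic difference is that the paper applies Chernoff per station and then takes a union bound (which also yields the slightly stronger per-station $O(\log k)$ whp guarantee), whereas you apply Chernoff once to the aggregate sum; both are valid.
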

\begin{proof}
For  $l<\log\log k$, during the $l$-th execution of the main loop, the expected number of transmissions per station is
$c\varphi(l)\cdot 2^l/(2k)=c/2$. While for $l=\log\log k$ this expected number is 
$c\varphi(\log\log k)\cdot 2^{\log\log k}/(2k)=(c/2)\log k$. Overall, we have $O(\log k)$ expected transmissions per station for the whole execution of the protocol.

Since the transmissions are independent over rounds,
by using the Chernoff bound 
\darek{to estimate from above the number of transmissions of a single station, and then the union bound over all stations, 
we get $O(\log k)$ transmissions for every station, whp.}
This means that in total our algorithm requires $O(k\log k)$ broadcast attempts whp.
\end{proof}
}

\section{A lower bound for non-adaptive algorithms}\label{sec:lowerbound}


The aim of this section is to prove the following theorem stating a lower bound
on non-adaptive algorithms without the knowledge of $k$.

\begin{theorem} 
\label{t:lower-gen}
There is no non-adaptive contention resolution algorithm, with no knowledge of contention size $k$,
with a latency of $o(k\log k/(\log\log k)^2)$ rounds whp.
\darek{The result holds even for a weaker oblivious adversary.}
\end{theorem}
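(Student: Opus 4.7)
The proof will be by contradiction, following the two-stage pumping strategy outlined in the approach paragraph. I assume there is a non-adaptive algorithm whose per-round transmission probability sequence $p(1), p(2), \ldots$ guarantees whp success within $L = L(k) = o(k \log k / (\log\log k)^2)$ rounds of local time, for every sufficiently large $k$. Set $T = T(k) = \Theta(k \log k / (\log\log k)^2)$ as the target time scale, so that $L = o(T)$.

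First I would prove Lemma~\ref{l:lower-gen-1} (the baseline): using the static oblivious schedule that wakes all $k$ stations at global round $0$, derive $s(T) = \Omega(\log k / \log\log k)$. Under this schedule, the probability that a given station is the sole transmitter in local round $j$ is at most $p(j)$, so the expected number of solo transmissions by a single station in its first $T$ rounds is at most $s(T)$; if $s(T)$ were too small, a first-moment/union-bound argument would give a constant probability that some station fails to transmit alone, contradicting the whp success of the algorithm. Since the algorithm is not required to use independent rounds, the argument has to be phrased purely in terms of the marginals $p(j)$, but the first-moment computation survives unchanged.

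Second, I would prove Lemma~\ref{l:lower-gen-5}, pumping the baseline up to $\Omega(\log^2 k / (\log\log k)^2)$ by an oblivious interleaved schedule. Partition the $k$ stations into $g = \Theta(\log k / \log\log k)$ groups of $k/g$ stations each, and wake group $i$ at an offset $t_i$ chosen (within a global interval of length $O(T)$) so that at the local round $L$ of the last group every earlier group is still active. Each group is itself a universally efficient instance of the baseline lemma, and thus contributes at least $\Omega(\log k / \log\log k)$ to the cumulative transmission probability seen by the last group; summing over the $g$ interleaved groups yields $\Omega(\log^2 k / (\log\log k)^2)$. The offsets depend only on $k$ and on the a priori fixed sequence $p(\cdot)$, so the schedule is oblivious.

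The main obstacle, and the crux of the final step, will be converting this pumped-up cumulative probability into a latency contradiction. The plan is to observe that once the total transmission probability $\sigma[t]$ over the active stations is $\Omega(\log k / \log\log k)$, the probability that any particular active station is the unique transmitter in round $t$ is damped by a factor $e^{-\Omega(\log k / \log\log k)}$ relative to its own transmission probability; summed over an $\Omega(T)$-long global window and over the $k$ active stations, the expected number of successful transmissions becomes $o(k)$. A standard Chernoff-type concentration argument then forces at least a constant fraction of the stations to remain unsuccessful after their first $L$ local rounds whp, contradicting the latency assumption. The delicate point is ensuring that $\sigma[t]$ is large simultaneously in enough rounds of the global window; this is precisely what drives the interleaved group structure and accounts for the $(\log\log k)^2$ factor appearing in the final bound.
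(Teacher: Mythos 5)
Your skeleton matches the paper's (a baseline lower bound on the cumulative per-station probability mass, a pumping/telescoping step to square it, then a conversion into a no-success instance), but two of your three steps would fail as described. First, the baseline: waking all $k$ stations at round $0$ and running a first-moment argument on solo transmissions only yields $s(T)=\Omega(1)$. Indeed, a station succeeds whp only if its expected number of solo transmissions is at least $1-k^{-\eta}$, and that expectation is bounded above by $s(T)$; nothing in this computation produces a $\log k/\log\log k$ factor. The paper's Lemma~\ref{l:lower-gen-1} gets that factor from a genuinely different instance: one reference station wakes at round $1$ and the other $k-1$ are spread \emph{uniformly at random} over the $\tau(k)=O(k\log k)$ rounds, so that each of the reference station's transmissions is blocked with probability only $\Omega(1/\log k)$ (via the first-round probability $p(1)$, which cannot depend on $k$). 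Driving the failure probability $\Omega\bigl((1/\log k)^{-X}\bigr)^{-1}$, i.e.\ $\Omega(1/(\log k)^X)$, below $k^{-\eta}$ then forces $X=\Omega(\log k/\log\log k)$ transmissions, and Markov transfers this to $\varsigma(k)$. Relatedly, your pumping step omits the blocking mechanism: to argue that each later ``group'' forces an \emph{additional} $\Omega(\log k/\log\log k)$ of mass in a \emph{disjoint suffix} of the reference station's local timeline, you must first exhibit an instance (the paper's Lemma~\ref{l:lower-gen-3}, built from Lemmas~\ref{l:lower-gen-1} and~\ref{l:lower-gen-2}) on which no successful transmission occurs before that suffix begins, whp; otherwise the algorithm could front-load all of its probability mass and succeed early.

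Second, the final conversion is run at the wrong threshold. With $\sigma[t]=\Omega(\log k/\log\log k)$ the per-round success probability is only $e^{-\Omega(\log k/\log\log k)}=k^{-O(1/\log\log k)}$, which is not $o(1/T)$ for $T=\widetilde{\Theta}(k)$, so you can neither union-bound away all successes nor safely count expected successes: $\sigma[t]$ ranges over \emph{active} stations and collapses as soon as stations start succeeding and switching off, at which point your per-round bound evaporates. The paper avoids both problems by converting the pumped single-station mass $\varsigma(k)=\Omega(\log^2 k/(\log\log k)^2)$ into $\hat{\sigma}[t]\ge\gamma\log k$ (note: $\Theta(\log k)$, not $\Theta(\log k/\log\log k)$ --- spreading $k/2$ stations each carrying mass $\Theta(\log^2 k/(\log\log k)^2)$ over a window of length $\Theta(k\log k/(\log\log k)^2)$ gives average density $\Theta(\log k)$), where $\hat{\sigma}$ sums over \emph{all awakened} stations in a coupled model without switch-offs. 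At that level the per-round probability of a \emph{first} success is at most $\hat{\sigma}[t]e^{-\hat{\sigma}[t]+1}\le k^{-3}$, a union bound over at most $k^2$ rounds shows no station ever succeeds, and the contradiction is immediate. You should redo the baseline with the sparse random spreading, add the blocking instances to the telescoping, and carry the full $\gamma\log k$ density into the final step.
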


For the sake of contradiction, in the following we will assume that
there is a non-adaptive algorithm $\cA$ which, without having any information on 
contention size $k$, 
solves the contention for any number $k$ of stations, with 
a latency 
of $o\left(k\frac{\log k}{(\log\log k)^2}\right)$ rounds whp.

In the following, we will use the term {\em instance} to denote 
a schedule assigning to each station the time at which it becomes 
activated.
%
Our aim will be to show 
existence of an instance of the problem such that 
whp algorithm $\cA$ does not make {\em any} successful transmission on this instance
within the first $\Omega\left(k\frac{\log k}{(\log\log k)^2}\right)$ rounds.
This of course contradicts the assumption on the latency of algorithm $\cA$.
We stress out the following interesting polarization that we have discovered
and make use of here, and which does not take place in the simpler setting with synchronization --- 
if one tries to solve contention resolution (\textit{i.e.}, 
to have {\em all} stations successful) 
\gia{with an algorithm too greedy} in terms of channel
utilization, 
\textit{it may not achieve even
a single successful transmission.}
\tbc{We would like to emphasize that the instance is determined in advance so that this lower bound is valid for oblivious adversary.}

\gia{Recall} that for algorithm $\cA$ we can define a sequence of probabilities $p(1),p(2),p(3),\ldots$
of transmissions of station's message in its local rounds $1,2,3,\ldots$ counting from its activation.
This sequence is the same for all executions and across all stations, due to $\cA$ being non-adaptive.
Namely, any station transmits its message with probability 
$p(1)$ in the first round after it has been activated.
If it has not transmitted successfully before, 
the transmission occurs with probability $p(i)$ for $i > 1$ in the $i$th round after its activation.
Note that the transmission in a round $i$ does not have to be necessarily an event independent on the transmissions in previous rounds; in this sense our proof
is more general than many other lower bounds in the literature of shared channel
which hold under the assumption of such an independence over rounds.

\tbc{
Without loss of generality we can assume $p(1) > 0$. Moreover, since the stations do not know parameter $k$, the contention size,
the probability $p(1)$ does not depend on $k$.}


%
We emphasize that, in view of our contradictory hypothesis, all the following 
definitions and results hold for a non-adaptive algorithm $\cA$ which, without knowing $k$, 
solves the Contention Resolution problem for {\em any number} $k$ of 
contending stations, with a time 
complexity of $o\left(k\frac{\log k}{(\log\log k)^2}\right)$ whp.

The following new definitions will be used.
Let $I(k)$ be an arbitrary instance in which exactly $k$ stations are activated.
Let $\tau(I(k))$ be the minimum time needed for algorithm $\cA$ to assure that
any activated station transmits successfully in instance $I(k)$ whp,
\textit{i.e.} at least $1-k^{-\eta}$ for any predetermined constant $\eta > 0$.
Let $\tau(k)=\max_{I(k)}\{\tau(I(k))\}$, where the maximum is taken over
all instances activating $k$ stations.
A straightforward consequence of our contradictory assumption is that 
$\tau(k) = o\left(k\frac{\log k}{(\log\log k)^2}\right)$.
We also define 
\[
   \varsigma (k) = s(\tau(k)) = \sum_{i\in[1,\tau(k)]} p(i) \ .
\]

Fixed any instance $I(k)$, we will use a reference clock starting 
at the first activation time of the instance. All the following 
rounds $t$ refer to this clock.  

The proof of the lower bound consists of two parts.
In the first part we show a dependance between 
the number of successful transmissions 
and the sum of transmission probabilities.
This is the task of the following lemma which  shows that 
if in an interval of $O(k^2)$ rounds the sum $\hsigma[t]$ of
transmission probabilities of all activated stations is $\Omega(\log k)$,
then whp no transmission will be successful in that interval. 
Note that, since $\tau(k) \le k^2$, this contradicts the assumption 
that algorithm $\cA$ has latency $\tau(k)$ whp.
The second part of the proof is devoted to constructing 
an instance of activation times such that the hypothesis of 
Lemma~\ref{l:lower-gen-2} holds, so to block whp successful transmissions
for $\tau(k)$ rounds.

%

\begin{lemma}
\label{l:lower-gen-2}
Fix an arbitrary instance.
Assume that for any round $t\in[1,T]$, with $T\le k^2$, it holds that 
$\hsigma[t] \geq \gamma\log k$ for some sufficiently large constant $\gamma>0$. Then,
no station transmits successfully by round $T$ with probability $1-1/k$.
\end{lemma}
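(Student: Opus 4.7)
The plan is to couple the actual execution with a ``phantom'' execution in which every activated station keeps following its original non-adaptive random transmission schedule forever, regardless of whether it has already succeeded. Before the first successful transmission, no station has been switched off, so the two executions are identical; in particular, some station transmits successfully in $[1,T]$ in the real execution if and only if some round $t\in[1,T]$ of the phantom execution has exactly one transmitter. Working in the phantom world is convenient because each station $v$ can be described by a single random schedule $S_v\subseteq\mathbb{N}$ with $\Pr[i\in S_v]=p(i)$, drawn independently across stations; the within-station correlations across rounds that the paper admits are then irrelevant for the one-shot argument below.

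First I would bound, for a fixed round $t$, the probability that exactly one phantom transmission occurs. For $v\in\hat{A}[t]$, write $p_v=p(t-t_v)$. The indicators $\mathbf{1}[(t-t_v)\in S_v]$ are independent Bernoulli across $v$ (even if, for fixed $v$, they may be correlated across rounds), so
\begin{equation*}
\Pr[\text{exactly one transmission in round } t] \;=\; \sum_{v\in\hat{A}[t]} p_v\prod_{u\ne v}(1-p_u) \;\le\; e^{-\hat{\sigma}[t]}\sum_{v\in\hat{A}[t]} p_v\,e^{p_v} \;\le\; e\,\hat{\sigma}[t]\,e^{-\hat{\sigma}[t]},
\end{equation*}
using $1-x\le e^{-x}$ and $p_v\le 1$. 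Since $x\mapsto x e^{-x}$ is decreasing on $[1,\infty)$, the hypothesis $\hat{\sigma}[t]\ge\gamma\log k$ upgrades this to a bound of the form $e\gamma(\log k)\,k^{-\gamma}$.

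Second, I would take a union bound over the $T\le k^2$ rounds in $[1,T]$: the probability that some round in this range carries exactly one phantom transmission is at most $e\gamma(\log k)\,k^{2-\gamma}$, which falls below $1/k$ for any sufficiently large constant $\gamma$ (say $\gamma\ge 4$) and large enough $k$. By the coupling, this is precisely the probability that a successful transmission occurs in the first $T$ rounds of the actual execution, which yields the claim.

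The only subtle point, and the main obstacle if one tries a direct argument, is that the paper explicitly does not require a single station's transmissions to be independent across rounds, so one cannot naively condition on ``no success so far'' or multiply per-round survival probabilities. The coupling to the phantom execution sidesteps this issue cleanly, since the per-round bound above uses only independence \emph{across stations within the same round}, which is automatic from the distributed model in which anonymous stations draw private randomness independently of one another.
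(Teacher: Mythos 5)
Your proposal is correct and follows essentially the same route as the paper: the paper likewise passes to a ``simplified model'' in which stations never switch off (your phantom coupling), bounds the per-round probability of a unique transmitter by $\hsigma[t]e^{-\hsigma[t]+1}$ (identical to your $e\,\hsigma[t]e^{-\hsigma[t]}$), and union-bounds over the $T\le k^2$ rounds. Your explicit remark that only cross-station independence within a single round is needed --- so the union bound over unconditional per-round events sidesteps any within-station correlation across rounds --- is a slightly cleaner articulation of a point the paper leaves implicit, but it is not a different argument.
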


\begin{proof}
We want to show that the probability of having at least one successful transmission
in the time interval $[1,T]$ is smaller than $1/k$.
Since we are interested in the probability of having \textit{at least one} successful transmission,
we can consider the simplified model in which the stations do not switch off when they transmit successfully.
Indeed, on any sequence of activation times for the $k$ stations,
the probability of having the first (\textit{i.e.}, at least one) successful transmission
on the original model is equivalent to the probability of having the first
transmission in the simplified model.
In the simplified model, the probability of successful transmission in any round $t$,
if no successful transmission happened before time $t$, is at most
\[
       \sum_{v\in \hA[t]} p(t-t_v)\prod_{{w\in \hA[t],} {w\not = v}} \left(1 - p(t-t_w) \right) \le \hsigma[t] e^{-\hsigma[t]+1}
\ ,
\]
which can be made smaller than $1/k^3$, for a sufficiently large $\gamma$.

By taking the union bound over all the $T\le k^2$ rounds, we get that the probability of no successful 
transmission in the simplified model (and thus in the original model) is at most $1/k$.
\end{proof}


The construction of the problem instance assuring the lower bound on the algorithm
will be based on the inequality $\varsigma(k)=\Omega(\log^2 k/(\log\log k)^2)$ 
proved in the key Lemma \ref{l:lower-gen-5},
transformed later into the lower bound on $\hsigma[t]$ in Lemma~\ref{l:lower-gen-6}.
In order to prove that key lemma, 
we first show a weaker inequality that $\varsigma(k)$
has to fulfill.

\begin{lemma}
\label{l:lower-gen-1}
$\varsigma(k) =\Omega(\log k/\log\log k)$.
\end{lemma}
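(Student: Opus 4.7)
The plan is to argue by contradiction: suppose that for infinitely many $k$ we have $\varsigma(k)\le c\log k/\log\log k$ with $c>0$ arbitrarily small, and then exhibit an (oblivious) instance of exactly $k$ activations on which $\cA$ produces no successful transmission in $\tau(k)$ rounds with probability at least $1-1/k$. Since $\tau(k)\le k^2$, this will contradict the whp correctness of $\cA$. The desired instance will be constructed precisely so that the hypothesis of Lemma~\ref{l:lower-gen-2} is met: an activation schedule $t_1\le t_2\le\cdots\le t_k$ ensuring
\[
\hsigma[t]\;=\;\sum_{v:\,t_v\le t} p(t-t_v)\;\ge\;\gamma\log k
\qquad\text{for every } t\in[1,\tau(k)].
\]

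The first building block is the observation that $p(1)$ is a positive constant independent of $k$. Therefore a ``wave'' of $\alpha=\lceil\gamma\log k/p(1)\rceil$ stations activated at a common reference round $r$ contributes at least $\alpha\cdot p(1)\ge \gamma\log k$ to $\hsigma[r+1]$ all by itself. Firing such waves at appropriately spaced reference rounds would immediately supply the required baseline --- but we have only $k$ stations in our budget, so only $\Theta(k/\log k)$ waves are available, which does not yet cover all $\tau(k)$ rounds.

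The second building block is the double-counting bound
\[
\sum_{t=1}^{\tau(k)} \hsigma[t]\;=\;\sum_{v=1}^{k}\sum_{t=t_v+1}^{\tau(k)} p(t-t_v)\;\le\; k\,\varsigma(k),
\]
which says that the total ``probability mass'' available to distribute over $[1,\tau(k)]$ is at most $k\varsigma(k)$. Maintaining $\hsigma[t]\ge \gamma\log k$ simply on the union of the rounds directly covered by the first-local-round contributions of waves already consumes $\Theta(k)$ of this mass; what remains to be shown is that the ``tail'' contributions $p(2),p(3),\dots$ of waves fired earlier can be re-used to bridge the gaps between consecutive waves and extend the blocking to the full window $[1,\tau(k)]$.

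The main obstacle --- and the heart of the argument --- is to choose the wave schedule so that the tails of different waves overlap and combine to keep $\hsigma[t]$ above the threshold everywhere. I plan to do this via a dyadic decomposition of the probability sequence $p(1),\ldots,p(\tau(k))$ into $O(\log\log k)$ doubling scales and, in each scale, calibrate the wave sizes and spacings so that the local Riemann-like sums stay above $\gamma\log k$. Balancing these $\log\log k$ scales against the total budget $k\varsigma(k)$ and the length $\tau(k)$ will show that such a blocking schedule exists whenever $\varsigma(k)\ge c'\log k/\log\log k$ for an appropriate constant $c'$, and conversely that only $\varsigma(k)=\Omega(\log k/\log\log k)$ can block Lemma~\ref{l:lower-gen-2} from being triggered --- which is exactly the asserted bound.
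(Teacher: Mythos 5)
There is a genuine gap here, and it is structural rather than a missing detail: your plan is to derive a contradiction from $\varsigma(k)\le c\log k/\log\log k$ by building an instance with $\hsigma[t]\ge\gamma\log k$ on all of $[1,\tau(k)]$ and invoking Lemma~\ref{l:lower-gen-2}. But your own double-counting bound $\sum_{t=1}^{\tau(k)}\hsigma[t]\le k\,\varsigma(k)$ rules this out: maintaining $\hsigma[t]\ge\gamma\log k$ over a window of length $\tau(k)\ge k$ requires total mass at least $\gamma k\log k$, hence $\varsigma(k)\ge\gamma\log k$ --- which is precisely what the contradiction hypothesis denies. No dyadic rearrangement of waves and tails can evade this; under the hypothesis the blocking instance you need simply does not exist. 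Your closing sentence in fact concedes the point (``such a blocking schedule exists whenever $\varsigma(k)\ge c'\log k/\log\log k$''), so the construction is only available in the regime opposite to the one you are assuming, and the ``conversely'' clause does not yield a contradiction: if $\varsigma(k)$ is small, Lemma~\ref{l:lower-gen-2} is never triggered and nothing follows. The blocking-instance machinery is the right tool for the \emph{later} steps of the paper's argument (Lemmas~\ref{l:lower-gen-3} and~\ref{l:lower-gen-6}), but those constructions \emph{consume} lower bounds on $\varsigma$ as input; using it to establish the base lower bound on $\varsigma$ is circular.

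The paper's actual proof of Lemma~\ref{l:lower-gen-1} is a much more local argument that you are missing. Fix the first-activated station $v$ and scatter the other $k-1$ activation times uniformly at random over $[1,\tau(k)]$ with $\tau(k)=O(k\log k)$. In any round where $v$ transmits, with probability at least $(k-1)\cdot p(1)\cdot\Omega(1/(k\log k))=\Omega(1/\log k)$ some newcomer is in its first local round and transmits with probability $p(1)$, destroying $v$'s transmission. If $X$ is the number of rounds $v$ transmits in $[1,\tau(k)]$, then $v$ fails in all of them with probability $\Omega\bigl((1/\log k)^{X}\bigr)$, which is non-negligible unless $X=\Omega(\log k/\log\log k)$; since $\E(X)=\varsigma(k)$ and $\Pr(X<2\varsigma(k))>1/2$ by Markov, the bound on $\varsigma(k)$ follows. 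Note this argument does not need $\hsigma$ to be large anywhere --- it only needs each individual attempt of $v$ to fail with probability $\Omega(1/\log k)$, which is a far weaker (and achievable) requirement than the uniform $\gamma\log k$ threshold you were aiming for.
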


\begin{proof}
To prove the lemma, we construct a random instance for algorithm $\cA$ 
such that if $\varsigma(k) \not = \Omega(\log k/\log\log k)$,
then 
the first station $v$ activated in this instance
transmits successfully, within $\tau(k)$ rounds, 
with probability smaller than $1-1/k^\zeta$,
for some constant $\zeta>0$.
Because of this, the \gia{contention} resolution problem 
would not be solved whp by algorithm $\cA$.

Let station $v$ be activated among the earliest possible group of stations, \textit{i.e.}, 
at round~1 of the instance. 
Station $v$ chooses at random the rounds in which it is going to transmit
(it can be assumed that this choice be done at the moment $v$ is activated, 
as the algorithm is non-adaptive).
The number of such rounds in the time period $[1, \tau(k)]$ 
is a random variable $X$ such that $E(X) = \varsigma(k)$. 
By Markov's inequality we have
\begin{equation}\label{eq:mark}
  \Pr\left(X < 2 \varsigma(k)  \right) > 1/2 \ .
\end{equation}
We now let the activation times of the other $k-1$ stations be distributed uniformly at random among the
$\tau(k) = O(k\log k)$ rounds. Each station transmits with probability
$p(1)$ at the first round it switches on. 
Therefore, at any round in which $v$ transmits, the 
probability that this transmission is not successful is at least 
$(k-1)\cdot p(1)\cdot(1/O(k\log k)) = \Omega(1/\log k)$.
\tbc{(Recall that, as explained in the introductory part of this section, $p(1)$ does not depend on~$k$.)}

Thus, the probability of no successful transmission for station $v$ during the interval 
$[1, \tau(k)]$ 
is
at least
\begin{eqnarray*}
    \Omega\left( \frac{1}{(\log k)^X} \right) \ .
\end{eqnarray*}
Hence,  
this probability (that $v$ does not transmit successfully) is negligible,
\textit{i.e.}, at most $1/k^\eta$ for any predetermined constant 
$\eta > 0$,
only when $X = \Omega(\log k/\log\log k)$.
By Equation~(\ref{eq:mark}), it follows that 
$\varsigma(k) = \Omega(\log k/\log\log k)$, or otherwise station $v$ does not transmit successfully whp within the interval $[1, \tau(k)]$.

\end{proof}

In the following lemma we show that the sum of probabilities in a round
could be indeed made $\Omega(\log n)$ over a period slightly shorter than the
ultimate length $\tau(k)$; we will extend it even further later on.

\begin{figure}[tp]
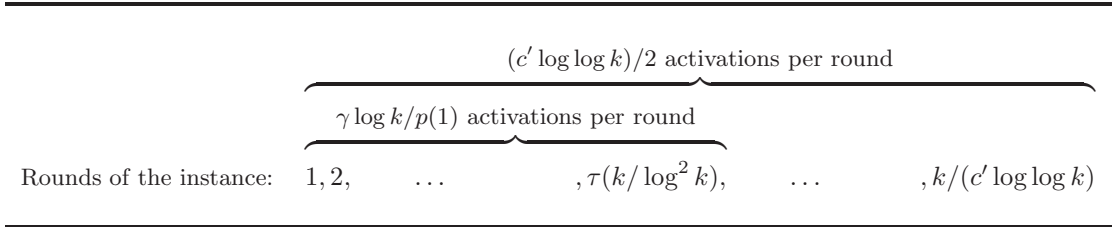


\centering
\begin{tabular}{rrrrrrrrrrr}
\toprule
\\
 & \multicolumn{9}{c}{$\overbrace{\hspace{10.5cm}}^{\mbox{\small $(c'\log\log k)/2$ activations per round}}$} \\ 
 & \multicolumn{5}{c}{$\overbrace{\hspace{5.6cm}}^{\mbox{\small $\gamma\log k/p(1)$ activations per round}}$} & & & & \\ 
{\small Rounds of the instance:} & $1,2,$ &  & $\ldots$ &  & $,\tau(k/\log^2 k),$ & & $\ldots$ & & $,k/(c'\log\log k)$ \\ \\
          \hline 
\end{tabular}

\caption{
Construction of the activation instance used in Lemma~\ref{l:lower-gen-3}.}
\label{tab:lemma3}
\end{figure}

From now on, we consider the constant $\gamma$ determined in Lemma~\ref{l:lower-gen-2}

\begin{lemma}
\label{l:lower-gen-3}
There is a constant $c_1>0$ and an integer $k_0$ such that for $k>k_0$ there is an instance $I(k)$ such that
$\hsigma[t] \ge \gamma\log k$ 
in rounds $t\in [1,\tau(k/(c_1\log k\log\log k))]$.
\end{lemma}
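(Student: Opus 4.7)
The plan is to construct the instance $I(k)$ with two phases of activations as sketched in Figure~\ref{tab:lemma3}: a short dense initial burst, then a longer sparser tail. Writing $T_1=\tau(k/\log^2 k)$, in Phase~1 (global rounds $1,\ldots,T_1$) I will activate $\gamma\log k/p(1)$ stations per round, and in Phase~2 (global rounds $T_1+1,\ldots,k/(c'\log\log k)$, for a constant $c'$ fixed below) I will activate $(c'\log\log k)/2$ stations per round. Before anything else, I need to verify feasibility, i.e.\ that the total number of activations does not exceed $k$: Phase~2 activates exactly $k/2$ stations, while Phase~1 activates $T_1\cdot \gamma\log k/p(1)=o(k)$ stations, because the contradictory latency bound on $\cA$ gives $T_1 = o(k/(\log k(\log\log k)^2))$.

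The core task is then to show $\hat{\sigma}[t]\ge \gamma\log k$ throughout the stated range. For Phase~1 rounds $t\in[2,T_1]$ this is immediate: the $\gamma\log k/p(1)$ stations activated in round $t-1$ each contribute $p(1)$, summing to exactly $\gamma\log k$. For a round $t$ in Phase~2, I will consider the backward window $[t-T_1,t-1]$; a station activated at global round $s$ in this window contributes $p(t-s)$ at time $t$, so the substitution $i=t-s$ gives a contribution of $\sum_{i=1}^{T_1}(\text{activations in round }t-i)\cdot p(i)$. Since Phase~1's per-round rate strictly dominates Phase~2's for large $k$, the per-round activation count in the window is at least $(c'\log\log k)/2$ uniformly, whether or not the window straddles the phase boundary. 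Hence $\hat{\sigma}[t]\ge (c'\log\log k)/2\cdot \varsigma(k/\log^2 k)$. Invoking Lemma~\ref{l:lower-gen-1} at contention size $k/\log^2 k$ yields $\varsigma(k/\log^2 k)=\Omega(\log k/\log\log k)$, so $\hat{\sigma}[t]=\Omega(c'\log k)$, and I will fix $c'$ large enough to make this at least $\gamma\log k$.

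Finally I need to choose $c_1$ so that the desired upper endpoint $\tau(k/(c_1\log k\log\log k))$ fits inside Phase~2. Plugging $m=k/(c_1\log k\log\log k)$ into $\tau(m)=o(m\log m/(\log\log m)^2)$ gives $\tau(m)=o(k/(\log\log k)^3)$, which for any fixed $c_1$ and large $k$ is well below $k/(c'\log\log k)$; any constant $c_1>0$ suffices. The main obstacle I anticipate is the boundary handling when $t$ lies just above $T_1$, so that the window $[t-T_1,t-1]$ contains both Phase~1 and Phase~2 rounds: the trick is that reindexing via $i=t-s$ reassembles the sum of $p(i)$ over \emph{all} of $[1,T_1]$ regardless of how the window is split, so bounding the per-round rate by the smaller Phase~2 value still recovers $\varsigma(k/\log^2 k)$ in full. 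A secondary delicate point is that Lemma~\ref{l:lower-gen-1} is applied at contention size $k/\log^2 k$ rather than $k$, but this is harmless because $\log(k/\log^2 k)=\Theta(\log k)$ and $\log\log(k/\log^2 k)=\Theta(\log\log k)$ for large $k$, preserving the lower bound on $\varsigma$.
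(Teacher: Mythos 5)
Your proposal is correct and follows essentially the same two-phase construction as the paper's proof: a dense initial burst of $\gamma\log k/p(1)$ activations per round over $[1,\tau(k/\log^2 k)]$, followed by $(c'\log\log k)/2$ activations per round, with the Phase-2 bound obtained by summing $p(i)$ over the trailing window of length $\tau(k/\log^2 k)$ and invoking Lemma~\ref{l:lower-gen-1} to get $\varsigma(k/\log^2 k)=\Omega(\log k/\log\log k)$. The only (immaterial) difference is that the paper overlays the sparse batch on \emph{all} rounds including Phase~1, whereas you place it only in Phase~2 and handle the window straddling the boundary via the observation that Phase~1's activation rate dominates; your feasibility and endpoint checks match the paper's Equations~(\ref{inclusion})--(\ref{vars}).
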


\begin{proof}
By the contradictory hypothesis on the latency of algorithm $\cA$, 
it follows that
$\tau(k) = o(k\log k/(\log\log k)^2)$.
This implies that for any
constant $c'>0$ there exists a constant $c_1>0$ such that
$\tau(k/(c_1\log k\log\log k))\leq k/(c'\log\log k)$ for $k$ sufficiently large.
In other words, for any $c'$ there is $c_1$ such that
\begin{equation}\label{inclusion}
  [1,\tau(k/(c_1\log k\log\log k))]\subseteq[1,k/(c'\log\log k)]
	\ .
\end{equation}
Also, we have 
\begin{equation}\label{second}
\tau(k/\log^2 k) = o(k/\log k)
\ .
\end{equation}
Let $\gamma$ be the constant determined in Lemma~\ref{l:lower-gen-2}.
Since, by Lemma~\ref{l:lower-gen-1}, $\varsigma(k) = \Omega(\log k/\log\log k)$,
there is a constant $c'$ 
such that for $k$ sufficiently large the following inequality holds:
\begin{equation}\label{vars}
  \varsigma(k/\log^2 k)\geq 2\gamma\log k/(c'\log\log k)
	\ .
\end{equation}

We can construct instance $I(k)$ for $k$ contending stations as follows
(see Figure~\ref{tab:lemma3} for a reference).
In each round $t\in[1,\tau(k/\log^2 k)]$ we switch on $\gamma\log k/p(1)$ stations.
Recalling that each station transmits with probability $p(1)$ 
in the first round after it has been activated,
we have $\hsigma[t] \geq\gamma\log k$ in all rounds $t\in[1,\tau(k/\log^2 k)]$. 
By Equation~(\ref{second}) it follows that, 
for sufficiently large $k$, 
it is sufficient to activate
at most $k/2$ stations in these rounds.

We continue the definition of the instance by switching on the remaining
stations as follows. 
The goal is now to guarantee $\hsigma[t] \geq\gamma\log k$ 
in all rounds $t\in[\tau(k/\log^2 k),k/(c'\log\log k)]$.
We switch on $(c'\log\log k)/2$ stations in each 
round $t\in [1,k/(c'\log\log k)]$ 
(note that this is possible since there are at least $k/2$ remaining stations).
Observe that in any round $t\in [\tau(k/\log^2 k), k/(c'\log\log k)]$, the following inequalities hold:
\[
\hsigma[t] \ge \varsigma(k/\log^2 k)\cdot c'\log\log k/2 \geq\gamma\log k
\ .
\] 
Indeed, the first inequality holds because for each considered round $t$
and for every integer 
$i\in [1,\tau(k/\log^2 k)]$
there are exactly
$(c'\log\log k)/2$ stations each contributing $p(i)$ to the sum $\hsigma[t]$,
totaling in $(c'\log\log k)/2$ times $\varsigma(k/\log^2 k)$;
the second inequality follows from Equation~(\ref{vars}).

This way we have showed that there exists an instance such that, 
for some constant $c'$,
the lower bound $\hsigma[t] \geq\gamma\log k$ holds in rounds $t\in[1,k/(c'\log\log k)]$.
Now, recalling Equation~(\ref{inclusion}), the lemma follows.
%
\end{proof}

Using the activation times from Lemma~\ref{l:lower-gen-3},
we can extend the statement of Lemma~\ref{l:lower-gen-1} by proving that
the transmission probabilities concentrate in some
suffix of the considered period $[1,\tau(k)]$.

\begin{lemma}
\label{l:lower-gen-4}
For some constant $c_2>0$, we have
\[
\varsigma(k) - \varsigma(k/(c_2\log k\log\log k)) = \Omega(\log k/\log\log k)
\ .
\]
\end{lemma}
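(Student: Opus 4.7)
The plan is to adapt Lemma~\ref{l:lower-gen-1} so that the lower bound on the sum of transmission probabilities applies to a suffix of $[1,\tau(k)]$ rather than to the whole interval. The key idea is to combine the blocking instance of Lemma~\ref{l:lower-gen-3} with a uniform random placement of the remaining stations in the tail, and then run Lemma~\ref{l:lower-gen-1}'s Markov-plus-collision argument inside the tail.

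I would construct the instance $I(k)$ in two superposed layers. The first layer is the instance from Lemma~\ref{l:lower-gen-3}, which uses at most $k/2+o(k)$ stations and guarantees $\hsigma[t]\geq\gamma\log k$ for all $t\in[1,\tau(k/(c_1\log k\log\log k))]$. The second layer will place the remaining $\Omega(k)$ stations uniformly at random over the tail $R=(\tau(k/(c_1\log k\log\log k)),\tau(k)]$, whose length is $O(k\log k)$ by the contradictory hypothesis $\tau(k)=o(k\log k/(\log\log k)^2)$.

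Next I would fix a station $v$ activated in round $1$. Combining Lemma~\ref{l:lower-gen-3} with Lemma~\ref{l:lower-gen-2}, with probability at least $1-1/k$ no station transmits successfully in $[1,\tau(k/(c_1\log k\log\log k))]$, so $v$ must succeed inside $R$. Letting $X$ denote the number of transmission attempts of $v$ in $R$, linearity gives $E(X)=\varsigma(k)-\varsigma(k/(c_1\log k\log\log k))$, and Markov yields $\Pr[X<2E(X)]>1/2$. Meanwhile, in each round $T\in R$, just the first-local-round transmissions of the second-layer stations produce an expected interference of order $k\cdot p(1)/|R|=\Omega(1/\log k)$, since $p(1)$ is a positive constant independent of $k$ (as observed in the proof of Lemma~\ref{l:lower-gen-1}). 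Consequently, the probability that $v$'s transmission in each $T\in R$ collides with some second-layer station is $\Omega(1/\log k)$.

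A crucial observation, required because the paper does not assume that the algorithm's transmissions across rounds are independent, is that each second-layer station contributes its first-local-round transmission in one specific reference round; thus the first-local-round interference events at distinct rounds in $R$ depend on disjoint subsets of the independent uniform activation variables and are therefore mutually independent. It follows that the probability that every one of the $X$ tail transmissions of $v$ collides is at least $\Omega(1/(\log k)^X)$, and combining this with $\Pr[X<2E(X)]>1/2$ and the blocking of the first part gives $\Pr[v\text{ fails}]\geq\Omega(1/(\log k)^{2E(X)})$. For algorithm $\cA$ to succeed whp (failure probability at most $1/k^{\eta}$) this forces $E(X)=\Omega(\log k/\log\log k)$, which is exactly the claim with $c_2$ chosen as a suitable function of $c_1$. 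The main obstacle I expect is precisely this independence justification: independence must be extracted from the adversary's uniform randomisation of activation times (and the fact that each second-layer station attempts a first-local-round transmission in only one reference round), rather than from the algorithm's per-round behaviour, which is allowed to be correlated.
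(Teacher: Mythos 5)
Your proposal follows essentially the same route as the paper's proof: instantiate the blocking construction of Lemma~\ref{l:lower-gen-3} on roughly half the stations so that Lemma~\ref{l:lower-gen-2} rules out any successful transmission on the prefix, observe that a round-one station $v$ therefore survives into the tail, scatter the remaining $\Omega(k)$ stations uniformly over the tail, and rerun the Markov-plus-collision argument of Lemma~\ref{l:lower-gen-1} there to force $\Omega(\log k/\log\log k)$ expected transmissions, hence the stated difference of $\varsigma$-values. The only cosmetic difference is that the paper invokes Lemma~\ref{l:lower-gen-3} with parameter $k/2$ and introduces $c_2$ to relate the two thresholds, whereas you reuse the $k$-station instance and note it consumes only $k/2+o(k)$ stations; your explicit attention to independence across tail rounds addresses a point the paper silently elides (though note each tail station's activation variable touches every round, so the cleanest fix is to partition the tail stations among $v$'s $O(\log k/\log\log k)$ transmission rounds rather than to claim disjoint supports).
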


\begin{proof}
Let $I(k/2)$ be the instance from Lemma \ref{l:lower-gen-3} for $k/2$ stations.
By Lemma~\ref{l:lower-gen-2}, for this instance, there are no successful transmissions in rounds $1,2,\ldots,\tau((k/2)/(c_1\log (k/2)\log\log (k/2)))$, whp. 
Note that for some constant $c_2$, we have $\tau((k/2)/(c_1\log (k/2)\log\log (k/2)))\ge \tau(k/(c_2\log k\log\log k))$.
Therefore, there are no successful transmissions in rounds 
$1,2,\ldots,\tau(k/(c_2\log k\log\log k))$, whp.
Consider a station $v$ that starts in round $1$ in instance $I(k/2)$.
It follows that in round $\tau(k/(c_2\log k\log\log k))$ the station is still 
running the protocol, whp. 

We then proceed analogously as in the proof of Lemma~\ref{l:lower-gen-1}.
We distribute randomly the remaining $k/2$ stations in the interval 
$[\tau(k/(c_2\log k\log\log k)),\tau(k)]$.
Station $v$ 
has 
some sequence of transmissions in this interval.
Each transmission is not successful with probability 
larger than
$(k/2)\cdot p(1)\cdot(1/\tau(k)) = \Omega(1/\log k)$.
Thus, in order to have a successful transmission with high probability,
station $v$ needs to transmit $\Omega(\log k/\log\log k)$ times
in the interval $[\tau(k/(c_2\log k\log\log k)),\tau(k)]$.

Therefore, the expected number of transmissions in the interval
$[\tau(k/(c_2\log k\log\log k)), \tau(k)]$,
which is $\varsigma(k) - \varsigma(k/(c_2\log k\log\log k))$, is 
$\Omega(\log k/\log\log k)$.
\end{proof}

Using Lemma~\ref{l:lower-gen-4} in a telescopic way, we can increase
the requirement on the sum of transmission probabilities of a single station
over the considered period $\tau(k)$.

\begin{lemma}
\label{l:lower-gen-5}
$\varsigma(k)=\Omega(\log^2 k/(\log\log k)^2)$.
\end{lemma}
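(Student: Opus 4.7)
The plan is to apply Lemma~\ref{l:lower-gen-4} iteratively in a telescoping fashion. Define the sequence $k_0 = k$ and, recursively, $k_{i+1} = k_i/(c_2 \log k_i \log\log k_i)$, where $c_2$ is the constant from Lemma~\ref{l:lower-gen-4}. Applied at each $k_i$ (that is large enough), the lemma gives
\[
\varsigma(k_i) - \varsigma(k_{i+1}) = \Omega\!\left(\frac{\log k_i}{\log\log k_i}\right),
\]
with the hidden constant being the absolute one inherited from Lemma~\ref{l:lower-gen-4}, independent of~$i$.

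The next step is to quantify how fast the sequence $k_i$ decreases, so as to show that we can iterate roughly $\log k/\log\log k$ times while still keeping $k_i$ large. Taking logarithms in the recurrence,
\[
\log k_{i+1} = \log k_i - \log c_2 - \log\log k_i - \log\log\log k_i \;\geq\; \log k_i - 2\log\log k,
\]
for $k$ large enough (using $k_i \le k$). Consequently, setting $M := \lfloor \log k/(4\log\log k)\rfloor$, we obtain $\log k_M \ge \log k - 2M\log\log k \ge (\log k)/2$, so $k_i \ge \sqrt{k}$ for every $0\le i\le M$. In particular $\log k_i/\log\log k_i = \Omega(\log k/\log\log k)$ uniformly over this range, with the same hidden constant for every $i$.

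Finally, using that $\varsigma(k_M)\geq 0$ (as it is a partial sum of the non-negative probabilities $p(j)$) and telescoping,
\[
\varsigma(k) \;\geq\; \varsigma(k) - \varsigma(k_M) \;=\; \sum_{i=0}^{M-1}\bigl(\varsigma(k_i) - \varsigma(k_{i+1})\bigr) \;\geq\; M \cdot \Omega\!\left(\frac{\log k}{\log\log k}\right) \;=\; \Omega\!\left(\frac{\log^2 k}{(\log\log k)^2}\right),
\]
which is the claimed bound.

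The main obstacle is the bookkeeping of the recurrence for $\log k_i$: one must verify that $k_i$ does not shrink too quickly, so that Lemma~\ref{l:lower-gen-4} continues to supply a term of size $\Omega(\log k/\log\log k)$ at every one of the $M = \Omega(\log k/\log\log k)$ iterations, with a single uniform constant. The logarithmic analysis above (losing at most $2\log\log k$ per step) handles this precisely, and it is crucial that the $\Omega$-constant from Lemma~\ref{l:lower-gen-4} is absolute and therefore does not degrade as we iterate.
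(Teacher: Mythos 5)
Your proof is correct and follows essentially the same route as the paper: a telescoping sum of $\Omega(\log k/\log\log k)$ increments supplied by Lemma~\ref{l:lower-gen-4}, iterated $\Omega(\log k/\log\log k)$ times. Your bookkeeping of the recursion $k_{i+1}=k_i/(c_2\log k_i\log\log k_i)$ (showing $k_i\ge\sqrt{k}$ throughout, so each increment is uniformly $\Omega(\log k/\log\log k)$) is in fact more careful than the paper's one-line sketch, which asserts the same facts without verification.
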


\begin{proof}
We can write down a telescoping sum, where $c_2$ is the constant determined in Lemma \ref{l:lower-gen-4}:
\begin{eqnarray*}
\varsigma(k) &=&  \left(\varsigma(k) - \varsigma(k/(c_2\log k\log\log k))\right) +\\
             & &  \left(\varsigma(k/(c_2\log k\log\log k)) - \varsigma(k/(c_2\log k\log\log k)^2)\right) +\\
             & &  \left(\varsigma(k/(c_2\log k\log\log k)^2) - \varsigma(k/(c_2\log k\log\log k)^3)\right) 
                  + \ldots  
\end{eqnarray*}
The thesis follows by observing that this sum has
$\Omega(\log k/\log\log k)$ terms, and the first half of these terms
are $\Omega(\log k/\log\log k)$ by Lemma~\ref{l:lower-gen-4}. 
\end{proof}

We now transform the lower bound on transmission probabilities
of a station over the considered period $[1,\tau(k)]$, 
proved in Lemma~\ref{l:lower-gen-5},
into the lower bound on the sums of transmission probabilities in rounds.
Using Lemma \ref{l:lower-gen-5}, one can construct an instance
$J(k)$ for which no successful transmission is likely.

\begin{lemma}
\label{l:lower-gen-6}
\tbc{For any sufficiently large integer $k$ and some constant $c^*>0$,
there is an instance $J(k)$, which can be used by an oblivious adversary,
such that
the lower bound $\hsigma[t] \geq\gamma\log k$ holds
in all the rounds $t\in [1,c^* k\log k/(\log\log k)^2]$.}
\end{lemma}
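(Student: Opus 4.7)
My plan is to construct the instance $J(k)$ using a two-phase scheme analogous to the construction inside the proof of Lemma~\ref{l:lower-gen-3}, but leveraging the sharper lower bound $\varsigma(k)=\Omega(\log^2 k/(\log\log k)^2)$ from Lemma~\ref{l:lower-gen-5} in place of the $\Omega(\log k/\log\log k)$ bound from Lemma~\ref{l:lower-gen-1} that was used there. The stronger guarantee on $\varsigma$ allows the sustaining phase to be realized at a proportionally smaller per-round activation rate, and therefore stretched out over a proportionally longer interval. In fact, the target length $c^*k\log k/(\log\log k)^2$ differs from the length $k/(c'\log\log k)$ achieved by Lemma~\ref{l:lower-gen-3} by exactly the factor $\log k/\log\log k$ by which $\varsigma$ has been improved in going from Lemma~\ref{l:lower-gen-1} to Lemma~\ref{l:lower-gen-5}.

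First I choose $k'$ such that $\tau(k')=o(k/\log k)$; by the contradictory latency hypothesis, any $k'$ of order $k(\log\log k)^2/\log^2 k$ (with a small enough multiplicative constant) works, and for such a $k'$ Lemma~\ref{l:lower-gen-5} gives $\varsigma(k')=\Omega(\log^2 k/(\log\log k)^2)$. Phase~1 then activates $\lceil \gamma\log k/p(1)\rceil$ stations in each round $t\in[1,\tau(k')]$. Each freshly activated station transmits with probability $p(1)$ in its first local round, so already this phase guarantees $\hat\sigma[t]\ge\gamma\log k$ throughout $[1,\tau(k')]$. The total number of stations consumed by Phase~1 is $O(\log k)\cdot\tau(k')=o(k)$, well within a budget of $k/2$.

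Phase~2 must then sustain $\hat\sigma[t]\ge\gamma\log k$ on $[\tau(k'),c^*k\log k/(\log\log k)^2]$ using the remaining stations. I activate stations at an amortized rate $\beta=\gamma\log k/\varsigma(k')=O((\log\log k)^2/\log k)$; at any round $t\ge \tau(k')$, the Phase-2 stations activated during the preceding $\tau(k')$ rounds contribute $\beta\cdot\varsigma(k')\ge\gamma\log k$ to $\hat\sigma[t]$, mirroring the telescoping calculation in the proof of Lemma~\ref{l:lower-gen-3}. With $L=c^*k\log k/(\log\log k)^2$, the total number of Phase-2 stations is $\beta L=O(c^*k)$, which stays below $k/2$ for a sufficiently small constant $c^*$.

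The main technical obstacle is that $\beta$ is strictly less than $1$ for large $k$, so the ``average'' rate must be realized by a pattern of integer batches rather than a constant number of activations per round as in Lemma~\ref{l:lower-gen-3}. Since the adversary is oblivious it fixes this pattern in advance after inspecting the algorithm's fixed sequence $p(1),p(2),\ldots$; the key property to ensure is that every sliding window of $\tau(k')$ consecutive rounds contains Phase-2 activations whose local-round contributions sum to at least $\gamma\log k$. Once this is established, Phase~1 covers the prefix $[1,\tau(k')]$, Phase~2 covers the suffix $[\tau(k'),L]$, and together they force $\hat\sigma[t]\ge\gamma\log k$ on the entire target interval. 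The remaining bookkeeping (choice of the constants $\gamma$ and $c^*$, and verification of the overall station budget) mirrors the end of the proof of Lemma~\ref{l:lower-gen-3}.
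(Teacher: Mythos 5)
Your overall architecture matches the paper's: Phase~1 (dense activations over $[1,\tau(k')]$ at rate $\gamma\log k/p(1)$ per round) is identical, and you correctly identify that the improved bound $\varsigma=\Omega(\log^2 k/(\log\log k)^2)$ from Lemma~\ref{l:lower-gen-5} is what lets Phase~2 stretch over the longer interval $[1,c^*k\log k/(\log\log k)^2]$ within the $k/2$ station budget. However, there is a genuine gap at the decisive step. You claim that activating stations ``at an amortized rate $\beta=\gamma\log k/\varsigma(k')$'' makes the Phase-2 stations activated in the preceding $\tau(k')$ rounds contribute $\beta\cdot\varsigma(k')$ to $\hsigma[t]$. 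That identity is only valid when every local age $i\in[1,\tau(k')]$ is represented by exactly $\beta$ stations, i.e., when $\beta$ stations are activated in \emph{every} round --- which is exactly what happens in Lemma~\ref{l:lower-gen-3} with $\beta=(c'\log\log k)/2\ge 1$. Here $\beta=O((\log\log k)^2/\log k)<1$, so any integer batching pattern represents only a sparse subset of local ages at each round $t$, and the contribution to $\hsigma[t]$ is a \emph{subsampled} sum $\sum_{i\in S_t}p(i)$ over the represented ages $S_t$, not a $\beta$-fraction of $\varsigma(k')$. Since the algorithm's sequence $p(1),p(2),\ldots$ is arbitrary, its mass can be concentrated on a small set of local rounds; a periodic or evenly-spread batching pattern can then misalign with that support for many values of $t$, driving the contribution to nearly zero. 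Your sentence ``Once this is established\ldots'' therefore skips over the entire content of the lemma: you name the required sliding-window property but give no construction achieving it.

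The paper closes precisely this hole by abandoning determinism in Phase~2: the remaining $k/2$ activation times are drawn uniformly and independently at random from $[1,ck\log k/(\log\log k)^2]$, so that for each fixed round $t$ the contribution $\hsigma[t]$ is a sum of independent terms with expectation $\frac{k}{2}\cdot\frac{(\log\log k)^2}{ck\log k}\cdot\varsigma(k/\log^2 k)\ge\frac{d}{c}\log k$, regardless of how the mass of $p$ is distributed. A Chernoff bound gives failure probability at most $k^{-3}$ per round for sufficiently small $c=c^*$, a union bound over the at most $k^2$ rounds gives overall failure probability at most $1/k$, and the probabilistic method yields the existence of a fixed instance $J(k)$ usable by the oblivious adversary. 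To repair your argument you would need either to reproduce this randomized construction or to exhibit a deterministic batching pattern whose every $\tau(k')$-window subsample of $p$ sums to $\Omega(\log k)$ --- and the latter is not available without further assumptions on $p$.
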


\begin{proof}
We know, by the contradictory hypothesis, that $\tau(k/\log^2 k) = o(k/\log k)$.
By Lemma \ref{l:lower-gen-5}, there
is a constant $d$ such that, for sufficiently large $k$,
\begin{equation}\label{eq:sig}
  \varsigma(k/\log^2 k) \geq 2d\log^2 k/(\log\log k)^2
	\ .
\end{equation} 
The instance can be constructed as follows.

In each round $t\in [1,\tau(k/\log^2 k)]$ the oblivious adversary switches on
$\gamma\log k/p(1)$ stations.
This assures $\hsigma[t] \geq\gamma\log k$ in these rounds. 
One can use (no more than) $k/2$ stations in these rounds, 
for sufficiently large $k$.
Next, one distributes the activation times of the other $k/2$ stations.
For any constant $c > 0$, which we will fix later on,
the activation times of the other $k/2$ stations can be distributed
uniformly and independently at random in rounds 
$t\in [1,c k\log k/(\log\log k)^2]$. 
Recalling Equation~(\ref{eq:sig}), the average sum of probabilities $E(\hsigma[t])$
for these $k/2$ stations in rounds 
$t\in[\tau(k/\log^2 k),ck\log k/(\log\log k)^2]$
is at least
\[
\frac{k}{2} \cdot \frac{(\log\log k)^2}{ck\log k} \cdot \frac{2d\log^2 k}{(\log\log k)^2} = \frac{d}{c} \cdot \log k
\ .
\]

Letting $\delta = 1 - (\gamma c)/d$, we have
$(1-\delta) E(\hsigma[t]) = \gamma \log k$. Hence, by the Chernoff bound,
the probability that 
$\hsigma[t] < \gamma \log k$ in any round $t \in [\tau(k/\log^2 k),ck\log k/(\log\log k)^2]$ is at most
\[
   e^{-\delta^2 E(\hsigma[t])/2} = 
   e^{-\delta^2  \frac{d}{2c} \cdot \log k}=
   k ^{-(1 - \frac{\gamma c}{d})^2 \frac{d}{2c\ln 2}}.
\]
For a sufficiently small $c>0$, which we denote $c^*$,
the latter quantity can be made less than $k^{-3}$.
Applying the union bound
over all the rounds of the interval $[\tau(k/\log^2 k),ck\log k/(\log\log k)^2]$,
we get that the probability that there exists a 
round $t$ in that interval
such that $\hsigma[t] < \gamma \log k$ is
less than $k^{-1}$, for sufficiently large $k$.


Hence, drawing a random instance using this procedure we get the 
desired instance $J(k)$ 
with probability $1-1/k > 0$,
which proves that it exists.
\end{proof}

The proof of the main result is now straightforward.
\medskip

\noindent
{\bf Proof of Theorem~\ref{t:lower-gen}}.
The contradiction is obtained by applying Lemma~\ref{l:lower-gen-2} to the rounds
specified in Lemma~\ref{l:lower-gen-6}.
Since no station transmits in instance $J(k)$ whp $1-1/k$,
station $v$ cannot achieve latency $o(k\log k/(\log\log k)^2)$ whp.

\section{A non-adaptive algorithm for unknown contention}\label{s:no-GC-no-k-no-a-algorithm}

In this section we describe a non-adaptive protocol that resolves the conflicts without having any a priori knowledge on the number $k$ of contenders.
The algorithm can be formally described as follows. Starting from the time at which it is activated, any station executes the following protocol \NAdaptiveU\ (see Algorithm \ref{alg:non-adaptive-no-k}).
It takes a constant parameter $b>0$ in input that determines the probability of success of the algorithm (the larger $b$, the larger the probability that the algorithm succeeds).

\begin{algorithm}[t!]
	\caption{\NAdaptiveU}
	\label{alg:non-adaptive-no-k}
	\begin{algorithmic}[1]
    \For{$j=3,4,5,\ldots, \infty$}
    	\For{$b$ rounds}
               \State transmit with probability $\frac{\ln j}{j}$
	        \label{l:FA-2}
	\EndFor
    \EndFor
    \end{algorithmic}
\end{algorithm}


\begin{figure}[t!]
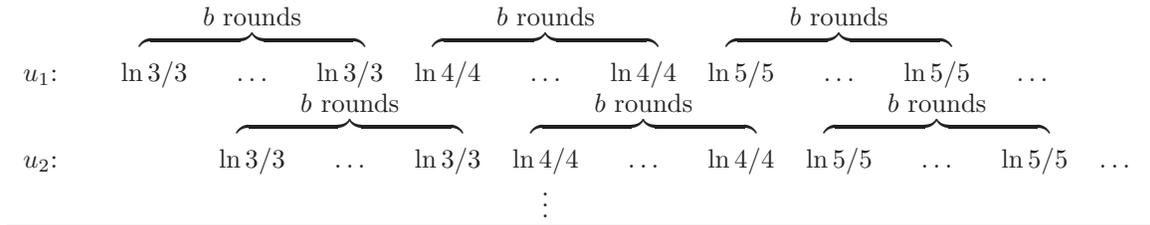

\centering

\begin{tabular}{rcccccccccccc}
\toprule
  & & \multicolumn{3}{c}{$\overbrace{\hspace{3cm}}^{\mbox{$b$ rounds}}$} &  
  \multicolumn{3}{c}{$\overbrace{\hspace{3cm}}^{\mbox{$b$ rounds}}$} &
  \multicolumn{3}{c}{$\overbrace{\hspace{3cm}}^{\mbox{$b$ rounds}}$} & \\
 $u_1$: &  & $\ln 3/3$ & $\ldots$ & $\ln 3/3$ & $\ln 4/4$ & $\ldots$ & $\ln 4/4$ & $\ln 5/5$ & $\ldots$ & $\ln 5/5$ & \ldots\\
  & & & \multicolumn{3}{c}{$\overbrace{\hspace{3cm}}^{\mbox{$b$ rounds}}$} &  
  \multicolumn{3}{c}{$\overbrace{\hspace{3cm}}^{\mbox{$b$ rounds}}$} &
  \multicolumn{3}{c}{$\overbrace{\hspace{3cm}}^{\mbox{$b$ rounds}}$} \\
 $u_2$: & & & $\ln 3/3$ & $\ldots$ & $\ln 3/3$ & $\ln 4/4$ & $\ldots$ & $\ln 4/4$ & $\ln 5/5$ & $\ldots$ & $\ln 5/5$ & \ldots\\ 
 
          &   &   &   &   &   & \vdots  &   &   &   &   &  \\  
          \hline 
\end{tabular}

\caption{Two stations executing the first 3 iterations of the inner for-loop of Protocol \NAdaptiveU .}\label{tab:protocol4}
\end{figure}

%

Figure~\ref{tab:protocol4} shows the sequence of transmission probabilities 
used by two stations
during the execution of the first three iterations of the inner for-loop. 
As can be seen, as a consequence of not being synchronized, 
in many rounds they transmit with different probabilities.


We start with proving a weaker bound on the complexity of protocol \NAdaptiveU\ that holds when acknowledgments are not allowed and
consequently a station does not switch-off after a successful transmission. This will give a first glimpse of our proof's technique. Later, in order to get our final result, we will refine the analysis by considering the acknowledgments and therefore exploiting the benefit of switching-off a station as soon as it successfully transmits.
\tbc{Both variants are successful against an adaptive adversary.}

\subsection{Analysis of the protocol without acknowledgments.}\label{no_ack}

In this subsection we will prove the following theorem.

\begin{theorem} 
\label{t:full-1}
All stations executing protocol \NAdaptiveU\, for a sufficiently large constant $b$, will transmit successfully
within  $O(k\ln^2 k)$ time rounds whp, even in the case where 
acknowledgments are not allowed.
\gdm{The result holds even against an adaptive adversary.}
\end{theorem}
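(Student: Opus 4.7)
The plan is to fix an arbitrary station $v$ activated at round $t_v$, focus on its local window $I_v = [t_v+1,\, t_v+T]$ with $T = c' k\ln^2 k$ for a sufficiently large constant $c'$, prove that $v$ succeeds in $I_v$ with probability at least $1 - k^{-(\eta+1)}$, and then apply a union bound over the $k$ stations. Because acknowledgements are disallowed here, every station stays active throughout $I_v$, so $A[t] = \hat A[t]$; this only inflates $\hat\sigma[t]$ and keeps the bookkeeping clean.

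The first structural step will be to show that many rounds of $I_v$ have a small transmission-probability sum. Each active station $w$ contributes to $\sum_{t\in I_v}\hat\sigma[t]$ at most $\sum_{j=3}^{T/b+2} b\ln j/j = O(b\ln^2(T/b)) = O(b\ln^2 k)$, since its transmission probability during its $j$-th block is $\ln j/j$ across $b$ rounds. Summing over the at most $k$ activated stations yields a deterministic bound $\sum_{t\in I_v}\hat\sigma[t]\le C b k\ln^2 k$ for some constant $C$. Calling a round $t\in I_v$ \emph{good} if $\hat\sigma[t]\le M$, for a constant $M$ chosen so that $c'M \ge 4Cb$, Markov's inequality then yields at most $T/4$ bad rounds, so at least $3T/4$ rounds of $I_v$ are good for every realization of the adversary's (even adaptive) activation strategy.

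In every good round $t$, since $q_w[t]=\ln j/j\le\ln 3/3<1/2$ for all active $w$, the standard inequality $(1-x)^{1/x}\ge 1/4$ (valid for $x\le 1/2$) gives $\prod_{w\ne v}(1-q_w[t])\ge (1/4)^{\hat\sigma[t]}\ge (1/4)^M$, so $v$'s probability of transmitting successfully in round $t$ is at least $(1/4)^M q_v[t]$. Because $q_v[t]=\ln j/j$ is non-increasing in $t$, the worst placement of the (at most $T/4$) bad rounds for minimizing $\sum_{t\text{ good}}q_v[t]$ is at the very beginning of $I_v$, so using $\int \ln x/x\,dx = \ln^2 x/2$ a direct calculation yields
\[
\sum_{t\in I_v,\, t\text{ good}} q_v[t] \;\ge\; \tfrac{b}{2}\bigl(\ln^2(T/b)-\ln^2(T/(4b))\bigr) \;=\; b\ln 4\cdot\ln(T/b) - O(b) \;=\; \Omega(b\ln k),
\]
which holds deterministically on every realization of the adversary's activations.

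Finally, since the algorithm is non-adaptive and its coin flips at different rounds are mutually independent, conditioning on the natural filtration $F_t$ generated by the adversary's activations and past transmissions gives $\Pr[v\text{ succeeds at }t\mid F_t]\ge (1/4)^M q_v[t]\,\mathbf{1}[t\text{ good}]$. A standard tower-of-expectations estimate then produces
\[
\Pr[v\text{ fails in }I_v]\;\le\;\mathbb E\!\Bigl[\exp\!\Bigl(-\!\!\sum_{t\in I_v}\!\Pr[v\text{ succeeds at }t\mid F_t]\Bigr)\Bigr]\;\le\;\exp\!\bigl(-(1/4)^M\,\Omega(b\ln k)\bigr),
\]
which falls below $k^{-(\eta+1)}$ for $b$ chosen sufficiently large, and a union bound over the $k$ stations completes the argument, also against an adaptive adversary. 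The main obstacle will be the tightness of the third step: both $\sum_{t\in I_v}q_v[t]$ and the worst-case $T/4$-prefix sum are asymptotically $\Theta(b\ln^2 k)$, so their leading $b\ln^2 k/2$ terms cancel and one must carefully extract the $\Theta(b\ln k)$ surplus from the difference of $\ln^2(T/b)$ and $\ln^2(T/(4b))$; this precise $\Theta(\ln k)$ margin is exactly what the union bound over $k$ stations consumes, and it also explains why the window length $T$ must be $\Theta(k\ln^2 k)$ in this no-acknowledgement setting.
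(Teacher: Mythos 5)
Your proposal is correct and follows essentially the same route as the paper's proof: bound $\sum_{t}\hat{\sigma}[t]$ over a window of length $\Theta(bk\ln^2 k)$ by $k\cdot s(\cdot)=O(bk\ln^2 k)$ so that a constant fraction of the rounds have bounded probability mass (Facts~\ref{f:full-1}--\ref{f:full-2} and Lemma~\ref{f:full-3}), apply the per-round success bound of Lemma~\ref{Fact2} in those rounds, multiply over the good rounds, and union-bound over the $k$ stations. Two small remarks: where you extract $\Omega(b\ln k)$ from the delicate difference $s(T)-s(T/4)$ of two quantities that are each $\Theta(b\ln^2 k)$, the paper simply uses the uniform bound $q_v[t]\ge p(br)=\Theta((\ln r)/r)$ over the whole window, so that $br/2$ good rounds immediately give $\Omega(b\ln k)$ and the cancellation you worry about in your closing paragraph never arises; and your displayed inequality $\Pr[v\text{ fails}]\le\E\bigl[\exp\bigl(-\sum_t\Pr[S_t\mid F_t]\bigr)\bigr]$ is not a valid pointwise domination (an indicator of past failure does not lie below the corresponding exponential factor), although the conclusion it is meant to deliver does hold via the standard supermartingale argument applied to the $F_t$-predictable minorant $\mathbf{1}[t\text{ good}]\cdot q_v[t]/4^{M}$ of the conditional success probability.
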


All the following results refer to an execution of protocol \NAdaptiveU\ for an arbitrary value of its input parameter $b$.
We start with the following two technical facts.


\begin{fact}
\label{f:full-1}
For a sufficiently large $i$, we have $s(i) < b\ln^2 (i/b)$.
\end{fact}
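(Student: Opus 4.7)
\medskip
\noindent\textbf{Proof plan for Fact~\ref{f:full-1}.}

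The plan is to compute $s(i)=\sum_{j=1}^{i} p(j)$ directly from the structure of Protocol \NAdaptiveU{} and compare the resulting sum with the claimed bound $b\ln^2(i/b)$. Recall that the outer loop iterates over $j=3,4,5,\ldots$ and in each iteration the inner loop assigns transmission probability $\ln j / j$ for exactly $b$ consecutive rounds. Hence if round $i$ falls in iteration $L=L(i)$ of the outer loop, we have $L(i)=\lfloor (i-1)/b\rfloor + 3$, and
\[
s(i) \;\le\; \sum_{j=3}^{L(i)} b\,\frac{\ln j}{j}.
\]

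The main estimate is to bound the discrete sum $\sum_{j=3}^{L(i)} (\ln j)/j$ by an integral. Since $(\ln x)/x$ is decreasing for $x\ge 3>e$, the sum is at most $(\ln 3)/3 + \int_{3}^{L(i)} (\ln x / x)\,dx = (\ln 3)/3 + \tfrac{1}{2}\bigl(\ln^2 L(i)-\ln^2 3\bigr)$. Therefore
\[
s(i) \;\le\; \frac{b}{2}\ln^2 L(i) \;+\; O(b),
\]
with the hidden constant absolute.

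Finally, I substitute the bound $L(i)\le i/b + 3$. For sufficiently large $i$ (in particular $i\ge 2b$), $L(i)\le 2i/b$, so $\ln L(i)\le \ln(i/b)+\ln 2$, whence
\[
\ln^2 L(i) \;\le\; \ln^2(i/b) \;+\; 2\ln 2 \cdot \ln(i/b) \;+\; \ln^2 2.
\]
Plugging this into the previous display yields
\[
s(i) \;\le\; \frac{b}{2}\ln^2(i/b) \;+\; b\ln 2\cdot \ln(i/b) \;+\; O(b).
\]
Since the leading term is $\tfrac{b}{2}\ln^2(i/b)$, which is strictly smaller than $b\ln^2(i/b)$ by an entire $\tfrac{b}{2}\ln^2(i/b)$, for any sufficiently large $i$ this slack swallows the lower-order contributions $b\ln 2\cdot\ln(i/b)+O(b)$. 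Concluding, $s(i)<b\ln^2(i/b)$ for all sufficiently large $i$.

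The only mildly delicate point is the conversion of the sum to an integral together with the shift from $L(i)$ to $i/b$, both of which introduce additive logarithmic slack; I expect no real obstacle here because the factor $\tfrac{1}{2}$ that comes out of the integral $\int (\ln x/x)\,dx=\tfrac{1}{2}\ln^2 x$ gives ample room compared with the target constant $1$ in front of $b\ln^2(i/b)$.
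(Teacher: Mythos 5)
Your proposal is correct and follows essentially the same route as the paper: bound $s(i)$ by $b$ times the partial sum of $\ln j/j$, compare that sum with $\int \ln x/x\,dx = \tfrac12\ln^2 x$, and use the resulting factor-$\tfrac12$ slack to absorb the lower-order terms and the shift between the iteration index and $i/b$. Your version is in fact slightly more careful than the paper's (explicit first term in the sum-to-integral comparison and explicit handling of the index offset), but the argument is the same.
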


\begin{proof}

We have,
\begin{eqnarray*}
      s(i) &\le& b\sum_{j=3}^{\lceil i/b \rceil} \frac{\ln j}{j}\\
           &\le& b\int_{j=2}^{\lceil i/b \rceil} \frac{\ln x}{x}\, dx 
               \;\;\; \text{ (for $i > 2b)$}\\
           &=&   b\cdot \frac{\ln^2(\lceil i/b \rceil) - \ln^2(2)}{2}\\
           &\le& b\ln^2( i/b ).
\end{eqnarray*}
%
\end{proof}

\begin{fact}
\label{f:full-2}
If $r = 4\cdot k\ln^2 k$, 
then $k\ln^2 r<r/2$ for a sufficiently large $k$.
\end{fact}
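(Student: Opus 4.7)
The plan is to substitute $r=4k\ln^2 k$ into both sides of the claimed inequality and reduce it to an easy statement about the asymptotic growth of $\ln r$ versus $\ln k$.

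First I would note that the desired inequality $k\ln^2 r < r/2 = 2k\ln^2 k$ is equivalent (after dividing both sides by the positive quantity $k$) to $\ln^2 r < 2\ln^2 k$, i.e.\ to $\ln r < \sqrt{2}\,\ln k$. So the whole task reduces to showing this single inequality for sufficiently large $k$.

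Next I would expand
\[
\ln r \;=\; \ln\bigl(4k\ln^2 k\bigr) \;=\; \ln 4 + \ln k + 2\ln\ln k .
\]
Since $\ln\ln k = o(\ln k)$ and $\ln 4$ is a constant, we have $\ln r = \ln k + o(\ln k)$, so for every constant $c>1$ (in particular $c=\sqrt{2}$) the inequality $\ln r < c\ln k$ holds for all sufficiently large $k$. Plugging $c=\sqrt{2}$ back through the equivalent forms yields $k\ln^2 r < 2k\ln^2 k = r/2$, as required.

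There is no real obstacle here; the only thing to be a bit careful about is to make explicit that the lower-order terms $\ln 4$ and $2\ln\ln k$ are eventually dominated by any $(\sqrt{2}-1)\ln k$, so that the threshold on $k$ is well-defined. The argument is purely asymptotic and uses no probabilistic input, only the monotonicity of $\ln$ and standard rate-of-growth comparisons.
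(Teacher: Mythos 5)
Your proof is correct and follows essentially the same route as the paper: substitute $r=4k\ln^2 k$ and show that $\ln^2 r < 2\ln^2 k$ for large $k$. If anything, your version is the more careful one --- the paper's own displayed computation writes $k\ln^2(4k\ln^2 k)$ as $k\ln^2 4 + k\ln^2 k + 2k\ln^2\ln k$, which incorrectly treats the square of the logarithm of a product as a sum of squares, whereas your reduction to the equivalent inequality $\ln r < \sqrt{2}\,\ln k$ together with the correct expansion $\ln r = \ln 4 + \ln k + 2\ln\ln k$ avoids that slip entirely.
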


\begin{proof}

We have,
$k\ln^2 r = k\ln^2(4\cdot k\ln^2k) = k\ln^2 4 + k \ln^2 k + 2k\ln^2 \ln k <  2k \ln^2 k = r/2$; where the last inequality holds for a sufficiently large $k$.
\end{proof}

From now on we fix an arbitrary station $v$ with the aim of proving that such a station will transmit successfully within
$O(k\ln^2 k)$ time rounds whp. All the following rounds $t$ are referred to a reference clock 	
corresponding to $v$'s local clock, \textit{i.e.} starting at the wake-up time of $v$.
\tbc{Due to the adaptive adversary, the rounds $t:\hsigma[t]<1$, which are most favourable for successful transmissions of $v$ are not known in advance.
Nevertheless the next lemma guarantees they are half of the rounds in the aforementioned period.}

\begin{lemma}
\label{f:full-3}
Let $r = 4\cdot k\ln^2 k$. In at least $br/2$ rounds $t\in[1,br]$ we have  $\hsigma[t] < 1$.
\end{lemma}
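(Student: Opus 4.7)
The plan is a straightforward double-counting plus averaging argument; no probabilistic concentration is needed, because $\hsigma[t]$ is a deterministic quantity (a sum of marginal transmission probabilities dictated by the non-adaptive protocol and the adversary's wake-up schedule). The overall idea is to bound the total $\sum_{t=1}^{br}\hsigma[t]$ from above and then deduce by a counting argument that most of the individual terms must be small.

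First I would swap the order of summation:
\[
\sum_{t=1}^{br} \hsigma[t] \;=\; \sum_{t=1}^{br}\sum_{v \in \hat{A}[t]} p(t-t_v) \;=\; \sum_{v \in \hat{A}[br]} \sum_{t \,:\, t_v < t \le br} p(t-t_v).
\]
For any fixed station $v$, the inner sum ranges over a subset of the first $br$ values of the sequence $p(1), p(2), \ldots$, and is therefore at most $s(br)$, uniformly in the activation time $t_v$ chosen by the (even adaptive) adversary.

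Next I would apply Fact~\ref{f:full-1} with $i = br$ to obtain $s(br) < b\ln^2 r$, and then Fact~\ref{f:full-2} to obtain $k\ln^2 r < r/2$. Combined with $|\hat{A}[br]| \le k$, these give
\[
\sum_{t=1}^{br} \hsigma[t] \;\le\; k\cdot s(br) \;<\; k\cdot b\ln^2 r \;<\; b\cdot \tfrac{r}{2}.
\]
A Markov-style counting step then finishes the proof: if strictly more than $br/2$ rounds $t \in [1,br]$ satisfied $\hsigma[t] \ge 1$, we would have $\sum_{t=1}^{br} \hsigma[t] > br/2$, contradicting the display above. Hence at most $br/2$ rounds have $\hsigma[t] \ge 1$, so at least $br/2$ rounds in $[1,br]$ satisfy $\hsigma[t] < 1$.

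The main ``obstacle'' is really only the bookkeeping in the order swap: different stations have different activation offsets $t_v$ (some possibly before the reference time $0$, since the reference clock is pinned to station $v$'s local clock), and one must check that uniformly bounding each station's contribution by $s(br)$ is legitimate. Once that is observed, the entire argument reduces to plugging Facts~\ref{f:full-1} and~\ref{f:full-2} into an averaging inequality.
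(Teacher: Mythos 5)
Your proposal is correct and is essentially identical to the paper's own proof: the same swap of summation order, the same bound $\sum_{t\in[1,br]}\hsigma[t]\le k\cdot s(br)$, the same application of Facts~\ref{f:full-1} and~\ref{f:full-2}, and the same averaging/counting conclusion. The only addition is your explicit remark about the bookkeeping of activation offsets, which the paper leaves implicit.
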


\begin{proof}
Let $r = 4\cdot k\ln^2 k$. We have,

\begin{eqnarray*}
	\sum_{t\in [1,br]} \hsigma[t] &=& \sum_{t\in [1,br]} \sum_{w\in \hat{A}[t]} q_w[t] \\
	                              &=&  \sum_{w\in \hat{A}[t]} \sum_{t\in [1,br]} p(t-t_w) \\ 
	                              &\le& k\cdot s(br)\\
	                              &<& bk\ln^2 r  \mbox{ (by Fact~\ref{f:full-1})} \\
	                              &<& br/2 \mbox{ (by Fact~\ref{f:full-2})},
\end{eqnarray*}
where the last two inequalities hold for a sufficiently large $k$.
Thus, in at least $br/2$ rounds $t\in[1,br]$, the sum $\hsigma[t]$ must be smaller than 1.
\end{proof}



Lemma~\ref{f:full-3} holds for an arbitrary value of the input parameter $b$. Now,
in order to conclude the proof of Theorem~\ref{t:full-1}, we will show that, for a suitable choice of 
the constant parameter 
$b$, protocol \NAdaptiveU\ allows any station to transmit successfully whp within $br$ rounds, with $r = 4k\ln^2 k$.

\medskip
{\bf Proof of Theorem~\ref{t:full-1}:}
Consider any station $v$ and
let $r = 4\cdot k\ln^2 k$. 
By Lemma \ref{f:full-3}, there are at least $br/2$ rounds $t\in [1, br]$ in which 
$\hsigma[t]<1$.
In view of the inequality $\sigma[t] \leq \hsigma[t]$, it follows that
there are at least $br/2$ rounds $t\in [1, br]$ in which $\sigma[t]<1$.

By Lemma~\ref{Fact2}, the probability
that $v$ has a successful transmission at any of these rounds $t$ is at least
$q_v[t]/4 = p(t-t_v)/4 \ge p(br)/4$.
Hence, the probability that $v$ will not manage to transmit successfully in rounds $1,2,\ldots,br$, is at most
\[
\left(1-\frac{p(br)}{4}\right)^{br/2}
=
\left(1-\frac{\ln r}{4r}\right)^{br/2}
<
e^{-(b/8)\ln r}
<
k^{-b/8}
\ .
\]

This last value can be made smaller than 
$k^{-\eta-1}$, for any predefined parameter $\eta > 0$,
by choosing $b$ sufficiently large.
Applying the union bound over all contending stations 
we can derive that
the probability that any of them will not transmit successfully within 
the first $br$ rounds of its activity is less than $k^{-\eta}$.
This finally proves that our protocol \NAdaptiveU\ guarantees latency 
$br = O(k\ln^2 k)$ whp.
\qed

\subsection{Analysis of the protocol with acknowledgments.}\label{with_ack}

In this subsection we prove that if we allow each station to switch-off after getting an acknowledgment of its own successful transmission,
we can improve the 
performance guarantees of protocol \NAdaptiveU\ by a factor of $\Omega(\log\log k)$.


\begin{theorem}
\label{t:full-2}
All stations executing protocol \NAdaptiveU\, for a sufficiently large constant $b$, will transmit successfully
within $O\left(k\frac{\ln^2 k}{\ln\ln k}\right)$ time rounds whp. \gdm{The result holds even against an adaptive adversary.}
\end{theorem}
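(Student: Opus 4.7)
The approach is to mirror the proof of Theorem~\ref{t:full-1} with the sharper choice $r = c\, k\ln^2 k/\ln\ln k$, where the constant $c$ is tuned jointly with $b$. Fix an arbitrary station $v$ and use a reference clock anchored at $v$'s activation. If we can produce at least $br/2$ \emph{good} rounds $t\in[1,br]$ (meaning $\sigma[t]<1$), then Lemma~\ref{Fact2} gives $\Pr[v\text{ succeeds at }t]\geq p(br)/4 = \ln r/(4r)$ in each good round, so
\[
\left(1-\frac{p(br)}{4}\right)^{br/2} \le \exp\!\bigl(-b\ln r/8\bigr) \le k^{-b/8},
\]
and a union bound over the $k$ stations closes the theorem for $b$ sufficiently large, exactly as at the end of the proof of Theorem~\ref{t:full-1}.

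The only new ingredient needed is the existence of $br/2$ good rounds. The crude route used in Theorem~\ref{t:full-1}, namely bounding $\sum_{t\in[1,br]}\hsigma[t]\le k\cdot s(br)=O(bk\ln^2 k)$ via Fact~\ref{f:full-1} and then invoking Markov, is now \emph{insufficient}: the target $br/2=\Theta(bk\ln^2 k/\ln\ln k)$ is smaller than $k\cdot s(br)$ by a factor of $\ln\ln k$, so the analogue of Fact~\ref{f:full-2} fails. The missing $\ln\ln k$ factor must therefore be extracted from the gap between $\hsigma[t]$ (summed over every activated station) and $\sigma[t]$ (summed only over still-alive stations); this gap is opened up precisely by the acknowledgments, which allow stations to switch off upon success.

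To extract this factor I would proceed by a staged/bootstrap argument: partition $[1,br]$ into a small number of phases and prove inductively that by the start of each successive phase a constant fraction of the previously active stations have already been acknowledged and dropped out of $A[t]$. In each phase, a Chernoff bound on the count of successful transmissions, powered by the per-round success probability $\ge q_v[t]/4$ from Lemma~\ref{Fact2}, yields the halving of the alive set whp; the smaller alive count then re-certifies via a refined Markov estimate on $\sigma[t]$ (rather than $\hsigma[t]$) that a linear fraction of the next phase still consists of good rounds. The delicate point, and the main obstacle, is to choose the phase lengths so that this induction closes while their sum stays $O(r)$: this is feasible because the sublinearly decreasing probabilities $p(j)=\ln j/j$ keep $\sigma[t]=O(1)$ for $k'$ alive stations as soon as their local times exceed $\Theta(k'\ln k')$, so later phases can be much shorter than the first and one avoids a logarithmic blow-up. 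A union bound over the $O(\ln\ln k)$ phases is absorbed by the $k^{-\Omega(b)}$ per-phase failure probability; once $br/2$ good rounds are secured, the probability bound above completes the proof.
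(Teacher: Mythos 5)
Your framing of the problem is right: the endgame (Lemma~\ref{Fact2} on a linear fraction of good rounds, then a union bound over stations) is exactly the paper's, and you correctly diagnose that the crude bound $\sum_{t}\hsigma[t]\le k\cdot s(br)$ falls short by a $\ln\ln k$ factor that must be recovered from switch-offs. But the mechanism you propose to extract that factor --- a phased induction in which the alive set halves from phase to phase --- is not carried out, and as sketched it has two genuine obstacles. First, you power the per-phase Chernoff bound with ``the per-round success probability $\ge q_v[t]/4$ from Lemma~\ref{Fact2}'', but that lemma only applies in rounds where $\sigma[t]<1$; the rounds you need to clear out are precisely those where $\sigma[t]\ge 1$, so the inductive step is circular as stated. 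Second, against an adaptive adversary the alive set is not monotone: fresh stations, transmitting with the large initial probabilities $\ln 3/3,\ldots$, can be injected at any moment, so ``a constant fraction of the previously active stations have been acknowledged'' does not by itself control $\sigma[t]$ in the next phase. You yourself flag the phase-length bookkeeping as ``the main obstacle'' without resolving it, and that is exactly where the proof would have to live.

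The paper avoids phases entirely with a single counting argument. By the same Markov/averaging step as in Theorem~\ref{t:full-1}, but with the sharper Fact~\ref{f:full-5}, at least $br/2$ rounds in $[1,br]$ satisfy $\sigma[t]\le b_1\ln\ln k$ (Lemma~\ref{f:full-6}). For each ``medium'' round with $1<\sigma[t]\le b_1\ln\ln k$, the success probability is at least $\sigma[t]\,4^{-\sigma[t]}\ge b_1\ln\ln k\cdot 4^{-b_1\ln\ln k}\ge 16k/(br)$ for $b_1$ chosen small enough (Lemma~\ref{f:full-7}) --- note this lower bound does not rely on Lemma~\ref{Fact2}. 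Coupling these rounds to an i.i.d.\ Bernoulli$(16k/(br))$ sequence of length $br/4$, a Chernoff bound gives more than $k$ ones whp, while the actual number of successes is at most $k$ because each station succeeds once and switches off; hence whp fewer than $br/4$ rounds are medium, leaving at least $br/4$ rounds with $\sigma[t]\le 1$ (Lemma~\ref{f:full-10}). This global ``at most $k$ successes versus expected $\Omega(k)$ successes'' contradiction is the idea your proposal is missing; the phased halving is neither needed nor, as written, provable.
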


We start with the following technical fact.

\begin{fact}
\label{f:full-5}
Let $b_1 > 0$ be a constant.

If $r= \frac{2k\ln^2 k}{b_1\ln\ln k}$, then
$k\ln^2 r<  \frac{b_1r\ln\ln r}{2}$ for a sufficiently large $k$.

\end{fact}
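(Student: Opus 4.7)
The plan is to reduce the claimed inequality to a purely analytic statement about $\ln r$ and $\ln \ln r$, then verify it by asymptotic expansion, following the same elementary style as the proof of Fact \ref{f:full-2}.

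First I would substitute $r = \frac{2k\ln^2 k}{b_1 \ln\ln k}$ into the right-hand side to eliminate the explicit $b_1$. Since
\[
\frac{b_1 r \ln\ln r}{2} \;=\; \frac{b_1}{2}\cdot\frac{2k\ln^2 k}{b_1 \ln\ln k}\cdot\ln\ln r \;=\; \frac{k\ln^2 k\cdot\ln\ln r}{\ln\ln k},
\]
the inequality $k\ln^2 r < \frac{b_1 r \ln\ln r}{2}$ is equivalent to
\[
\ln^2 r \cdot \ln\ln k \;<\; \ln^2 k \cdot \ln\ln r.
\]
So the task reduces to comparing $\ln^2 r/\ln^2 k$ with $\ln\ln r/\ln\ln k$, with no further dependence on $b_1$ (beyond the shift it produces inside $\ln r$).

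Next I would expand $\ln r$ using $\ln(xy/z) = \ln x + \ln y - \ln z$:
\[
\ln r \;=\; \ln(2/b_1) + \ln k + 2\ln\ln k - \ln\ln\ln k.
\]
Since $\ln\ln\ln k$ and $\ln(2/b_1)$ are negligible compared to $\ln\ln k$, this gives $\ln r = \ln k + O(\ln\ln k)$, so both $\ln r \sim \ln k$ and, taking one more logarithm, $\ln\ln r \sim \ln\ln k$. Thus both sides of the reduced inequality are asymptotically of order $\ln^2 k \cdot \ln\ln k$, and the content of the lemma is a statement about the lower-order corrections.

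The main technical step is to track these corrections with enough precision. Writing $\ln r = \ln k\bigl(1 + \varepsilon_k\bigr)$ with $\varepsilon_k = (2\ln\ln k)/\ln k + O(1/\ln k)$, one gets $\ln^2 r/\ln^2 k = 1 + 2\varepsilon_k + \varepsilon_k^2$, while $\ln\ln r/\ln\ln k = 1 + \ln(1+\varepsilon_k)/\ln\ln k$; expanding $\ln(1+\varepsilon_k)$ as a power series lets one compare the two corrections directly. Once these estimates are in hand, the required strict inequality follows for all sufficiently large $k$, uniformly in any fixed constant $b_1 > 0$.

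The main obstacle, as in Fact \ref{f:full-2}, is purely bookkeeping: the leading terms on the two sides cancel, so one has to keep track of the next-order contributions from the $2\ln\ln k$ summand inside $\ln r$. This is routine but requires care; no probabilistic or structural argument is involved, only calculus-style asymptotics.
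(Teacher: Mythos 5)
Your reduction of the claim to $\ln^2 r\cdot\ln\ln k<\ln^2 k\cdot\ln\ln r$ and your expansion $\ln r=\ln k+2\ln\ln k-\ln\ln\ln k+\ln(2/b_1)$ are both correct, and in fact more careful than the paper's own one-line argument (which rests on the assertion that $\ln r<\ln k$ for large $k$ --- false, since $r/k=2\ln^2k/(b_1\ln\ln k)\to\infty$, so $r>k$). The genuine gap is in your last step: you assert that comparing the two corrections ``lets one compare \dots directly'' and that ``the required strict inequality follows,'' but you never carry out the comparison, and when one does, it goes the wrong way. With $\varepsilon_k=\frac{2\ln\ln k}{\ln k}(1+o(1))$ you have $\ln^2 r/\ln^2 k=1+2\varepsilon_k+\varepsilon_k^2=1+\frac{4\ln\ln k}{\ln k}(1+o(1))$, whereas $\ln\ln r/\ln\ln k=1+\frac{\ln(1+\varepsilon_k)}{\ln\ln k}=1+\frac{2}{\ln k}(1+o(1))$. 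The left correction exceeds the right one by a factor of about $2\ln\ln k$, so $\ln^2 r\,\ln\ln k>\ln^2 k\,\ln\ln r$ for all large $k$; equivalently $k\ln^2 r>\frac{b_1 r\ln\ln r}{2}$. Your own expansion therefore refutes, rather than proves, the inequality with the stated constants.

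Concretely: $k\ln^2 r=k\ln^2 k+4k\ln k\ln\ln k+o(k\ln k\ln\ln k)$ while $\frac{b_1r\ln\ln r}{2}=\frac{k\ln^2k\,\ln\ln r}{\ln\ln k}=k\ln^2 k+2k\ln k+o(k\ln k)$, and the second-order term on the left dominates. The statement becomes true only after adjusting constants --- e.g., defining $r=\frac{Ck\ln^2 k}{b_1\ln\ln k}$ with $C>2$, or dropping the factor $\frac12$ on the right-hand side (indeed $k\ln^2 r<b_1 r\ln\ln r$ does hold for large $k$). So the missing ingredient is not extra bookkeeping finesse but the observation that the leading terms cancel exactly and the next-order terms decide against the claim; any correct write-up must either change the constants in the statement or flag this. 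As a blind proof of the fact as literally stated, the proposal cannot be completed.
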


\begin{proof}
%
For sufficiently large $k$, $\ln r < \ln k$, which implies that
$\frac{k\ln^2 r}{\ln\ln r} < \frac{k\ln^2 k}{\ln\ln k} = \frac{b_1}{2}  \frac{2k\ln^2 k}{b_1\ln\ln k} = \frac{b_1 r}{2}$.
Hence, $k\ln^2 r< \frac{b_1 r\ln\ln r}{2}$ for a sufficiently large~$k$.
\end{proof}


Likewise we did in Subsection \ref{no_ack}, we now fix a station $v$ with the intention of showing that such an arbitrary station will transmit successfully within $O\left(k\frac{\ln^2 k}{\ln\ln k}\right)$ rounds whp. All the following time rounds $t$ are 
refereed to a reference clock coinciding with $v$'s local clock.

\begin{lemma}
\label{f:full-7}
Let $b$ be an arbitrary positive integer constant.
There exists a constant $b_1 > 0$ such that for any round $t$ for which $1 < \sigma[t] \le b_1\ln\ln k$,
the probability that some station successfully transmits at round $t$ is at least $16k/(br)$,
where $r = \frac{2k\ln^2 k}{b_1\ln\ln k}$.
\end{lemma}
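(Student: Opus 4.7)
The plan is to derive the bound in three short steps: first, lower bound the probability that some station transmits successfully at round $t$ by $\sigma[t]\cdot 4^{-\sigma[t]}$ using the same trick as in Lemma~\ref{Fact2}; second, exploit monotonicity of $x\mapsto x\cdot 4^{-x}$ on the interval $(1,\infty)$ to evaluate this quantity at the worst point $\sigma[t]=b_1\ln\ln k$; and third, choose $b_1$ small enough that the resulting expression matches $16k/(br)$.

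For the first step, every transmission probability used by \NAdaptiveU\ is of the form $p(j)=\ln j/j$ with $j\geq 3$, and is therefore at most $\ln 3/3<1/2$. Hence the standard inequality $(1-q)^{1/q}\geq 1/4$ applies to each $q_w[t]$, and repeating the manipulation from the proof of Lemma~\ref{Fact2}, the probability that station $v\in A[t]$ is the unique transmitter is at least
\[
q_v[t]\prod_{w\in A[t]\setminus\{v\}}(1-q_w[t]) \;\geq\; q_v[t]\cdot 4^{-\sum_{w\neq v}q_w[t]} \;\geq\; q_v[t]\cdot 4^{-\sigma[t]}.
\]
Since successful transmissions by different stations in the same round are mutually exclusive, the probability of \emph{some} successful transmission is obtained by summing over $v\in A[t]$, giving the lower bound $\sigma[t]\cdot 4^{-\sigma[t]}$.

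For the second step, $f(x)=x\cdot 4^{-x}$ has derivative $f'(x)=4^{-x}(1-x\ln 4)$, which vanishes at $x=1/\ln 4<1$, so $f$ is strictly decreasing on $[1,\infty)$. Because $1<\sigma[t]\leq b_1\ln\ln k$, we get $f(\sigma[t])\geq b_1\ln\ln k\cdot(\ln k)^{-b_1\ln 4}$. For the third step, substituting $r=2k\ln^2 k/(b_1\ln\ln k)$ into $16k/(br)$ yields $8b_1\ln\ln k/(b\ln^2 k)$, so it suffices to verify $(\ln k)^{2-b_1\ln 4}\geq 8/b$. Any $b_1<1/\ln 2$ makes the exponent strictly positive, so the inequality holds for all sufficiently large $k$, completing the proof. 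The entire argument is essentially a one-line inequality followed by a calculus check; the only real ``obstacle'' is the bookkeeping of constants, in particular making sure that the chosen $b_1$ remains consistent with the constraints imposed by subsequent lemmas in Section~\ref{with_ack}.
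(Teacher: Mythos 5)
Your proof is correct and follows essentially the same route as the paper's: lower-bound the success probability by $\sigma[t]\cdot 4^{-\sigma[t]}$ using the $(1-q)^{1/q}\geq 1/4$ trick, invoke monotonicity of $x\mapsto x4^{-x}$ to reduce to the endpoint $b_1\ln\ln k$, and choose $b_1$ small enough that $(\ln k)^{2-b_1\ln 4}$ dominates the required constant. Your constant bookkeeping ($16k/(br)=8b_1\ln\ln k/(b\ln^2 k)$) is in fact slightly more careful than the paper's, which writes $4b_1\ln\ln k/(b\ln^2 k)$ at the corresponding step; this discrepancy is immaterial since only the asymptotics in $k$ matter.
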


\begin{proof}
Fix any round $t$. 
The probability that some station transmits successfully at round $t$ is

\[\sum_{s\in A[t]} q_{s}[t] \prod_{s'\neq s\in A[t]} \left(1-q_{s'}[t]\right)
>
  \sigma[t] \left(\frac{1}{4}\right)^{\sigma[t]}
\ge
  b_1\ln\ln k \left(\frac{1}{4}\right)^{b_1\ln\ln k},
\]
where the first inequality follows from the observation that
 the transmission probabilities are less than $1/2$
	and the latter is implied by $f(x) = x\cdot4^{-x}$ being a decreasing
	function for $x \ge 1$.

Continuing, we have that
   \[
   b_1\ln\ln k \left(\frac{1}{4}\right)^{b_1\ln\ln k} 
   =\frac{b_1\ln\ln k} {2^{2b_1\frac{\log (\ln k)}{\log e}}}
   = \frac{b_1\ln \ln k}{(\ln k)^{2b_1/\log e}}.
   \]

Finally, by choosing $b_1$ sufficiently small and $k$ sufficiently large, the latter value can be made larger than
\[
\frac{4b_1\ln\ln k}{b\ln^2 k}=\frac{16k}{br}.
\]
\end{proof}

In the next lemma we prove the existence of sufficiently many rounds $t$ at which 
$\sigma[t]\leq b_1\ln\ln k$, for some constant $b_1>0$.
\gdm{Notice that, due to the adaptive adversary, these rounds are not known in advance, but the lemma guarantees that they will appear somewhere in the specified interval.}

\begin{lemma}
\label{f:full-6}
Let $b$ be an arbitrary positive integer constant. 
Suppose $r = \frac{2k\ln^2 k}{b_1\ln\ln k}$,
where $b_1$ is the constant determined in Lemma \ref{f:full-7}.
In at least $br/2$ rounds $t\in[1,br]$ the sum of probabilities $\sigma[t]$
is smaller than $b_1\ln\ln k$.
\end{lemma}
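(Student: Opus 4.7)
The plan is to bound $\sum_{t \in [1,br]} \sigma[t]$ from above, and then invoke a Markov-style counting argument: if too many rounds had $\sigma[t] \geq b_1 \ln\ln k$, the aggregate would exceed the bound.

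First I would estimate each station's total contribution to the sum. A station $v$ can contribute $q_v[t] = p(t-t_v)$ to $\sigma[t]$ only while it is active, and in any case only for at most $br$ rounds after its activation, so its total contribution over $[1,br]$ is at most $s(br)$. (Note: switching off after a successful transmission only helps here, but is not needed for this bound.) Summing over the $k$ stations gives
\[
\sum_{t=1}^{br} \sigma[t] \;=\; \sum_{v} \sum_{t: v \in A[t]} q_v[t] \;\leq\; k \cdot s(br).
\]
Applying Fact~\ref{f:full-1} yields $s(br) < b\ln^2 r$, hence $\sum_{t} \sigma[t] < bk\ln^2 r$.

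Next I would import Fact~\ref{f:full-5} with the constant $b_1$ furnished by Lemma~\ref{f:full-7}, which gives $k\ln^2 r < b_1 r \ln\ln r / 2$, so that
\[
\sum_{t=1}^{br} \sigma[t] \;<\; \frac{b \, b_1 \, r \, \ln\ln r}{2}.
\]
Since $\ln r = \ln k + O(\ln\ln k)$, we have $\ln\ln r \leq \ln\ln k + o(1)$, and for $k$ sufficiently large this sum is bounded by $\frac{br}{2}\cdot b_1 \ln\ln k$.

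Finally, I would run the Markov-type step. Let $N$ be the number of rounds $t \in [1,br]$ with $\sigma[t] \geq b_1 \ln\ln k$. Then
\[
\frac{br}{2}\cdot b_1 \ln\ln k \;>\; \sum_{t=1}^{br} \sigma[t] \;\geq\; N \cdot b_1 \ln\ln k,
\]
which forces $N < br/2$. Therefore at least $br - N \geq br/2$ rounds in $[1,br]$ satisfy $\sigma[t] < b_1 \ln\ln k$, as required.

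The only delicate point is the passage from $\ln\ln r$ to $\ln\ln k$ inside the bound (Fact~\ref{f:full-5}); the rest is clean bookkeeping. Observe also that the argument is completely insensitive to whether the adversary is adaptive: the upper bound on $\sum_t \sigma[t]$ is deterministic, since it only uses that each of the $k$ stations has its own sequence of transmission probabilities $p(\cdot)$ and can contribute for at most $br$ rounds, independent of any strategic choices by the adversary.
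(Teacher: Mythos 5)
Your proof is correct and follows essentially the same route as the paper's: bound $\sum_{t\in[1,br]}\sigma[t]$ by $k\cdot s(br)$, apply Facts~\ref{f:full-1} and~\ref{f:full-5}, and conclude with a Markov-type counting argument (the paper phrases this via $\hat{\sigma}[t]$ and the inequality $\sigma[t]\le\hat{\sigma}[t]$, but the content is identical). The one delicate point you flag --- passing from $\ln\ln r$ to $\ln\ln k$ in the final bound --- is glossed over in the paper's own proof as well, so your treatment is, if anything, slightly more explicit.
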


\begin{proof}
By Fact \ref{f:full-1} and Fact \ref{f:full-5}
we have that $k\cdot s({br}) < bk\ln^2 r < b\cdot \frac{b_1r\ln\ln r}{2}$.
Therefore,

\begin{eqnarray*}
	\sum_{t\in [1,br]} \hsigma[t] &=& \sum_{t\in [1,br]} \sum_{w\in \hat{A}[t]} q_w[t] \\
	&=&  \sum_{w\in \hat{A}[t]} \sum_{t\in [1,br]} p(t-t_w) \\ 
	&\le& k\cdot s(br)\\
	&<& \frac{br}{2}  \cdot b_1\ln\ln r.
\end{eqnarray*}

Thus,
in at least $br/2$ rounds $t\in[1,br]$ the value $\hsigma[t]$
is smaller than $b_1\ln\ln k$. Finally, the Lemma follows from the inequality $\sigma[t]\leq\hsigma[t]$.
\end{proof}

The next lemma shows that for a sufficiently large input parameter $b$, there are many rounds in which
the sum of transmission probabilities is at most 1 whp.

\begin{lemma}
\label{f:full-10}
Let $b,b_1,r$ be as in Lemma \ref{f:full-6}.
There exists $k_0$ not depending on $b,b_1,r$ such that for $k\geq k_0$,
in any execution of the protocol, the probability that $\sigma[t]\le 1$
 in at least $br/4$ rounds $t\in[1,br]$,  is at least
\[
         1 - \frac{1}{2k^{\eta+1}}.
\]

\end{lemma}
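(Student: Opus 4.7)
}
The plan is to combine Lemma~\ref{f:full-6}, which guarantees many rounds with $\sigma[t]\le b_1\ln\ln k$, with Lemma~\ref{f:full-7}, which forces a fast decrease of active stations whenever $\sigma[t]\in(1,b_1\ln\ln k]$. Specifically, call a round $t\in[1,br]$ \emph{good} if $\sigma[t]\le 1$, \emph{borderline} if $1<\sigma[t]\le b_1\ln\ln k$, and \emph{heavy} otherwise. By Lemma~\ref{f:full-6}, the number of heavy rounds in $[1,br]$ is at most $br/2$. Thus, to obtain at least $br/4$ good rounds it suffices to show that the number of borderline rounds is at most $br/4$ whp.

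To control the borderline rounds, I would run a stopping-time argument. Let $N_t$ be the number of borderline rounds in $[1,t]$ and let $\tau=\min\{t:N_t=\lceil br/4\rceil\}$ (with $\tau=\infty$ if no such $t$ exists). Consider the successful transmissions that occur in borderline rounds. By Lemma~\ref{f:full-7}, conditionally on the entire history up to the beginning of any borderline round, the probability of a successful transmission in that round is at least $16k/(br)$, regardless of the adversary's behavior. Hence the number of successful transmissions in the first $\lceil br/4\rceil$ borderline rounds stochastically dominates a $\mathrm{Bin}(\lceil br/4\rceil,16k/(br))$ random variable $Z$, whose expectation $\mu=\mathbf{E}[Z]\ge 4k$. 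Applying the Chernoff bound,
\[
\Pr[Z\le k]\;\le\;\Pr\!\left[Z\le (1-\tfrac34)\mu\right]\;\le\;\exp\!\left(-\tfrac{(3/4)^2\mu}{2}\right)\;\le\;\exp(-9k/8).
\]

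On the other hand, since there are only $k$ stations, the total number of successful transmissions in the execution is at most $k$. Therefore $\{\tau\le br\}\subseteq\{Z\le k\}$ (viewed under the stochastic coupling), so
\[
\Pr[\tau\le br]\;\le\;\Pr[Z\le k]\;\le\;\exp(-9k/8).
\]
For $k$ large enough (independent of $b,b_1,r$), this is at most $1/(2k^{\eta+1})$. The event $\tau>br$ is exactly the event that the number of borderline rounds in $[1,br]$ is strictly less than $br/4$, which, combined with Lemma~\ref{f:full-6}, yields at least $br/2-br/4=br/4$ good rounds.

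The main obstacle is the dependence between rounds: which rounds are borderline, and whether a successful transmission occurs, both depend on the (possibly adversarially driven) history. The care needed is to couple the borderline-round successes with an \emph{independent} binomial, and this is made possible precisely by the fact that the lower bound $16k/(br)$ on the success probability in Lemma~\ref{f:full-7} holds \emph{conditionally on the full history}, so the stopping-time argument combined with stochastic domination is legitimate even against an adaptive adversary.
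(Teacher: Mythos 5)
Your proposal is correct and follows essentially the same route as the paper: both arguments bound the number of ``borderline'' rounds with $1<\sigma[t]\le b_1\ln\ln k$ by $br/4$ using Lemma~\ref{f:full-7}'s success-probability lower bound $16k/(br)$, the fact that at most $k$ successful transmissions can ever occur, and a Chernoff bound on a binomial with mean $4k$. The paper's explicit $0$--$1$ sequence $\rho$ (built by probabilistic thinning with weight $(1/p_i)\cdot 16k/(br)$ and padded with biased coin tosses) is precisely the coupling that realizes the stochastic domination you invoke in your stopping-time formulation.
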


\begin{proof}
By Lemma \ref{f:full-6} there are at least $br/2$ rounds $t\in [1,br]$ in which $\sigma[t]\leq b_1\ln\ln k$. 
Let $R$ be the set of these rounds and
 $X$ be the event that there are at least $br/4$
rounds $t\in R$ such that $1<\sigma[t] \le b_1\ln\ln k$.
By Lemma \ref{f:full-7} 
in each of such rounds the probability 
of a successful transmission is at least $16k/(br)$.
To prove the lemma it will suffice to show that 
whp event $X$ will not occur.

	Let $T = t_1, t_2, \ldots, t_m$, with $t_1 < t_2 < \cdots < t_m$,  
	be the sequence of rounds in $R$
such that $1 < \sigma[t_i] \le b_1\ln\ln k$, for $i = 1,2,\ldots,m$.
	For any execution of protocol \NAdaptiveU, we can define the
	following binary sequence $\rho = \rho_1,\rho_2,\ldots, \rho_\ell$, 
	of length $\ell = br/4$ (all random choices are made independently).
\begin{enumerate}
\item If in round $t_i\in T$ there is no successful transmission, we set $\rho_i = 0$.
\item If in round $t_i\in T$ there is a successful transmission, then 
we set $\rho_i$ randomly so that $\Pr(\rho_i = 1) = 16k/(br)$. 
This is done as follows. Let 
$p_i = \sum_{v\in {A}[t_i]} q_v[t_i]\prod_{w\neq v} (1-q_{w}[t_i])$ be
the probability of having a successful transmission at round $t_i$,
we set
$$
\rho_i = \begin{cases}
1, & \text{with probability}\ (1/p_i) [16k/(br)]; \\
0, & \text{with probability}\ 1-(1/p_i) [16k/(br)].
\end{cases}
$$
\item If there are less than $br/4$ rounds in $T$, the lacking 
$\ell - m$ entries of $\rho$ are added by tossing a biased coin which assigns 1 
with probability $16k/(br)$ and 0 with probability $1 - 16k/(br)$.
\end{enumerate}

%
%
In this way, $\rho$ has length $\ell=br/4$ and each entry is independently 
set to 1 with probability \textit{exactly} $16k/(br)$.
Notice that at most $k$ 1's can be attributed by step (2), 
as there are at most $k$ stations and each of them can successfully transmit only
once, as it switches-off immediately after.
So, if there are more than $k$ 1's in the sequence $\rho$, then this exceeding amount of 1's must have been produced by the coin tosses in step (3). 
This implies that $m < \ell = (br)/4$ and therefore $X$ does not occur.
Hence, to finish the proof we need only to show that $\rho$ contains
more than $k$ 1's whp.  

The expected number of 1's in $\rho$ is $\mu =\ell \cdot 16k/(br) = 4k$.
Letting $\delta = 3/4$, the probability that the number of 1's in $\rho$
is at most $(1-\delta) \mu = k$ is, by the Chernoff bound, no more than
$e^{\delta^2\mu/2}=e^{-(3/4)^2 4k/2}$.
This value can be made smaller than 
$1/(2k^{\eta+1})$, for any predefined parameter $\eta > 0$,
by choosing $k$ sufficiently large, \textit{i.e.} for $k \ge k_0$,
where $k_0$ is a constant not depending on $b,b_1$ and $r$.
\end{proof}

Now we are ready to prove the main result of this section.

	\textbf{Proof of Theorem~\ref{t:full-2}:}
	Fix a parameter $\eta > 0$ and let $r = \frac{2k\ln^2 k}{b_1\ln\ln k}$,
	for some constant $b_1$ determined in Lemma \ref{f:full-7}.
	We will show that there exists an input constant
	$b$ for our protocol such that $v$ will transmit successfully within the
	first $br$ rounds of its activity.

	
	Let $S \subseteq [1,br]$ be a set of rounds, with $|S| \ge br/4$, 
	such that for every $t\in S$,  $\sigma[t]\le 1$. 
	
	By Lemma \ref{Fact2}, the probability of having a successful transmission 
	for $v$ in any $t\in S$ is at least 
	$q_v[t]/4 = p(t-t_v)/4 \ge p(br)$.
	Thus, the probability that $v$ will not send its message successfully 
	during all the rounds in $S$ is no more than
	\[
	(1-p(br)/4)^{br/4} = \left(1-\frac{\ln r}{4r}\right)^{br/4} < e^{-b\ln r/16}
	=  \left(\frac{2k\ln^2 k}{b_1 \ln\ln k}\right)^{-b/16} \le 
	k^{-b/16},
	\]
	where the last inequality holds for sufficiently large $k$.
	By chosing an input value $b$ sufficiently large, the latter value
	can be made less than $1/(2k^{\eta+1})$.

	By Lemma \ref{f:full-10}, 
	the probability that such a set $S$ does not exist
    is less than
	$1/(2k^{\eta+1})$, for $k$ sufficiently large. 

	Thus, the probability that $v$ does not transmit successfully during its first $br$ rounds is smaller than $1/k^{\eta+1}$.
	Finally, taking the union bound over all contending stations,
	it follows that all stations transmit successfully within 
	the first $br$ rounds of its activity with probability larger than $1-1/k^\eta$.
\qed	

\medskip
\gdm{
	We conclude the section with the following simple theorem on the energy cost of our protocol.
	
	\begin{theorem}
	\label{thm:energy-non-adaptive-unknown}
		The total number of broadcast attempts during the execution of Protocol \NAdaptiveU\  is $O(k\log^2 k)$ whp.
		The result holds even against an adaptive adversary.
	\end{theorem}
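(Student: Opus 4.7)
The plan is to combine the latency bound from Theorem \ref{t:full-2} with the expected number of transmissions per station over that interval, and then invoke Chernoff and a union bound. By Theorem \ref{t:full-2}, each station terminates within $br = O\!\left(k\frac{\ln^2 k}{\ln\ln k}\right)$ rounds whp, where $b$ is the constant input parameter chosen for the algorithm. Therefore it suffices to bound, whp, the number of broadcast attempts that a single station can make during its first $br$ local rounds and then sum over the $k$ stations.

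For a single station $v$, the probability of transmitting in its $j$-th local round is $p(j)$, and these events are independent across rounds by the definition of Protocol \NAdaptiveU. Hence the expected number of transmissions in $v$'s first $br$ rounds is exactly $s(br)$. Using Fact \ref{f:full-1}, one gets
\[
s(br) \;<\; b \ln^2(r) \;=\; O(\log^2 k),
\]
because $r = O(k\log^2 k)$ implies $\log r = O(\log k)$. Since the per-round transmission indicators of $v$ are mutually independent Bernoulli variables, I would apply the Chernoff upper tail (as stated in the preliminaries) to conclude that the number of transmissions by $v$ in its first $br$ rounds is $O(\log^2 k)$ with probability at least $1 - k^{-(\eta+2)}$, for any target constant $\eta>0$, by choosing the relevant constants in the Chernoff deviation large enough.

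Taking the union bound over the two bad events per station (the station running past round $br$, controlled by Theorem \ref{t:full-2}, and the station transmitting more than $O(\log^2 k)$ times within its first $br$ rounds) and then over all $k$ stations yields that every station makes at most $O(\log^2 k)$ broadcast attempts whp. Summing, the total energy is $O(k \log^2 k)$ whp.

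The one subtle point, which I expect to be the main thing to argue carefully, is correctness against an adaptive adversary. The adversary controls wake-up times adaptively based on history, but once a station is activated its per-round transmission coins are private independent Bernoullis whose distribution does not depend on the adversary's choices; only the station's termination time is affected by outside events. Since Theorem \ref{t:full-2} already supplies the $br$-round latency bound against the adaptive adversary, and the Chernoff concentration for $v$ uses only $v$'s own internal randomness, both ingredients survive adaptive scheduling, and the final bound $O(k\log^2 k)$ whp carries over.
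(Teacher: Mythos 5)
Your proposal is correct and follows essentially the same route as the paper's own proof: bound the whp latency by $O(k\ln^2 k)$ via Theorems~\ref{t:full-1}/\ref{t:full-2}, use Fact~\ref{f:full-1} to get an expected $O(\log^2 k)$ transmissions per station over that window, then apply Chernoff per station (using independence of a station's own transmission coins across rounds) and a union bound over all $k$ stations. Your explicit remark that the concentration argument survives adaptive scheduling because it relies only on each station's private randomness is a point the paper leaves implicit, but it does not change the argument.
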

	\begin{proof}
\darek{Theorems~\ref{t:full-1} and \ref{t:full-2} guarantee that any station will transmit successfully within
$O\left(k\ln^2 k\right)$ rounds whp, regardless if acknowledgments are allowed or not.
By Fact~\ref{f:full-1}, the expected 
number of transmissions within such time interval (which is polynomial in $k$)
is $O(\log^2 k)$. Since the broadcast attempts of a single station  are independent over rounds,
by using the Chernoff bound for each station to upper bound the number of transmissions and then the union bound over all stations, 
we get $O(\log^2 k)$ transmissions for every station, whp.
This implies a total of $O(k\log^2 k)$ broadcast attempts whp.}
\end{proof}
}

\newcommand{\status}{\mbox{{\sc status}}}
\section{An adaptive algorithm for unknown contention}
\label{s:no-GC-no-k}

In Section~\ref{sec:lowerbound}
we have shown that, for non-adaptive algorithms, 
the lack of knowledge of contention's size, 
or even its linear approximation, 
makes the problem complexity provably higher.
However, in this section we prove that adaptive algorithms could overcome this obstacle.
We now describe a protocol {\tt AdaptiveNoK}, which resolves the contention
with linear latency without any knowledge of contention size.
\gia{Besides the data packet itself, each station can send 
a one-bit control message. For the sake of presentation we will
refer to these control messages as $<${\tt D mode}$>$ (encoded with bit 0) 
and $<${\tt any D-station left?}$>$ (encoded with bit 1).}

The algorithm works by alternating between two modes: 
{\em leader election mode} and {\em dissemination mode}.
The first mode aims at getting a synchronized subset of stations and electing a leader. The task of the leader will be to coordinate the computation in the dissemination mode, which aims at the actual contention resolution among the synchronized subset of stations defined in the previous execution of the leader election mode.
During the dissemination mode the leader has also the task to send periodically a message $<${\tt D mode}$>$
which informs \gia{newly awakened} stations about the mode currently executed.
A formal description of the modes and actions
can be found in the pseudocode of Protocol~\ref{alg:nswleader}.%
\footnote{%
Here we assume that a station does not automatically switch off after sending successfully,
but, taking advantage of being already selected, continues its activity coordinating the transmissions of the other stations. This is allowed due to the assumed adaptiveness of the
algorithm.
Moreover, the stations other than transmitting their own messages, can exchange other
information \dk{of limited (logarithmic) size; in our adaptive protocol, such control messages have only one bit.}
}
%
A \gia{newly awakened}
station first listens for $4$ rounds in order to determine the 
current mode of the system (cf. line \ref{while1}).
In the {\em leader election mode} ({\em L mode}) the stations execute a wake-up protocol, whose goal
is to get just one successful transmission. 
For this task, we use protocol {\tt DecreaseSlowly} introduced in~\cite{JS05}
(cf. the pseudocode of Protocol \ref{alg:decslowly}).
Once a successful transmission appears
in some round $t$, the station which transmitted in $t$ becomes the {\em leader},
it sets up a new variable 
\tc\ to $0$, and 
then the {\em dissemination mode} ({\em D mode}) starts.
At this point the leader and all other stations that were alive at $t$, are synchronized.
Let us denote by $C$ such a {\em synchronized} subset of stations
without 
\dk{a} leader. 
We can assume that a global clock 
(represented by variable \tc\ initiated by the leader)
starts for all the stations in $C$ at the round 
in which the leader was elected. This allows us to use any contention resolution
protocol for the synchronized \dk{(\textit{i.e.}, static)} model with unknown $k$ as a black-box -- we will call it {\tt SUniform}.

\begin{algorithm*}[t]
	\caption{{\tt AdaptiveNoK} (\textit{executed by a station $u$})}
	\label{alg:nswleader}
		\begin{algorithmic}[1]
		     \State {$\status \gets \emptyset$}
         \While {$\status \not = L$} \label{while1}
         \MLineComment{newly woken up stations can only enter the computation in $L$ mode}
             \State {listen to the channel for 4 rounds}
             \If {\gia{$u$ does not receive any message OR 
             it receives message $<${\tt is there anybody out there?}$>$}} \label{lmode}
                  \State {$\status \gets L$}
                   
             \EndIf
         \EndWhile
         \While {$u$ is active}
              \If {$\status = L$}
                  \State {execute {\tt DecreaseSlowly}} \label{decrease}
                  \MLineComment{the first successful station becomes the leader}
                  \State {$\tc \gets 0$} \MLineComment{now all awaken stations are synchronized and $\tc$ will denote the current round number started at the time the leader has been elected}
                  \State {$\status \gets D$} \MLineComment{once a leader has been elected the station switches to the dissemination mode}
              \EndIf
              \If {$\status = D$}
                   \If {$\tc$ is odd}
                       \If {$u$ is not the leader}
                              \State {execute {\tt SUniform$(u)$} (switch-off at the first successful transmission)} \label{suniformCall}
                       \EndIf
                   \ElsIf {$\tc = 2^x$ for some integer $x \ge 1$}
                        \State {transmit $<${\tt is there anybody out there?}$>$}\label{thewall}
                         \If {this transmission is successful $\And$ $u$ is the leader}
                             \State{ switch-off}\label{switch-lmode}
                              \MLineComment{if the leader gets an acknowledgement, it switches-off and the dissemination mode terminates}
                         \EndIf                        
                   \Else \label{else}
                        \If {$u$ is the leader} 
                                \State {transmit  $<${\tt D mode}$>$ \label{keep-dmode}}\MLineComment{as far as the leader is alive, the dissemination mode continues}
                        \EndIf
                   \EndIf
              \EndIf
         \EndWhile
    \end{algorithmic}
\end{algorithm*}

\begin{algorithm*}[t]
	\caption{{\tt DecreaseSlowly} (\textit{executed by a station $u$})}
	\label{alg:decslowly}
	\begin{algorithmic}[1]
    \State $q\gets$ some constant $>0$
    \State $i\gets $0
    \While {$u$ is active}
        \State transmit the message with probability $q\cdot \frac{1}{2q+i}$
        \If {transmission is successful}
            \State{become a leader}
        \EndIf
        \State $i\gets i+1$
    \EndWhile
    \end{algorithmic}
\end{algorithm*}

\paragraph{\darek{Protocol {\tt SUniform}}.}

In order to implement {\tt SUniform}, we can use one of the preexisting
\textit{Back-on/Back-off} protocols, which guarantee that all stations transmit successfully 
within $O(k)$ rounds after the synchronization round, whp.
\gia{As pointed out in \cite{Bend-05}, the idea of Back-on/Back-off 
(also known as Sawtooth Back-off) 
was discovered in other contexts many years ago \cite{sawtooth1,sawtooth2}.
This is a non-monotonic back-off strategy defining a sequence of 
contention windows within which any station chooses uniformly 
its transmitting round. 
It works in a doubly nested loop. 
The outer loop doubles the contention window at each step,
with the goal of ``guessing'' a window size proportional to the number 
of competing stations. For each such step, 
the innermost loop repeatedly fractions it until it reaches size 1,
so to progressively halving the number of active stations. The number of
such nested iterations is clearly $O(\log^2 T)$, where $T$ is the total
number of rounds until termination.
Ger\`eb-Graus and Tsantilas \cite{sawtooth1} considered the problem
of realizing arbitrary $h$-relations in a $n$-node network.
In an $h$-relation, each processor is both the source and the 
destination of at most $h$ messages. 
Setting $n = k$ and $h = k$, a solution to this problem 
can be used to solve our contention resolution among $k$ synchronized stations.
The protocol of \cite{sawtooth1} realizes an $h$-relation in a
$n$-node network in $\theta(h + \log n \log\log n)$ rounds
whp.
Hence, translated into our terminology and adapted to our needs, 
the result of \cite{sawtooth1} can be restated as follows
(see Theorem 4.2 in \cite{sawtooth1}). 

\begin{theorem}[\cite{sawtooth1}]\label{sawtooth}
Protocol {\tt SUniform} solves the contention among $k$ synchronized stations within $T$ rounds, where $T = O(k)$, whp, 
\textit{i.e.} with probability at least
$1 - 1/n^\alpha$ for any predefined constant $\alpha > 0$. 
It uses $O(\log^2 T)$ transmissions per station.
\end{theorem}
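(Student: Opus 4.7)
The plan is to analyze Protocol \texttt{SUniform} in the form of a doubly nested Back-on/Back-off procedure: an outer loop indexes guesses $W_i = 2^i$ for the contention size, and for each $W_i$ the inner loop walks through contention windows $W_i, W_i/2, W_i/4, \ldots, 1$; inside a window of size $W$, every still-active station picks independently and uniformly at random one of the $W$ slots in which to transmit, and switches off upon acknowledgement. The key structural fact I would establish is that the total cost is dominated by the first outer index $i^*$ with $2^{i^*} \ge c\, k$ for a suitable constant $c$; all earlier outer iterations contribute only a geometric sum $O(k)$ of rounds, and (whp) no later outer iteration is ever reached.

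First I would prove a ``one-shot halving'' lemma: if $m$ synchronized stations share a window of size $W$ with $W \ge c m$, then the expected number of stations transmitting alone equals $m(1 - 1/W)^{m-1} = \Theta(m)$, and a Chernoff bound on this sum of (weakly dependent but easily decoupled) indicators gives that at least a constant fraction succeed, with failure probability exponentially small in $m$. Iterating this lemma along the halving inner loop $W, W/2, \ldots$ of the correct outer iteration, the number of active stations drops geometrically: after $O(\log k)$ inner steps, only $O(\log k)$ stations remain, and the innermost windows (of size $O(\log k)$ down to $1$) clean them up by standard coupon-collector/amplification arguments. A union bound over the $O(\log k)$ halving steps, with the failure probability at each step polynomially small in $k$, yields the whp success guarantee with probability $1-1/k^\alpha$ for any prescribed $\alpha$ (tuning the multiplicative constants in the protocol).

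For the round count, the correct outer iteration consists of window sizes $W_{i^*}, W_{i^*}/2, \ldots, 1$, whose sum is $2 W_{i^*} = O(k)$; combined with the $O(k)$ bound on earlier outer iterations, this gives $T = O(k)$ total rounds whp. For the energy bound, I would observe that each station transmits at most once per (inner, outer) iteration pair --- one random slot per window --- and the number of such pairs until the station terminates is bounded by the product of the $O(\log T)$ outer iterations it might witness and the $O(\log T)$ halving levels per outer iteration, giving $O(\log^2 T)$ transmissions per station.

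The main obstacle I expect is the tail of the halving process: once the number of active stations shrinks to $O(\log k)$, Chernoff concentration is too weak to guarantee further halving in a single step. I would handle this by exploiting the fact that these residual stages use very short windows (sizes $O(\log k)$ down to $1$), so running each of them for $\Theta(\log k)$ repetitions costs only $O(\log^2 k) = o(k)$ extra rounds while boosting the per-stage failure probability to $k^{-\Omega(\alpha)}$, enough to close the union bound. A secondary subtlety is proving that no ``premature'' outer iteration $2^i \ll k$ accidentally terminates the protocol --- but since such iterations produce at most $O(2^i) \ll k$ successes whp (again by the one-shot lemma applied in the wrong direction), the stations that remain keep the protocol advancing to index $i^*$.
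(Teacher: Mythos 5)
This statement is not proved in the paper at all: it is an imported result, a restatement of Theorem~4.2 of the cited work of Ger\`eb-Graus and Tsantilas (\texttt{sawtooth1}), and the paper only describes the protocol informally (doubly nested loop, outer doubling, inner halving, hence $O(\log^2 T)$ windows and at most one transmission per window). So there is no in-paper argument to match; what you have written is a from-scratch reconstruction of the standard Sawtooth/Back-on--Back-off analysis, and it is broadly sound: the one-shot halving lemma with expectation $m(1-1/W)^{m-1}$, concentration via bounded differences or negative association, the geometric sum $O(2^{i^*})=O(k)$ over all windows up to the correct outer index, and the $O(\log T)\cdot O(\log T)$ count of windows for the energy bound are exactly the right ingredients and are consistent with the paper's description and with the $\Theta(h+\log n\log\log n)$ bound of the source.

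Two points deserve more care than your sketch gives them. First, ``at least a constant fraction succeed'' is not enough to sustain the induction: since the window halves at each inner step, you need the number of survivors to drop by a factor strictly better than $2$ at each step, i.e.\ the per-window success fraction must exceed $1/2$; this holds only if the slack constant $c$ in $W\ge cm$ is chosen large enough (so that $e^{-1/c}$ comfortably exceeds $1/2$ even after the concentration loss), and you should state that the invariant $W\ge cm$ is itself preserved. Second, your fix for the tail --- repeating each of the short residual windows $\Theta(\log k)$ times --- modifies the protocol rather than analyzing the one described; that is harmless for this paper, which only uses \texttt{SUniform} as a black box through the three properties in the theorem statement (latency $O(k)$, whp guarantee with tunable $\alpha$, $O(\log^2 T)$ transmissions, all of which your variant still satisfies), but you should say explicitly that you are establishing the theorem for a variant of the protocol, or else argue the tail within the original structure (e.g.\ by charging the $O(\log k)$ stragglers to subsequent outer iterations, which is where the source's $\log n\log\log n$ term comes from). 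Neither point is a fatal gap, but both need to be closed for the argument to be complete.
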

}

In order to use protocol {\tt SUniform} as a black-box,  
we need some additional synchronization mechanism 
inside the protocol itself and with respect to the main algorithm, 
which is described next.

First, during an execution of {\tt SUniform}, all stations from set $C$ 
switch off directly after a
successful transmission (are not active anymore), while all stations which 
are currently waking up
are waiting until the end of this particular execution of {\tt SUniform}.

Second,
in order to make possible that the stations can distinguish between both modes after waking up, the algorithm {\tt SUniform} is executed in odd rounds only, while even rounds are devoted to performing the following kind of coordination.

\gia{
In rounds expressible as $2^x$ for integers $x > 1$,  
all alive stations from set $C$ and the leader 
transmit together the message $<${\tt is there anybody out there?}$>$. 
In the remaining even rounds, the leader broadcasts the message $<${\tt D mode}$>$. 
The latter transmission, which is surely successful as it is performed by
the leader only, informs the newcomers to stay silent until the dissemination
mode hasn't finished. In order to understand when the $D$ mode has terminated,
the outcome of the former transmission 
(message $<${\tt is there anybody out there?}$>$), is used. 
There are two cases.

1) If the leader receives an acknowledgment on such a transmission, 
then it knows that it was the only transmitter 
(all stations in $C$ successfully transmitted and switched off). In this case,
the leader itself switches-off and the $<${\tt D mode}$>$ message is suspended  (cf. line \ref{switch-lmode}).

2) If the leader does not receive an acknowledgment, it knows that there are still some stations in $C$ which have not succeeded in transmitting a message. Consequently, the leader keeps transmitting the 
$<${\tt D mode}$>$ message (cf. line \ref{keep-dmode}).

Notice that at most 4 consecutive rounds 
would be needed to the leader, as far as it is still alive, to
send the message $<${\tt D mode}$>$ (one even round might be skipped
if it is a power of 2 larger than 2).

\smallskip
Now we can see how the control messages guarantee an alternation between the 
two modes until no new station arrives in the system, which causes the algorithm
to stop.
}

\paragraph{\darek{Controlling modes.}}
\gia{

When a first group of stations is activated (so that there is no station already active in the system), no message can be received during the execution of the first while loop, and so an $L$ mode is started by these stations:
they exit the loop and execute protocol {\tt DecreaseSlowly} to elect a leader.
Any station that possibly wakes up during this $L$ mode, either
perceives the successful transmission of the leader election 
during the 4 rounds of waiting or nothing.
The condition in line \ref{lmode} guarantees
that in the first case the station keeps waiting, while the latter
case causes the station to exit the loop and join the leader election.

Once a leader has been elected, all (and only) the stations that participated 
to the election, start a $D$ mode. Any station that possibly 
wakes up during this $D$ mode, keeps waiting in the first while loop
until it gets a $<${\tt D mode}$>$ message. 
Once the $D$ mode terminates, the leader receives an acknowledgement on
its transmission $<${\tt is there anybody out there?}$>$.
In such a case, all newcomers that were waiting in
the while loop, either receive this 
acknowledgement or (if they were activated just after) they don't 
receive a $<${\tt D mode}$>$ message. 
In both cases, they learn that the dissemination mode has terminated
and the condition in line \ref{lmode} guarantees that
they exit the loop and start a new leader election.

This process is iterated until no new station is injected in the system, 
which happens when a $D$ mode finishes and no station is waiting
in the while loop.
}

\subsection{Analysis}

The correctness and the time complexity of this algorithm will be proved in Theorem~\ref{t:withLeader}.
The first step is to show that protocol {\tt DecreaseSlowly} wakes up the system in $O(k)$ rounds whp. 
In \cite{JS05}, an algorithm {\em Decrease Slowly} has been presented which completes the
wake-up in $O(k\log k)$ rounds whp. In the following theorem, 
we improve the analysis and show that actually this algorithm
can complete the wake-up in $O(k)$ rounds whp.

\begin{theorem}\label{t:wakeup}
Algorithm {\tt DecreaseSlowly} finishes wake-up in $O(k)$ rounds whp.
This results holds even for an adaptive adversary.
\end{theorem}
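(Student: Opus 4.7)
The plan is to prove that within $T = ck$ rounds of a reference clock starting at the first activation (round $0$), no station transmits successfully with probability at most $k^{-\eta}$, where $c$ and the algorithm constant $q$ are both chosen as sufficiently large constants depending on $\eta$. Throughout I write $H[t] = \hsigma[t] = \sigma[t]$; these coincide because no station switches off before the first successful transmission. The per-round bound follows from $q_v[t] \leq 1/2$ together with $(1-x)^{1/x} \geq 1/4$ on $[0, 1/2]$ (as in Lemma~\ref{Fact2}, summed over the potential successful transmitter):
\[
\Pr(\text{some station succeeds at round } t) \;\geq\; H[t]\cdot 4^{-H[t]} \;=:\; g(H[t]).
\]

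The crux is to lower bound $\Sigma := \sum_{t=1}^T g(H[t])$ by $\eta \ln k$ for every activation schedule. Two ingredients feed into this. First, the earliest station $v^*$ is active throughout the window, giving the pointwise floor $H[t] \geq q_{v^*}[t] = q/(2q+t)$ for every $t \in [1,T]$. Second, the total mass is bounded:
\[
\sum_{t=1}^T H[t] \;\leq\; k \cdot \sum_{i=0}^{T} \frac{q}{2q+i} \;=\; O(kq\ln k),
\]
since each of the $k$ stations contributes at most $O(q\ln T)$ to the total probability mass. Because $g(H) = H\cdot 4^{-H}$ is increasing on $[0, 1/\ln 4)$ and decreasing on $(1/\ln 4, \infty)$, an adversary minimising $\Sigma$ is pushed toward inflating $H[t]$; the mass budget caps the feasible uniform level at $L \leq q\ln k / c$, giving $T \cdot g(L) = q\ln k \cdot k^{1 - q\ln 4/c}$, which exceeds $\eta\ln k$ once $c \geq q\ln 4$ and $q \geq \eta$. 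A Lagrangian argument respecting both the pointwise floor $H[t] \geq q/(2q+t)$ and the aggregate mass constraint shows that spiky non-uniform schedules fare no better: any spike at a round costs mass proportional to its height while yielding at best a polynomial-in-$k$ reduction of $g$ in that round, and the bookkeeping shows this trade-off is never favourable.

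Finally, even under an adaptive adversary the per-round coin flips are independent, since the adversary must commit its round-$t$ activations before the round-$t$ coins are tossed; conditional on history through round $t-1$, the no-success event at round $t$ has probability at most $1 - g(H[t])$. A standard chaining bound yields
\[
\Pr(\text{no success in }[1,T]) \;\leq\; \E\!\left[\prod_{t=1}^T \bigl(1-g(H[t])\bigr)\right] \;\leq\; \E\!\left[e^{-\Sigma}\right] \;\leq\; e^{-\eta\ln k} \;=\; k^{-\eta}.
\]
The main obstacle lies in the optimisation step: although the uniform adversarial strategy is easy to bound, ruling out spiky schedules that exploit the mixed concavity/convexity of $g$ requires simultaneously using the aggregate mass bound and the pointwise floor from $v^*$. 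This combined rigidity is precisely what gives the improvement over the $O(k\log k)$ analysis of \cite{JS05}.
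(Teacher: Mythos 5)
Your reduction to the optimisation problem ``minimise $\Sigma=\sum_{t\le T} H[t]4^{-H[t]}$ subject to the pointwise floor $H[t]\ge q/(2q+t)$ and the mass budget $\sum_t H[t]=O(kq\ln k)$'' does not work: that relaxed problem has optimal value $O(1)$, not $\Omega(\log k)$, so no Lagrangian bookkeeping can rescue it. Concretely, the profile $H[t]=q\ln k$ for $t\in[1,k]$ and $H[t]=q/(2q+t)$ for $t\in(k,ck]$ satisfies both constraints (mass used $\approx kq\ln k$, floor respected), yet its first block contributes $k\cdot q\ln k\cdot 4^{-q\ln k}=q\ln k\cdot k^{1-q\ln 4}\to 0$ and its second block contributes at most $\sum_{t=k}^{ck} q/(2q+t)\approx q\ln c=O(1)$. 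The defect is that your only per-round lower bound on $H[t]$ comes from the single earliest station and therefore decays with \emph{absolute} time $t$; if the adversary arranges for the ``good'' (small-$H$) rounds to all occur late, the harmonic tail $\sum_{t\ge k}q/(2q+t)$ is a constant. What is missing is a constraint encoding the \emph{dynamics}: a station's probability $x$ can only decay to $xq/(q+x)$ in one step, so by subadditivity of $x\mapsto xq/(q+x)$ one gets $\sigma[t+1]\ge \sigma[t]q/(q+\sigma[t])$, i.e.\ injected mass cannot be made to vanish quickly. This is exactly the ingredient the paper uses.

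The paper's proof takes a genuinely different route built on that observation. It restricts to the $\ge 16qk$ rounds where $\sigma[t]$ is at most twice its average $O(\log k)$, splits them into \emph{heavy} ($\sigma>1/2$) and \emph{light} ($\sigma\le 1/2$) rounds, and in the light case proves by induction that the $i$-th light round satisfies $\sigma[t_i]\ge q/(2q+i)$ --- indexed by the \emph{count} of light rounds, not by the round number. (Each heavy round resets the guarantee to $q/(2q+1)$, and each step degrades it by only one harmonic increment.) Summing $q/(2(2q+i))$ over $i\le 8qk$ then gives $\Omega(q\log k)$ no matter where the light rounds sit in $[1,T]$, which is precisely the conclusion your time-indexed floor cannot deliver. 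The heavy case is handled separately: each of $8qk$ heavy rounds succeeds with probability $\ge 1/\sqrt{32qk}$. Your framing of the adaptivity issue (conditioning on the history, multiplying per-round failure probabilities) is fine, but the deterministic lower bound on $\Sigma$ that it relies on is the part that is not established, so the proof as written has a genuine gap.
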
 

\begin{proof}
Let us consider the first $32qk$ rounds following the round at which the first station wakes up
and starts the computation. We will prove that by the end of this interval the wake up has been
accomplished whp.

Following the algorithm, each awake station, starting from the round at which it wakes up, transmits 
with probability $q\cdot \frac{1}{2q}, q\cdot \frac{1}{2q+1}, q\cdot \frac{1}{2q+2}, \ldots$
If we denote by $p_i$ the transmission probability of an arbitrary awake station $u$ at the $i$th round of 
its computation, we have that the sum of transmission probabilities of $u$ over $32qk$ rounds is
\begin{equation}\label{ssum}
  s(32qk) = \sum_{i=0}^{32qk} p_i \leq q \left( \sum_{i=0}^{32qk} \frac{1}{i} \right) \leq q (1+\ln(32qk)) \ ,
\end{equation}
where in the last step we have used the known bounds for the $\delta$th partial sum $H_{\delta}$ of the harmonic series:
\begin{equation}\label{harmonic}
    \ln(1+h) \leq H_{\delta} = \sum_{i=1}^{h} \frac{1}{i} \leq 1 + \ln h \ .
\end{equation}

For any fixed round $t$, let us consider now the sum of transmission
probabilities of all awake stations at time $t$, denoted as in the previous section by $\sigma[t]$.
Recalling that $A[t]$ is the set of active stations at round $t$, the probability that a station successfully transmits at round $t$ is
\begin{eqnarray}
    \sum_{v\in A[t]} q_{v}[t] \cdot \prod_{w\neq v} (1-q_{w}[t]) 
            &=&   \sum_{v\in A[t]} q_{v}[t] \cdot \prod_{w\neq v} (1-q_{w}[t])^{\frac{1}{q_w[t]}q_{w}[t]} \nonumber \\
            &\geq&  \sum_{v\in A[t]} q_{v}[t] \cdot \prod_{w\neq v} (1/4)^{q_{w}[t]} \nonumber \\
            &>& \sigma[t] \cdot (1/4)^{\sigma[t]} .\label{sigmale}
\end{eqnarray}

Since at most $k$ stations can be awake in each round, \darek{for any adversarial wake-up strategy,} the average sum $\sigma[t]$ for $t$ ranging
over our interval of $32qk$ rounds will be
\[
   \frac{1}{32qk} \sum_{t=0}^{32qk} \sigma[t] \leq \frac {q (1+\ln(32qk))\cdot k}{32qk} \leq \frac{\ln(32qk)}{16} \ .
\]
Of course, in at least half of the interval, $\sigma[t]$ must be not larger than 
twice the average; therefore there is a set $T$ of at least $16qk$ rounds such that 
\begin{equation}\label{eq:ub}
\sigma[t] \leq \frac{\ln(32qk)}{8} \ ,
\end{equation}
for every $t\in T$.
Let us consider only rounds in $T$. We say that a round $t$ is {\em heavy} when 
$\sigma[t] > 1/2$ and {\em light} otherwise.
We distinguish two  
\darek{complementary cases, thus in each execution caused by an adaptive adversary one of them must occur. We will show that in each of these cases, no matter how the heavy and light rounds are distributed, the wake-up  occurs whp. Then, summing up the conditional probabilities over all possible distributions of heavy and light rounds, and by taking the union bound of the two cases, the theorem holds. Hence, it remains to analyze these two cases.}
%

\begin{enumerate}
\item
There are at least $8qk$ heavy rounds. Recalling also (\ref{eq:ub}),
in at least $8qk$ rounds $t$, it holds that 
\begin{equation}\label{eq:hypo}
  \frac{1}{2} < \sigma[t] \leq \frac{\ln(32qk)}{8} \ .
\end{equation}
Plugging (\ref{eq:hypo}) into (\ref{sigmale}), we get that
the success probability at round $t$ is at least
\[
    \frac{1}{2}\cdot \left(\frac{1}{4}\right)^{\frac{\ln(32qk)}{8}} \ge \left( \frac{1}{16} \right)^{\frac{\ln(32qk)}{8}} \ge \frac{1}{\sqrt{32qk}} \ .
\]
Therefore, the probability that the wake-up does not appear in $8qk$ heavy rounds is at most
$(1-1/\sqrt{32qk})^{8qk} = O(1/k^a)$ for an arbitrary constant $a$ depending on $q$.

\item
There are less than $8qk$ heavy rounds. So, there are at least $\delta = 8qk$ light rounds.
Let $t_1,t_2,\ldots,t_{\delta}$ be the time-ordered sequence of light rounds.
It is possible to show by induction on $i$  
(cf. Claim 1 in the proof of Theorem 10.3)  
that $\sigma[t_i] \geq q / (2q+i)$, for $1\le i \le \delta$. Consequently, 
\[
                    q / (2q+i) \leq  \sigma[t_i] \leq 1/2, \mbox{ for } 1\le i \le \delta. 
\]
Plugging these bounds into (\ref{sigmale}) we get that the
probability of successfully waking up in the $i$th light round is at least 
$q/2(2q+i)$. Hence, the probability that the wake-up is not successful is at most
\begin{eqnarray*}
  \prod_{i=1}^{\delta} (1 - \sigma[t_i]) 
            & \leq & \prod_{i=1}^{\delta} \left( 1 - \frac{q}{2(2q + i)}\right) \\
            & \leq & \left( \frac{1}{e} \right) ^{\sum_{i=1}^{\delta}\frac{q}{2(2q + i)}} 
              \leq   \left( \frac{1}{e} \right) ^{\frac{q}{2} (\sum_{i=1}^{\delta}\frac{1}{i} - \sum_{i=1}^{2q}\frac{1}{i}) }\\          
            & =    & \left( \frac{1}{e} \right) ^{\frac{q}{2} (H_{\delta} - H_{2q}) } 
              \leq   \left( \frac{1}{e} \right) ^{\frac{q}{2} (\ln(1+\delta) - \ln(4q))} \;\;\; \mbox{ (by (\ref{harmonic}))}\\
            & = & \left( \frac{4q}{1+8qk} \right) ^{q/2} 
             \leq \left( \frac{1}{2k} \right)^{q/2}.
\end{eqnarray*}
\end{enumerate}
\end{proof}

The following lemma shows that once a station enters the computation in $L$ mode, 
i.e. after it has left the first while loop in line \ref{while1} of Protocol {\tt AdaptiveNoK}, 
it reaches a successfull transmission within $O(k)$ rounds whp. The final theorem 
will further consider the time spent inside the first while loop.

\begin{lemma}\label{l:while}
A station $u$ sends successfully whp in $O(k)$ rounds after exiting the while loop in line \ref{while1} of
Protocol {\tt AdaptiveNoK}.
\end{lemma}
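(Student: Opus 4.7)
The plan is to decompose the time from $u$'s exit of the first while loop until its successful transmission into two sequential phases: a leader election phase (executing {\tt DecreaseSlowly} in $L$ mode) and a dissemination phase (executing {\tt SUniform} in the odd rounds of $D$ mode). I would bound the duration of each phase by $O(k)$ whp using the already-proved theorems, and then combine them via a union bound.

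First I would pin down when $u$ enters the leader election phase. By the control-message design, a station exits the loop in line~\ref{while1} only when either no message is received in its $4$-round listening window (so no $D$ mode is currently running) or when a $<${\tt is there anybody out there?}$>$ message is received (marking the end of a $D$ mode). In both cases $u$ enters $L$ mode and starts executing {\tt DecreaseSlowly}. Let $t^*$ be that round. Any other station participating in the same $L$ mode must also have entered it no earlier than $4$ rounds before the first transmission of {\tt DecreaseSlowly} in this mode, because a $D$ mode was not active. In particular, the whole cohort of stations in this $L$ mode has size at most $k$, and $u$'s entry time $t^*$ comes no earlier than the time when the first station of the cohort began {\tt DecreaseSlowly}. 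Applying Theorem~\ref{t:wakeup} to this cohort, I conclude that a successful transmission (and hence a leader) emerges within $O(k)$ rounds after the first station began, which in turn is within $O(k)$ rounds after $t^*$. If $u$ happens to be that leader, we are done immediately.

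Otherwise, $u$ enters $D$ mode at the same round as the leader election, perfectly synchronized with all other stations that were active at that moment (at most $k$ of them). From that round on, $u$ runs {\tt SUniform} in the odd rounds (line~\ref{suniformCall}), while the leader occupies the even rounds with its control messages. By Theorem~\ref{sawtooth}, every station executing {\tt SUniform} transmits successfully within $O(k)$ of {\tt SUniform}'s local rounds whp, which corresponds to $O(k)$ rounds of the ambient clock since only a factor-$2$ dilation is introduced by restricting to odd rounds. Hence $u$ transmits successfully within $O(k)$ additional rounds.

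Combining the two phases via the union bound on the failure probabilities from Theorems~\ref{t:wakeup} and~\ref{sawtooth} yields that $u$ transmits successfully within $O(k)+O(k) = O(k)$ rounds after exiting the while loop, whp. The main delicacy I anticipate is the synchronization accounting: making sure that the $O(k)$ bound of Theorem~\ref{t:wakeup} applies from $u$'s exit (not merely from the first cohort member's exit) and that the stations $u$ is synchronized with in the $D$ mode are exactly the cohort of {\tt DecreaseSlowly}, so that Theorem~\ref{sawtooth} can be invoked as a black box on a known-size synchronized set; both follow from the control-message discipline embedded in Protocol~\ref{alg:nswleader}, which prevents new stations from intruding into an ongoing $L$ or $D$ mode.
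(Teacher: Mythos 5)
Your proposal is correct and follows essentially the same route as the paper's own proof: bound the leader-election phase by $O(k)$ via Theorem~\ref{t:wakeup}, then bound the dissemination phase by $O(k)$ via Theorem~\ref{sawtooth} applied to the synchronized cohort running {\tt SUniform} in odd rounds, and combine. Your extra bookkeeping (that $u$'s exit time is no earlier than the cohort's start, and the factor-$2$ dilation from using only odd rounds) is sound and in fact slightly more careful than the paper's terser argument.
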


\begin{proof}
A station $u$ that exits the while loop in line \ref{while1} is in $L$ mode.
In $L$ mode, the station executes Protocol {\tt DecreaseSlowly} which, by Theorem~\ref{t:wakeup},
allows \textit{one of the participating stations} to transmit successfully to the channel within $O(k)$ rounds whp.
The successful station becomes the leader and the status changes to $D$ mode. 
Obviously, if $u$ is such a station, then we have proved the lemma.
Therefore, suppose that $u$ is not the leader. 
Starting from the time at which the leader has been elected, all
stations that were active at that time are synchronized, i.e., they can start a clock at the time 
of the leader election (round $\tc = 0$). Let $v \not = u$ be the leader and $A$ be the set of all 
other active stations synchronized at round 0 (according to the \tc).
In such a situation, $u$ and all other stations in $A$ execute 
(in rounds in which \tc\ is odd) 
\gia{protocol SUniform} guaranteeing by Theorem \ref{sawtooth} 
contention resolution
in $O(k)$ time whp for the static model, \textit{i.e.}, when all participating stations start at the same time, which is
our situation. This concludes the proof.
\end{proof}

Finally, we are ready for the main theorem of this section.

\begin{theorem}\label{t:withLeader}
Algorithm {\tt AdaptiveNoK} solves the contention resolution problem 
with latency $O(k)$, 
whp.
The result holds even against an adaptive adversary.
\end{theorem}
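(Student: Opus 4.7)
The plan is to bound the latency of any individual station $u$ activated at time $t_u$ by partitioning its life into two phases: the time $u$ waits in the outer while loop (line \ref{while1}), and the time that elapses after $u$ exits that loop. By Lemma~\ref{l:while} the second phase lasts $O(k)$ rounds whp, so the whole argument reduces to bounding the waiting phase.

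First I would describe the global cycle structure of the algorithm. Call an $L$ phase followed by the $D$ phase it triggers a \emph{cycle}. At the start of every cycle, the set $C_i$ of stations currently waiting in line~\ref{while1} synchronously leaves the loop and enters the $L$ mode. By Theorem~\ref{t:wakeup}, protocol {\tt DecreaseSlowly} elects a leader within $O(|C_i|)$ rounds whp; the $D$ mode then runs {\tt SUniform} among the $|C_i|$ synchronized stations, which by Theorem~\ref{sawtooth} takes $O(|C_i|)$ rounds whp. During the $D$ mode the leader transmits a $<${\tt D~mode}$>$ control bit on every even round that is not a power of two, and in rounds of the form $2^x$ all surviving $D$-stations send $<${\tt is there anybody out there?}$>$; the leader switches off only when it receives an acknowledgement for the latter, which marks the clean termination of the cycle. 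Since the sets $C_i$ partition the $k$ participants, $\sum_i |C_i| \le k$ and so the total duration of all cycles is $O(k)$ rounds whp after a standard union bound.

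Next I would analyse the waiting phase of a fixed station $u$. Three cases arise according to the state of the channel during the four rounds $u$ listens. If no cycle is currently in progress, $u$ hears silence, sets $\status = L$, and immediately joins the next cycle. If $u$ wakes up during an $L$ phase of some cycle, then within four rounds either no message reaches $u$ (because {\tt DecreaseSlowly} is still colliding), or $u$ hears the successful election; in the former case $u$ exits the loop and joins the ongoing $L$ phase, in the latter case $u$ will see the subsequent $<${\tt D~mode}$>$ stream and wait out the $D$ phase. If $u$ wakes up during a $D$ phase, the leader's periodic $<${\tt D~mode}$>$ transmissions guarantee that $u$ sees one within at most four consecutive rounds, so $u$ keeps looping; the moment the $D$ phase ends, either $u$ hears the leader's final $<${\tt is there anybody out there?}$>$ acknowledgement or no message at all in its next listening window, and the condition on line~\ref{lmode} makes it exit with $\status = L$. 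In every case the waiting time of $u$ is at most the remaining duration of the cycle in which $u$ was activated, which is $O(|C_j|) = O(k)$ rounds whp.

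Combining the two phases, $u$'s total latency is at most $O(|C_j|) + O(|C_{j+1}|) = O(k)$ rounds whp, where $C_j$ is the cycle during which $u$ is activated and $C_{j+1}$ is the cycle in which $u$ actually transmits (which exists because the set of newly arrived stations automatically forms the next cycle's $C_{j+1}$, with $u \in C_{j+1}$). A union bound over the at most $k$ stations preserves the high probability guarantee, so the adaptive adversary cannot do worse than the individual analyses, which already hold against online choices of wake-up times. The main obstacle I foresee is the mode-detection argument in the second paragraph: one must verify that the choices of $2^x$ rounds for the $<${\tt is there anybody out there?}$>$ signal, combined with the four-round listening window, leave no ambiguous pattern in which a waking station either misidentifies the mode or discovers the cycle boundary too late. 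Once that bookkeeping is pinned down, the latency bound follows directly from Lemma~\ref{l:while} and the telescoping $\sum_i |C_i| \le k$.
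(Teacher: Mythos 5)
Your proposal is correct and follows essentially the same route as the paper: reduce via Lemma~\ref{l:while} to bounding the time a station spends in the while loop of line~\ref{while1}, then argue that this wait is dominated by the remaining duration of the currently running $D$ mode, which is $O(k)$ whp by Theorem~\ref{sawtooth}. The one detail you flag as an open obstacle --- that the leader's acknowledgement can only arrive in a round of the form $2^x$ --- is closed in the paper by observing that the first such \emph{white} round after all {\tt SUniform} stations have switched off occurs within at most twice the running time of {\tt SUniform}, and that consecutive \emph{black} rounds (where $<${\tt D mode}$>$ is sent) are at most $4$ apart, so only constant factors are lost. Your additional cycle decomposition and the telescoping bound $\sum_i |C_i|\le k$ are harmless but not needed for the latency claim; the paper invokes that partition structure only later, in the energy analysis of Theorem~\ref{thm:energy-adaptive}.
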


\begin{proof}
Our aim is to show that any fixed station $u$ executing {\tt AdaptiveNoK}
transmits successfully its message within $O(k)$ rounds since its wake up time, whp. 
In view of Lemma~\ref{l:while}, it is sufficient to
show that the station exits the while loop in line \ref{while1} 
within $O(k)$ rounds.
\gia{
Two cases can occur. 

If $D$ mode is not running, the station exits the loop within 4 rounds.
Indeed, in such a case, either the station hasn't received any message 
or it has got the message $<${\tt is there anybody out there?}$>$. 
This corresponds to the condition in line \ref{lmode} which causes the
station to leave the while loop.

If $D$ mode is running, the station stays in the while loop as long as it 
keeps receiving message $<${\tt D mode}$>$.
}
Let us call \textit{white rounds} the even rounds expressible as $2^x$ 
for integers $x > 1$, and \textit{black rounds} all the remaining even rounds.
The message $<${\tt D mode}$>$ is sent by the leader when it executes 
line \ref{keep-dmode}, which happens in every black round, unless it 
got an acknowledgement in the last execution of line \ref{switch-lmode}
and switched off.
Being two consecutive black rounds at most 4 rounds apart, it will suffice
to wait at most 4 rounds for a newly woken up station to establish whether 
the leader is still sending its message $<${\tt D mode}$>$.
Since the delay between the acknowledgement and the first \gia{subsequent} 
black round
is at most 2 rounds, it will be sufficient to show that the leader gets
the acknowledgement after $O(k)$ rounds.
Notice that such an acknowledgement is perceived if and only if the leader is the
only sender, that is, if and only if all stations that started 
the execution of {\tt SUniform} have switched off.
Therefore, the leader will get the acknowledgement in
the first white round following the switching off of all stations running 
{\tt SUniform}, which happens after at most twice the running time of
{\tt SUniform}.

The proof now follows by observing that,
\gia{by Theorem \ref{sawtooth}}, 
after $O(k)$ rounds whp
all the stations executing {\tt SUniform} have switched off.     

\end{proof}

\gdm{
We now conclude the section by deriving the energy cost of our adaptive algorithm. 

\begin{theorem}
\label{thm:energy-adaptive}
The \gia{expected} total number of broadcast attempts during the execution of protocol {\tt AdaptiveNoK} is $O(k\log^2 k)$.
\darek{The result holds even against an adaptive adversary.}
\end{theorem}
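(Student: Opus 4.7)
The plan is to decompose an execution into a sequence of phases, each consisting of one leader-election ($L$) mode followed by one dissemination ($D$) mode, and then to bound the expected transmissions within each phase and sum over phases. The key accounting observation is that every station participates in exactly one such phase between its activation and its switching off: upon waking up it waits in the first while loop until a fresh $L$ mode begins, it then takes part in one execution of {\tt DecreaseSlowly} followed by one execution of {\tt SUniform} (or, if it becomes the leader, the coordination block), and then terminates. Letting $k_1,k_2,\ldots,k_r$ denote the number of stations participating in the successive phases, we have $\sum_{i=1}^r k_i \le k$.

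For the $L$ mode of phase $i$, Theorem~\ref{t:wakeup} guarantees that {\tt DecreaseSlowly} terminates in $O(k_i)$ rounds whp, and during its $j$th local round each participating station transmits with probability $\Theta(1/j)$. Thus the expected number of transmissions of a single station during $L$ mode is $\sum_{j=1}^{O(k_i)} \Theta(1/j)=O(\log k_i)$, and summed over $k_i$ participants this yields $O(k_i\log k_i)$ expected transmissions per phase.

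For the $D$ mode of phase $i$, Theorem~\ref{sawtooth} provides $O(\log^2 k_i)$ transmissions per station from {\tt SUniform} in odd rounds, contributing $O(k_i\log^2 k_i)$ in aggregate. The even-round coordination is strictly lower order: the leader sends a $<${\tt D mode}$>$ bit in each non-power-of-two even round within the $O(k_i)$-round $D$ mode, for $O(k_i)$ transmissions, and in each of the $O(\log k_i)$ rounds of the form $2^x$ the (at most $k_i$) still-alive stations send $<${\tt is there anybody out there?}$>$, for at most $O(k_i\log k_i)$ additional transmissions. The dominant contribution of phase $i$ is therefore $O(k_i\log^2 k_i)$.

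Summing across all phases and using $\log k_i\le \log k$ together with $\sum_i k_i\le k$, the grand total is $O(k\log^2 k)$ in expectation, and the bound holds against an adaptive adversary because both Theorems~\ref{t:wakeup} and~\ref{sawtooth} hold against such an adversary. The main subtlety I anticipate is converting the whp latency guarantees of the two sub-protocols into genuine expectations on the transmission count, since in principle a low-probability overrun could inflate the number of broadcasts. I would handle this by noting that {\tt DecreaseSlowly} and {\tt SUniform} both terminate within a fixed polynomial in $k_i$ (the phase size), and that the per-round transmission probabilities are either harmonically decreasing (in $L$ mode) or already controlled by the $O(\log^2)$ per-station bound of Theorem~\ref{sawtooth}, so the polynomially small tail of slow terminations contributes only a lower-order additive term to the expectation.
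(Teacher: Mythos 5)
Your proposal follows essentially the same route as the paper's proof: the same decomposition into alternating $L_j,D_j$ phases whose participant sets partition the $k$ stations, the same per-phase bounds of $O(k_i\log k_i)$ for leader election and $O(k_i\log^2 k_i)$ for dissemination (including the lower-order even-round coordination traffic), and the same summation via $\sum_i k_i\le k$ and linearity of expectation. The one place to tighten is the whp-to-expectation conversion that you correctly flag as the main subtlety: your justification leans on the claim that {\tt DecreaseSlowly} and {\tt SUniform} ``terminate within a fixed polynomial in $k_i$,'' but neither sub-protocol has a deterministic running-time bound (both are unbounded loops that run until success), so a single whp bound at scale $k_i$ does not by itself control the expectation. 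The paper closes this by summing over latency scales: it shows $\Pr\bigl(Y(\kappa)\ge\Omega(\kappa^{i-1})\bigr)=O(\kappa^{-(i-1)})$ for every $i$ (by monotonicity, applying the whp guarantee to an instance of $\kappa^{i-1}$ stations, and choosing $\alpha\ge 2$ in Theorem~\ref{sawtooth} for the $D$ mode), and pairs this with the conditional bound that a run lasting $O(\kappa^i)$ rounds incurs only $O(i\,\kappa\log\kappa)$ (resp.\ $O(\kappa^i+\kappa\log^2(\kappa^i))$) transmissions, so the resulting series converges to $O(\kappa\log\kappa)$ and $O(\kappa\log^2\kappa)$ respectively. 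Your fallback remark about harmonically decreasing probabilities making the tail contribution lower order is exactly the right intuition; it just needs to be executed as this geometric series over scales rather than as a fixed polynomial cutoff.
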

}

\begin{proof}
\gia{
Starting in $L$ mode, the system alternates between $L$ mode and 
$D$ mode until it happens that no newcomer arrives during the last $D$ mode.
Therefore, the entire round sequence of the algorithm, from start
to finish, can be partitioned into $\tau > 0$ disjoint intervals
$L_1,D_1, L_2, D_2, \ldots, L_{\tau}, D_{\tau}$,
where $L_j$ (respectively $D_j$) is the time interval within which the system
is involved in the $j$th $L$ mode (respectively $D$ mode) and
$D_{\tau}$ is the time interval of the last $D$ mode, \textit{i.e.} 
the one such that no new station arrives during its execution.

For $1 \le j \le \tau$, let $S_j$ be the set of stations 
active during the interval $L_j, D_j$. This is the set of stations that
participate to the $j$th leader election and to the subsequent dissemination
mode. Recalling that each of these stations permanently switches off by the
end of this $D$ mode, the sequence $(S_1,S_2, \ldots, S_{\tau})$ form a 
partition of the total set of $k$ stations. 
To prove the theorem we will  
show that the expected total number of transmissions needed during
the time interval $L_j,D_j$ is $O(\kappa \log^2 \kappa)$, where
$\kappa = |S_j|$. By the linearity of expectation, the theorem follows.
Let us consider the execution during an arbitrary $L_j,D_j$ time interval. 
We count the transmissions needed in $L_j$ and $D_j$ separately.

\paragraph{Transmissions in $L_j$.}
For an arbitrary set of $\kappa$ stations executing this $L$ mode, 
let $X(\kappa)$ and $Y(\kappa)$ be the random variables denoting 
respectively the total number of transmissions spent by all stations
during the time interval $L_j$ and the size of $L_j$, 
\textit{i.e.}, the number of rounds
until the first successful transmission in protocol {\tt DecreaseSlowly}.
The expected total number of transmissions is
\begin{eqnarray}\label{expected}
\E\big(X(\kappa)\big) 
&=& 
\sum_{i=1}^{\infty} 
\Pr\Big( \Omega(\kappa^{i-1}) \le Y(\kappa) \le o(\kappa^i)\Big)\;
\E\Big(X(\kappa) \big|\; \Omega(\kappa^{i-1}) \le Y(\kappa) \le o(\kappa^{i})\Big)\nonumber \\
&\le&
\sum_{i=1}^{\infty} 
\Pr\Big(Y(\kappa) \ge \Omega(\kappa^{i-1})  \Big) \;
\E\Big(X(\kappa) \big|\; Y(\kappa) \le O(\kappa^{i})\Big)
\ .
\end{eqnarray}

We can now prove by induction that for any $i\ge 1$,
\begin{equation}\label{firstbound}
\Pr\Big(Y(\kappa) \ge \Omega(\kappa^{i-1})  \Big) 
\le
\Pr\Big(Y(\kappa^{i-1}) \ge \Omega(\kappa^{i-1})  \Big) 
= O\left(\frac{1}{\kappa^{i-1}}\right)
\ .
\end{equation}
For $i = 1$ the assertion is obvious. 
For $i \ge 2$, if protocol {\tt DecreaseSlowly} needs 
$\Omega(\kappa^{i-1})$ rounds on an input instance of 
$\kappa$ stations even more so on an input instance of 
$\kappa^{i-1} \ge \kappa$ stations. 
That is, 
$\Pr\Big(Y(\kappa) \ge \Omega(\kappa^{i-1}) \Big) 
\le \Pr\Big(Y(\kappa) \ge \Omega(1) \Big)$, for $i \ge 2$.
The equality of (\ref{firstbound}) derives from Theorem \ref{t:wakeup}, which on
an instance of $\kappa^{i-1}$ stations guarantees that the protocol terminates
within $O(\kappa^{i-1})$ rounds with high probability.

If we are given the information that the protocol terminates 
within $O(\kappa^i)$ rounds, then the expected 
total number of transmissions can be obtained as follows,
\begin{equation}\label{secondbound}
\E\Big(X(\kappa) \big|\; Y(\kappa) \le O(\kappa^{i})\Big) = 
\kappa \cdot \sum_{i=0}^{O(\kappa^i)} q/(2q+i) = O(\kappa \log (\kappa^i))
\ ,
\end{equation}
where the equality follows by (\ref{ssum}) and (\ref{harmonic}). 
Hence, plugging (\ref{firstbound}) and (\ref{secondbound}) into (\ref{expected}),
we get
\[
\E\big(X(\kappa)\big) 
\le
\sum_{i=1}^{\infty} 
O\left(\frac{1}{\kappa^{i-1}}\right)
O(i\cdot\kappa\log \kappa) = O(\kappa\log \kappa)
\ .
\]

\paragraph{Transmissions in $D_j$.}
Here we need to count the transmissions required by protocol {\tt SUniform}
plus those spent by the leader and the other 
synchronized stations during the dissemination mode.
If we let $X(\kappa)$ and $Y(\kappa)$ relate to $D_j$ 
(rather than $L_j$), we can observe that the expected
number of transmissions per station is again described by (\ref{expected}).
The running time of this $D$ mode is dominated by the number of rounds
of {\tt SUniform}. Hence, 
the upper bound on (\ref{firstbound}) is also preserved, as
Theorem \ref{sawtooth} guarantees that 
{\tt SUniform} on an instance of $\kappa^{i-1}$ stations 
terminates within $O(\kappa^{i-1})$ rounds after the synchronization 
round, whp.

Theorem \ref{sawtooth} also states that if the protocol terminates
within $O(\kappa^i)$ rounds, then the number of transmissions 
spent by any station is $O(\log^2 (\kappa^i))$.
If the protocol terminates within $O(\kappa^i)$ rounds,
then the leader transmits $O(k^i)$ times and the other 
synchronized stations involved in the dissemination
mode, transmit only in rounds expressible as powers of two,
which are $O(\log (\kappa^i))$.
Hence,  (\ref{secondbound}) in this case writes as
\[
\E\Big(X(\kappa) \big|\; Y(\kappa) \le O(\kappa^{i})\Big) = 
 O(\kappa^i + \kappa \log^2 (\kappa^i) + \kappa\log (\kappa^i) )
 \ .
\]
Now, if we choose $\alpha \ge 2$ in Theorem \ref{sawtooth},
we can make the probability of failure for {\tt SUniform},
on an instance of $\kappa^{i-1}$ stations, at most $1/k^{2(i-1)}$.
Consequently,
\[
\E\big(X(\kappa)\big) 
\le
\sum_{i=1}^{\infty} 
\frac{1}{\kappa^{2(i-1)}}
O\left(\kappa^i + \kappa \log^2 (\kappa^i) + \kappa\log (\kappa^i) \right)
= \kappa \log^2 \kappa
\ .
\]
\medskip
This proves that the expected total number of transmissions spent 
during the interval of rounds $L_j,D_j$ is $O(|S_j| \log^2 |S_j|)$, 
for $j=1,2,\ldots, \tau$. Recalling  
that $(S_1,S_2, \ldots, S_{\tau})$ is a partition of the set of $k$ stations, by the linearity of expectation, the theorem 
follows.
}
\end{proof}

\section{Discussion and open problems}

\dk{Although our algorithms reach optimal or almost optimal latency and transmission energy cost, one could ask about their efficiency in terms of the number of listening slots. Non-adaptive algorithms have the clear advantage that they do not need to listen to the channel. As for our adaptive algorithm, its worst-case performance in terms of listening slots is incurred by newly awaken stations in the very beginning of the algorithm, in particular, a single such station could spend even $\Theta(k)$ slots listening and waiting to change its mode to $L$. Moreover, the number of such awaiting stations could be $\Theta(k)$, therefore even an amortized number of listening slots per process could be as high as $\Theta(k)$.
Reducing this cost without harming latency or transmission energy is a challenging open problem, as it also limits information flow. We conjecture, however, that such a reduction to polylogarihmic formula per station should be doable.}

\dk{Another} interesting open direction is to study trade-offs between energy consumption and other measures.
In particular, is logarithmic energy necessary for anonymous shared channel against an adaptive adversary
in order to achieve 
\darek{the minimum possible (asymptotically) latency $O(k)$?} 
If so, could we lower the energy requirement by allowing slightly
\darek{larger latency, and if so, how much larger?}

\dk{There are also other model features that may influence performance of contention resolution. For instance, the availability of a global clock may improve synchronization, but could it be asymptotically better than the more natural setting without global clock?
\gs{If, \textit{e.g.},\ the stations have access to a global clock and all stations get acknowledgments of all transmissions, they can easily solve the contention resolution problem
with latency $O(k)$. They can do it by using the wakeup UFR algorithm from paper \cite{JS05}.
Wakeup is performed in odd rounds and in even rounds all stations transmit with the probability
from the last successful wakeup round. Every station switches off after transmitting its message successfully. This approach should assure maintaining optimal transmission probabilities of stations for a constant fraction of active time.
Will global clock be also useful, if the transmitting station only gets acknowledgements?
}
On the other hand, failures or external interference may substantially worsen the performance, depending on their severity and intensity -- as it does in other related settings (c.f., job scheduling on wireless channel under jamming and failures~\cite{AntaGKZ17,KlonowskiKMW19}).}

\bibliographystyle{plain}
\bibliography{bibliography}

\begin{thebibliography}{10}

\bibitem{Abr70}
Norman Abramson.
\newblock The aloha system: Another alternative for computer communications.
\newblock In {\em Proceedings of the November 17-19, 1970, Fall Joint Computer
  Conference}, AFIPS '70 (Fall), page 281–285, New York, NY, USA, 1970.
  Association for Computing Machinery.

\bibitem{Alistarh}
Dan Alistarh, Seth Gilbert, Rachid Guerraoui, Zarko Milosevic, and Calvin
  Newport.
\newblock Securing every bit: Authenticated broadcast in radio networks.
\newblock In {\em Proceedings of the Twenty-Second Annual ACM Symposium on
  Parallelism in Algorithms and Architectures}, SPAA '10, page 50–59, New
  York, NY, USA, 2010. Association for Computing Machinery.

\bibitem{Anantha}
Lakshmi Anantharamu, Bogdan~S. Chlebus, Dariusz~R. Kowalski, and Mariusz~A.
  Rokicki.
\newblock Medium access control for adversarial channels with jamming.
\newblock In Adrian Kosowski and Masafumi Yamashita, editors, {\em Structural
  Information and Communication Complexity}, pages 89--100, Berlin, Heidelberg,
  2011. Springer Berlin Heidelberg.

\bibitem{AMM13}
A.~Fern\'andez Anta, M.~A. Mosteiro, and J.~Ramon~Mu\ noz.
\newblock Unbounded contention resolution in multiple-access channels.
\newblock {\em Algorithmica}, 67:295--314, 2013.

\bibitem{AntaGKZ17}
Antonio~Fern{\'{a}}ndez Anta, Chryssis Georgiou, Dariusz~R. Kowalski, and Elli
  Zavou.
\newblock Adaptive packet scheduling over a wireless channel under constrained
  jamming.
\newblock {\em Theor. Comput. Sci.}, 692:72--89, 2017.

\bibitem{Aw}
Baruch Awerbuch, Andrea Richa, and Christian Scheideler.
\newblock A jamming-resistant mac protocol for single-hop wireless networks.
\newblock In {\em Proceedings of the Twenty-Seventh ACM Symposium on Principles
  of Distributed Computing}, PODC '08, page 45–54, New York, NY, USA, 2008.
  Association for Computing Machinery.

\bibitem{Bend-05}
M.~A. Bender, M.~Farach-Colton, S.~He, B.~C. Kuszmaul, and C.~E. Leiserson.
\newblock Adversarial contention resolution for simple channels.
\newblock In {\em Proceedings, 17th Annual ACM Symposium on Parallel Algorithms
  (SPAA)}, pages 325--332, New York, NY, USA, 2005. ACM.

\bibitem{Bend-16}
M.~A. Bender, T.~Kopelowitz, S.~Pettie, and M.~Young.
\newblock Contention resolution with log-logstar channel accesses.
\newblock In {\em Proceedings of the forty-eighth annual ACM symposium on
  Theory of Computing (STOC)}, pages 499--508, Cambridge, MA, USA, 2016. ACM.

\bibitem{BenderJamm}
Michael~A. Bender, Jeremy~T. Fineman, Seth Gilbert, and Maxwell Young.
\newblock Scaling exponential backoff: Constant throughput, polylogarithmic
  channel-access attempts, and robustness.
\newblock {\em J. ACM}, 66(1), dec 2018.

\bibitem{Bend-20}
Michael~A. Bender, Tsvi Kopelowitz, William Kuszmaul, and Seth Pettie.
\newblock Contention resolution without collision detection.
\newblock In {\em Proceedings of the 52nd Annual ACM SIGACT Symposium on Theory
  of Computing}, STOC 2020, page 105–118, New York, NY, USA, 2020.
  Association for Computing Machinery.

\bibitem{BertGal}
Dimitri Bertsekas and Robert Gallager.
\newblock {\em Data Networks (2nd Ed.)}.
\newblock Prentice-Hall, Inc., USA, 1992.

\bibitem{Cap}
J.~Capetanakis.
\newblock Tree algorithms for packet broadcast channels.
\newblock {\em IEEE Transactions on Information Theory}, 25:505--515, 1979.

\bibitem{BendGilbJamm}
Yi-Jun Chang, Wenyu Jin, and Seth Pettie.
\newblock {Simple Contention Resolution via Multiplicative Weight Updates}.
\newblock In Jeremy~T. Fineman and Michael Mitzenmacher, editors, {\em 2nd
  Symposium on Simplicity in Algorithms (SOSA 2019)}, volume~69 of {\em
  OpenAccess Series in Informatics (OASIcs)}, pages 16:1--16:16, Dagstuhl,
  Germany, 2018. Schloss Dagstuhl--Leibniz-Zentrum fuer Informatik.

\bibitem{Chl}
B.~S. Chlebus.
\newblock Randomized communication in radio networks.
\newblock In P.~M. Pardalos, S.~Rajasekaran, J.~H. Reif, and J.~D.~P. Rolim,
  editors, {\em Handbook on Randomized Computing}, pages 401--456. Springer,
  New York, NY, USA, 2001.

\bibitem{CKR-TALG-12}
B.~S. Chlebus, D.~R. Kowalski, and M.~A. Rokicki.
\newblock Adversarial queuing on the multiple access channel.
\newblock {\em ACM Transactions on Algorithms}, 8:5:1--5:31, 2012.

\bibitem{CGR}
M.~Chrobak, L.~Gasieniec, and W.~Rytter.
\newblock Fast broadcasting and gossiping in radio networks.
\newblock {\em Journal of Algorithms}, 43:177--189, 2002.

\bibitem{DK}
G.~{De Marco} and D.~Kowalski.
\newblock Fast nonadaptive deterministic algorithm for conflict resolution in a
  dynamic multiple-access channel.
\newblock {\em SIAM J. Comput}, 44(3):868--888, 2015.

\bibitem{Gal}
Robert~G. Gallager.
\newblock A perspective on multiaccess channels.
\newblock {\em IEEE Trans. Information Theory}, 31(2):124--142, 1985.

\bibitem{sawtooth2}
Mih\'{a}ly Ger\'{e}b-Graus and Thanasis Tsantilas.
\newblock Efficient optical communication in parallel computers.
\newblock In {\em Proceedings of the Fourth Annual ACM Symposium on Parallel
  Algorithms and Architectures}, SPAA '92, page 41–48, New York, NY, USA,
  1992. Association for Computing Machinery.

\bibitem{GilbGuerNew}
Seth Gilbert, Rachid Guerraoui, and Calvin Newport.
\newblock Of malicious motes and suspicious sensors: On the efficiency of
  malicious interference in wireless networks.
\newblock In Mariam Momenzadeh Alexander~A. Shvartsman, editor, {\em Principles
  of Distributed Systems}, pages 215--229, Berlin, Heidelberg, 2006. Springer
  Berlin Heidelberg.

\bibitem{Gilbert}
Seth Gilbert, Valerie King, Seth Pettie, Ely Porat, Jared Saia, and Maxwell
  Young.
\newblock (near) optimal resource-competitive broadcast with jamming.
\newblock In {\em Proceedings of the 26th ACM Symposium on Parallelism in
  Algorithms and Architectures}, SPAA '14, page 257–266, New York, NY, USA,
  2014. Association for Computing Machinery.

\bibitem{Goldberg}
L.~A. Goldberg, P.~D. MacKenzie, M.~Paterson, and A.~Srinivasan.
\newblock Contention resolution with constant expected delay.
\newblock {\em Journal of the ACM}, 47(6):1048--1096, 2000.

\bibitem{GFL}
A.~G. Greenberg, P.~Flajolet, and R.~E. Ladner.
\newblock Estimating the multiplicities of conflicts to speed their resolution
  in multiple access channels.
\newblock {\em Journal of the ACM}, 34(2):289--325, 1987.

\bibitem{GL}
A.~G. Greenberg and R.~E. Ladner.
\newblock Estimating the multiplicities of conflicts in multiple access.
\newblock In IEEE, editor, {\em Proc. of the 24th Annual Symp. on Foundations
  of Computer Science (FOCS) (Tucson, AZ.).}, pages 383--392, Tucson, AZ, USA,
  1983. IEEE.

\bibitem{GW}
A.~G. Greenberg and A~S.~Winograd.
\newblock lower bound on the time needed in the worst case to resolve conflicts
  deterministically in multiple access channels.
\newblock {\em Journal of ACM}, 32:589--596, 1985.

\bibitem{sawtooth1}
Ronald~I. Greenberg and Charles~E. Leiserson.
\newblock Randomized routing on fat-trees.
\newblock In {\em 26th Annual Symposium on Foundations of Computer Science
  (sfcs 1985)}, pages 241--249, Portland, OR, USA, 1985. IEEE.

\bibitem{Hay}
J.~F. Hayes.
\newblock An adaptive technique for local distribution.
\newblock {\em IEEE Transactions on Communications}, 26:1178--1186, 1978.

\bibitem{ICPADS20}
Elijah Hradovich, Marek Klonowski, and Dariusz.~R. Kowalski.
\newblock Contention resolution on a restrained channel.
\newblock In {\em 2020 IEEE 26th International Conference on Parallel and
  Distributed Systems (ICPADS)}, pages 89--98, Hong Kong, 2020. IEEE.

\bibitem{JS05}
T.~Jurdzinski and G.~Stachowiak.
\newblock Probabilistic algorithms for the wakeup problem in single-hop radio
  networks.
\newblock {\em Theory Comput. Syst}, 38(3):347--367, 2005.

\bibitem{KlonowskiKMW19}
Marek Klonowski, Dariusz~R. Kowalski, Jaroslaw Mirek, and Prudence W.~H. Wong.
\newblock Fault-tolerant parallel scheduling of arbitrary length jobs on a
  shared channel.
\newblock In Leszek~Antoni Gasieniec, Jesper Jansson, and Christos Levcopoulos,
  editors, {\em Fundamentals of Computation Theory - 22nd International
  Symposium, {FCT} 2019, Copenhagen, Denmark, August 12-14, 2019, Proceedings},
  volume 11651 of {\em Lecture Notes in Computer Science}, pages 306--321,
  Copenhagen, Denmark, 2019. Springer.

\bibitem{KG}
J.~Koml\'os and A.~G. Greenberg.
\newblock An asymptotically optimal nonadaptive algorithm for conflict
  resolution in multiple-access channels.
\newblock {\em IEEE Trans. on Information Theory}, 31:302--306, 1985.

\bibitem{Kumar}
P.~Kumar and L.~Merakos.
\newblock Distributed control of broadcast channels with akcnowledgement
  feedback: Stability and performance.
\newblock In {\em Proceedings, 23rd IEEE Conference on Decision and Control
  (CDC)}, pages 1143 -- 1148, Las Vegas, NV, USA, 1984. IEEE.

\bibitem{DeMarcoS17}
Gianluca~De Marco and Grzegorz Stachowiak.
\newblock Asynchronous shared channel.
\newblock In Elad~Michael Schiller and Alexander~A. Schwarzmann, editors, {\em
  Proceedings of the {ACM} Symposium on Principles of Distributed Computing,
  {PODC} 2017, Washington, DC, USA, July 25-27, 2017}, pages 391--400,
  Washington, DC, USA, 2017. {ACM}.

\bibitem{Massey81}
James~L. Massey.
\newblock {\em Collision-Resolution Algorithms and Random-Access
  Communications}, pages 73--137.
\newblock Springer Vienna, Vienna, 1981.

\bibitem{MB}
Robert~M. Metcalfe and David~R. Boggs.
\newblock Ethernet: Distributed packet switching for local computer networks.
\newblock {\em Commun. ACM}, 19(7):395--404, jul 1976.

\bibitem{MitUpf}
Michael Mitzenmacher and Eli Upfal.
\newblock {\em Probability and Computing: Randomized Algorithms and
  Probabilistic Analysis.}
\newblock Cambridge University Press, New York, NY, USA, 2005.

\bibitem{RagUp}
P.~Raghavan and E.~Upfal.
\newblock Stochastic contention resolution with short delays.
\newblock {\em Siam Journal on Computing}, 28(2):709--719, 1999.

\bibitem{Richa1}
Andrea Richa, Christian Scheideler, Stefan Schmid, and Jin Zhang.
\newblock A jamming-resistant mac protocol for multi-hop wireless networks.
\newblock In Nancy~A. Lynch and Alexander~A. Shvartsman, editors, {\em
  Distributed Computing}, pages 179--193, Berlin, Heidelberg, 2010. Springer
  Berlin Heidelberg.

\bibitem{Richa2}
Andrea Richa, Christian Scheideler, Stefan Schmid, and Jin Zhang.
\newblock Competitive and fair medium access despite reactive jamming.
\newblock In {\em 2011 31st International Conference on Distributed Computing
  Systems}, pages 507--516, Minneapolis, MN, USA, 2011. IEEE.

\bibitem{Richa3}
Andrea Richa, Christian Scheideler, Stefan Schmid, and Jin Zhang.
\newblock Competitive and fair throughput for co-existing networks under
  adversarial interference.
\newblock In {\em Proceedings of the 2012 ACM Symposium on Principles of
  Distributed Computing}, PODC '12, page 291–300, New York, NY, USA, 2012.
  Association for Computing Machinery.

\bibitem{Richa}
Andr{\'e}a~W. Richa and Christian Scheideler.
\newblock {\em Jamming-Resistant MAC Protocols for Wireless Networks}, pages
  999--1002.
\newblock Springer New York, New York, NY, 2016.

\bibitem{Rob72}
Lawrence~G. Roberts.
\newblock Aloha packet system with and without slots and capture.
\newblock {\em SIGCOMM Comput. Commun. Rev.}, 5(2):28–42, apr 1975.

\bibitem{TM}
B.~S. Tsybakov and V.~A. Mikhailov.
\newblock Free synchronous packet access in a broadcast channel with feedback.
\newblock {\em Prob. Inf. Transmission}, 14:259--280, 1977.

\end{thebibliography}


\end{document}